\definecolor{darkgreen}{rgb}{0,0.5,0}
\definecolor{darkblue}{rgb}{0,0,0.8}
\newcommand{\para}[1]{\paragraph{#1}}
\newtheorem{theorem}{Theorem}[section]
\newtheorem{lemma}[theorem]{Lemma}
\newtheorem{claim}[theorem]{Claim}
\newtheorem{corollary}[theorem]{Corollary}
\newtheorem{definition}{Definition}[section]
\newcommand{\calL}{\ensuremath{\mathcal{L}}}
\newcommand{\calA}{\ensuremath{\mathcal{A}}}
\newcommand{\calC}{\ensuremath{\mathcal{C}}}
\newcommand{\stackedbinom}[3]{\left(\substack{\binom{#1}{#2}\\#3}\right)}
\newcommand{\Nout}{N^{\mathit{out}}}
\newcommand{\ignore}[1]{}
\algnewcommand\algorithmicswitch{\textbf{switch}}
\algnewcommand\algorithmiccase{\textbf{case}}
\newcommand{\CONGEST}{\ensuremath{\mathsf{CONGEST}}\xspace}
\newcommand{\LOCAL}{\ensuremath{\mathsf{LOCAL}}\xspace}
\newcommand{\eps}{\varepsilon}
\renewcommand{\epsilon}{\varepsilon}
\newcommand{\poly}{\operatorname{\text{{\rm poly}}}}
\newcommand{\set}[1]{\left\{#1\right\}}
\DeclareMathOperator{\polylog}{\poly\log}
\newcommand{\hide}[1]{}
\renewcommand{\phi}{\varphi}
\begin{document}

\title{List Defective Colorings: Distributed Algorithms and Applications\footnote{This work was partially supported by the German Research Foundation (DFG) under the project number 491819048.}} 
\date{}
\author{
   Marc Fuchs \orcidlink{0000-0003-2272-4483} \\
   \small{University of Freiburg} \\
   \small{marc.fuchs@cs.uni-freiburg.de}
   \and
   Fabian Kuhn \orcidlink{0000-0002-1025-5037}\\
   \small{University of Freiburg} \\
   \small{kuhn@cs.uni-freiburg.de}
   }

\maketitle
\begin{abstract}
The distributed coloring problem is at the core of the area of distributed graph algorithms and it is a problem that has
seen  tremendous progress over the last few years. Much of the remarkable recent progress on deterministic distributed
coloring algorithms is based on two main tools: a) defective colorings in which every node of a given color can have a limited number of neighbors of the same color and b) list coloring, a natural generalization of the standard coloring problem that naturally appears when colorings are computed in different stages and one has to extend a previously computed partial coloring to a full coloring.

In this paper, we introduce \emph{list defective colorings}, which can be seen as a generalization of these two coloring variants. Essentially, in a list defective coloring instance, each node $v$ is given a list of colors $x_{v,1},\dots,x_{v,p}$ together with a list of defects $d_{v,1},\dots,d_{v,p}$ such that if $v$ is colored with color $x_{v, i}$, it is allowed to have at most $d_{v, i}$ neighbors with color $x_{v, i}$.

We highlight the important role of list defective colorings by showing that faster list defective coloring algorithms would directly lead to faster deterministic $(\Delta+1)$-coloring algorithms in the \LOCAL model. Furthermore, we extend a recent distributed list coloring algorithm by Maus and Tonoyan [DISC '20]. Slightly simplified, we show that if for each node $v$ it holds that $\sum_{i=1}^p \big(d_{v,i}+1)^2 >\deg_G^2(v)\cdot \polylog\Delta$, then this list defective coloring instance can be solved in a communication-efficient way in only $O(\log\Delta)$ communication rounds. This leads to the first deterministic $(\Delta+1)$-coloring algorithm in the standard \CONGEST model with a time complexity of
$O(\sqrt{\Delta}\cdot \polylog \Delta+\log^* n)$, matching the best time complexity in the \LOCAL model up to a $\polylog\Delta$ factor.
\end{abstract}
\clearpage
\section{Introduction and Related Work}
\label{sec:intro}

Distributed graph coloring is one of the core problems in the area of distributed graph algorithms. For this problem, one typically assumes that the graph $G=(V,E)$ to be colored represents a communication network of $n$ nodes with maximum (node)degree $\Delta$ and that the nodes (or edges) of $G$ must be colored in a distributed way by exchanging messages over the edges of $G$. The nodes typically interact with each other in synchronous rounds. If the size of messages is not restricted, this is known as the \LOCAL model and if in every round, every node can send an $O(\log n)$-bit message to every neighbor, it is known as the \CONGEST model~\cite{Peleg2000}. The problem was first studied by Linial in a paper that pioneered the whole area of distributed graph algorithms~\cite{Linial1987}. Linial in particular showed that coloring a ring network with $O(1)$ colors (and thus coloring a graph with $f(\Delta)$ colors) requires $\Omega(\log^* n)$ rounds and that in $O(\log^* n)$ rounds, one can color a graph with $O(\Delta^2)$ colors. Since then, there has been a plethora of work on distributed coloring algorithms, e.g., \cite{goldberg88,awerbuch89,Naor1991,SzegedyV93,panconesi96decomposition,KuhnW06,barenboim14distributed,BarenboimE10,Barenboim2016,barenboim16sublinear,fraigniaud16local,HarrisSS18,ChangLP18,Kuhn20,MausT20,Maus21,GhaffariKuhn21,HalldorssonKMT21,HalldorssonKNT22}. The most standard variant of the distributed coloring problem asks for a proper coloring of the nodes $V$ of $G$ with $\Delta+1$ colors. Note that this is what can be achieved with a simple sequential greedy algorithm.

Over the last approximately 15 years, we have seen remarkable progress on randomized and on deterministic distributed coloring algorithms. Much of the progress on deterministic algorithms (which are the focus of the present paper) has been achieved by studying and using two generalizations of the standard coloring problem, \emph{defective colorings} and \emph{list colorings}. In the following, we briefly discuss the history and significance of defective colorings and of list colorings in the context of deterministic distributed coloring algorithms. For lack of space, we do not go into details of the very early work on deterministic distributed coloring~\cite{Linial1987,goldberg88,SzegedyV93,KuhnW06}, the deterministic algorithms that directly result from computing network decomposition~\cite{awerbuch89,panconesi96decomposition,Rozhon2020,GGR2020,GGHIR23}, or the vast literature on randomized distributed coloring algorithms, e.g., \cite{Luby1986,SchneiderW10symmetry,Barenboim2016,HarrisSS18,ChangLP18,HalldorssonKMT21,HalldorssonKNT22}.

\para{Defective Coloring.} Given integers $d\geq 0$ and $c>0$, a $d$-defective $c$-coloring of a graph $G=(V,E)$ is an assignment of colors $\set{1,\dots,c}$ to the nodes in $V$ such that the subgraph induced by each color class has a maximum degree of at most $d$~\cite{lovasz66}. Defective colorings were introduced to distributed algorithms independently by Barenboim and Elkin~\cite{BarenboimE09} and by Kuhn~\cite{Kuhn2009} in 2009. Both papers give distributed algorithms to compute $d$-defective colorings with $O(\Delta^2/d^2)$ colors. The algorithm of \cite{Kuhn2009} extends the classic $O(\Delta^2)$-coloring algorithm by Linial to achieving this in $O(\log^* n)$ time. Both papers use defective colorings to compute proper colorings in a divide-and-conquer fashion, leading to algorithms for computing a $(\Delta+1)$-coloring in $O(\Delta+\log^* n)$ rounds.\footnote{Earlier algorithms were based on simple round-by-round color reduction schemes and required $O(\Delta^2 + \log^* n)$~\cite{Linial1987,goldberg88} and $O(\Delta\log\Delta + \log^* n)$ rounds~\cite{SzegedyV93,KuhnW06}, respectively.} This idea was pushed further by Barenboim and Elkin in \cite{BarenboimE10} and \cite{BarenboimE11}. In \cite{BarenboimE10}, they introduce the notion of \emph{arbdefective colorings}: Instead of decomposing a graph into color classes of bounded degree, the aim is to decompose a graph into color classes of bounded arboricity. More specifically, the output of an arbdefective $c$-coloring algorithm with arbdefect $d$ is a coloring of nodes with colors $\set{1,\dots,c}$ together with an orientation of the edges such that every node has at most $d$ outneighbors of the same color. For this more relaxed version of defective coloring, they show that for a given oriented graph with maximum outdegree $\beta$, one can efficiently compute a $d$-arbdefective coloring with $O(\beta/d)$-colors (in time $O(\beta^2/d^2\cdot\log n)$). Applying this recursively, for example, allows us to obtain a $\Delta^{1+o(1)}$-coloring in time $\poly\log\Delta \cdot\log n$. In \cite{BarenboimE11}, it is shown that for a family of graphs that includes line graphs, one can efficiently compute a standard $d$-defective coloring with only $O(\Delta/d)$ colors in time $O(\Delta^2/d^2 + \log^* n)$. This in particular implies that a $\Delta^{1+o(1)}$-edge coloring can be computed in time $\poly\log\Delta +O(\log^* n)$. All the algorithms of \cite{BarenboimE09,Kuhn2009,BarenboimE10,BarenboimE11} use defective coloring in the following basic way. If a computed defective coloring has $p$ colors, the space of available colors is divided into $p$ parts that can then be assigned to the $p$ color classes and handled in parallel on the respective lower degree/arboricity graphs. When doing this, one inherently has to use more than $\Delta+1$ colors because in each defective coloring step, the maximum degree (or outdegree) goes down at a factor that is somewhat smaller than the number of colors of the defective coloring. In \cite{BarenboimE09,Kuhn2009}, this is compensated by reducing the number of colors at the end of each recursion level. However, this leads to  algorithms with time complexity at least linear in $\Delta$.

\para{List Coloring.} The key to obtaining $(\Delta+1)$-coloring algorithms with a better time complexity is to explicitly consider the more general $(\mathit{degree}+1)$-list coloring problem. In this problem, every node $v$ receives a list $L_v$ of at least $\deg(v)+1$ colors as input and an algorithm has to properly color the graph in such a way that each node $v$ is colored with a color from its list $L_v$. Note that this problem can still be solved by a simple sequential greedy algorithm. Note also that the problem appears naturally when solving the standard $(\Delta+1)$-coloring problem in different phases. If a subset $S\subseteq V$ is already colored, then each node $v\in V\setminus S$ needs to be colored with a color that is not already taken by some neighbor in $S$. If $v$ has degree $\Delta$ and all already colored neighbors of $v$ have chosen different colors, the list of the remaining available colors for $v$ is exactly of length $\deg(v)+1$. The first paper that explicitly considered list coloring in the context of deterministic distributed coloring is by Barenboim~\cite{barenboim16sublinear}. In combination with the improved arbdefective coloring algorithm of \cite{BarenboimEG18}, the algorithm of the paper obtains a $(1+\eps)\Delta$-coloring in $O(\sqrt{\Delta}+\log^* n)$ rounds by first computing a $O(\sqrt{\Delta})$-arbdefective $O(\sqrt{\Delta})$-coloring and by subsequently iterating over the $O(\sqrt{\Delta})$ color classes of this arbdefective coloring and solving the corresponding list coloring problem in $O(1)$ time. With the same technique, the paper also gets an $O(\Delta^{3/4}+\log^* n)$-round algorithm for $(\Delta+1)$-coloring. This algorithm also works in the \CONGEST model, i.e., by exchanging messages of at most $O(\log n)$ bits. For algorithms with a round complexity of the form $f(\Delta)+O(\log^* n)$, this still is the fastest known $(\Delta+1)$-coloring algorithm in the \CONGEST model. The algorithm was improved by Fraigniaud, Kosowski, and Heinrich~\cite{fraigniaud16local}. In combination with the subsequent results of \cite{BarenboimEG18,MausT20}, the algorithm of \cite{fraigniaud16local} leads to an $O(\sqrt{\Delta\log\Delta}+\log^* n)$-round distributed algorithm for $(\mathit{degree}+1)$-list coloring and thus also for $(\Delta+1)$-coloring. As one of the main results of this paper, we give a \CONGEST algorithm that almost matches this and that has a time complexity of $O(\sqrt{\Delta}\poly\log\Delta + \log^*n)$. List colorings and defective colorings have also been explicitly used in all later deterministic distributed coloring algorithms~\cite{Kuhn20,BalliuKO20,GhaffariKuhn21,BalliuBKO22}. We next discuss an idea that was introduced in \cite{Kuhn20} and that is particularly important in the context of the present paper.

\para{Distributed Color Space Reduction.}
The objective of \cite{Kuhn20} was to extend the coloring algorithms of \cite{BarenboimE10,BarenboimE11} to list colorings. The algorithms of \cite{BarenboimE10,BarenboimE11} are based on computing arbdefective or defective colorings to recursively divide the graph into low (out)degree parts that use disjoint sets of colors. This leads to fast coloring algorithms, however, the number of required colors grows exponentially with the number of recursion levels. While it is not clear how to efficiently turn a standard distributed coloring algorithm that uses significantly more than $\Delta+1$ colors into a $(\Delta+1)$-coloring algorithm, by using the techniques introduced in \cite{barenboim16sublinear,fraigniaud16local}, we can do this if we have a list coloring algorithm. Essentially, if we have a list coloring algorithm that uses lists of size $O(\alpha(\Delta+1))$, it can be turned into a $(\mathit{degree}+1)$-list coloring algorithm in only $\tilde{O}(\alpha^2)$ rounds (and in some cases even in $O(\alpha)$ rounds). However, if the nodes have different lists, a defective coloring does not easily split the graph into independent coloring problems. As a generalization of defective colorings, \cite{Kuhn20} introduces a tool called \emph{color space reduction}. Assuming that all lists consist of colors of some color space $\calC$. For a given partition of $\calC$ into disjoint parts $\calC_1,\dots,\calC_p$, a color space reduction algorithm partitions the nodes $V$ into $p$ parts $V_1,\dots,V_p$ such that for every node $v$, with $v\in V_i$ and $v$ has $\deg_i(v)$ neighbors in $V_i$, the algorithm tries to keep the ratio $|L_v\cap \calC_i|/\deg_i(v)$ as close as possible to the initial list-degree ratio $|L_v|/\deg(v)$. In \cite{Kuhn20}, it is shown that the arbdefective and defective coloring algorithms of \cite{BarenboimE10,BarenboimE11} can be generalized to compute a color space reduction. If the size of the color space is polynomial in $\Delta$, this leads to $(\mathit{degree}+1)$-coloring algorithms with time complexities of $2^{O(\sqrt{\log \Delta})}\log n$ in general graphs and of $2^{O(\sqrt{\log n})}+O(\log^* n)$ in graphs of bounded neighborhood independence, a family of graphs that includes line graphs of bounded rank hypergraphs. The complexity of the $(\mathit{degree}+1)$-edge coloring problem was later improved to $(\log\Delta)^{O(\log\log\Delta)}+O(\log^* n)$ in \cite{BalliuKO20} and to $\poly\log\Delta + O(\log^* n)$ in \cite{BalliuBKO22}. In both cases, this was achieved by designing better distributed color space reduction algorithms for line graphs. The $(\Delta+1)$-coloring algorithm of \cite{Kuhn20} for general graphs was later subsumed by a deterministic $O(\log^2\Delta\cdot\log n)$-round algorithm for the $(\Delta+1)$-coloring problem in \cite{GhaffariKuhn21}. 

\para{List Defective Colorings.}
Color space reductions can be seen as a special case of the following list variant of defective colorings. Each node $v$ has a list of possible colors that it can choose (e.g., which color subspace $\calC_i$ to use). Depending on what color $v$ chooses, it can tolerate different defects (e.g., depending on the size $|L_{v,i}\cap \calC_i|$ of the remaining color list when choosing color subspace $\calC_i$). In the following, we formally define \emph{list defective colorings}. One of the objectives of this paper is to understand the relation of list defective colorings to each other and to other coloring problems, and we will see in particular that better list defective coloring algorithms can directly lead to better algorithms for standard coloring problems.

In a list defective coloring problem, as input, each node $v$  obtains a \emph{color list} $L_v\subseteq \calC$, where $\calC$ is the space of possible colors. Each node $v$ further has a defect function $d_v:L_v \to \mathbb{N}_0$ that assigns a non-negative integral defect value to each color in $v$'s list $L_v$. Given vertex lists $L_v$, a list vertex coloring is an assignment $\varphi:V\to \calC$ that assigns each node $v\in V$ a color $\varphi(v)\in L_v$. In the following, we formally define three variants of list defective coloring.

\begin{definition}[List Defective Coloring]\label{def:LDC}
  Let $G=(V,E)$ be a graph, let $\calC$ be a color space, and assume that each node $v\in V$ is given a color list $L_v \subseteq \calC$ and a defect function $d_v:L_v\to \mathbb{N}_0$. Further, assume that we are given a list vertex coloring $\varphi:V\to \calC$.
  \begin{itemize}
  \item The coloring $\varphi$ is a \emph{list defective coloring} iff every $v\in V$ has at most $d_v(\varphi(v))$ neighbors of color $\varphi(v) \in L_v$.
  \item If $G$ is a directed graph, $\varphi$ is called an \emph{oriented list defective coloring} iff every $v\in V$ has at most $d_v(\varphi(v))$ out-neighbors of color $\varphi(v) \in L_v$.
  \item In combination with an edge orientation $\sigma$, $\varphi$ is called a \emph{list arbdefective coloring} iff it is an oriented list defective coloring w.r.t.\ the directed graph induced by the edge orientation $\sigma$.
  \end{itemize}
  An (oriented) list (arb)defective coloring is called an \emph{(oriented) $p$-list (arb)defective coloring} for some integer $p>0$ if for all $v\in V$, $|L_v|\leq p$.
\end{definition}

Note that the difference between an oriented list defective coloring and a list arbdefective coloring is that in an oriented list defective coloring, the edge orientation of $G$ is given as part of the input and in a list arbdefective coloring, the edge orientation is a part of the output. By a result of Lov\'{a}sz~\cite{lovasz66}, it is well-known that a $d$-defective $c$-coloring of a graph $G$ with maximum degree $\Delta$ always exists if $c(d+1)>\Delta$ (see \cref{lemma:defectiveexistence} for details). Note that this condition is also necessary if $G=K_{\Delta+1}$. By computing a balanced orientation of the edges of each color class of such a coloring, one can also deduce that a $d$-arbdefective $c$-coloring always exists if $c(2d+1)>\Delta$ (see \cref{lemma:arbdefectiveexistence} for details). Again, this condition is necessary if $G=K_{\Delta+1}$. By generalizing the potential function argument of \cite{lovasz66}, in \Cref{sec:existential}, we prove that the natural generalization of both existential statements also holds for the respective list defective coloring variants. Specifically, we show that for given color lists $L_v$ and defect functions $d_v$, a list defective coloring always exists if
\begin{equation}\label{eq:existential1}
  \forall v\in V\,:\,\sum_{x\in L_v} \big(d_v(x)+1\big) > \Delta
\end{equation}
and a list arbdefective coloring always exists if 
\begin{equation}\label{eq:existential2}
  \forall v\in V\,:\,\sum_{x\in L_v} \big(2d_v(x)+1\big) > \Delta.
\end{equation}
Both conditions are necessary if the graph is a $(\Delta+1)$-clique and if all nodes have the same color list and the same defect function. For arbdefective colorings, it has further been shown in \cite{balliu2021hideandseek} that Condition \eqref{eq:existential1} is necessary and sufficient to compute such colorings in time $f(\Delta) + O(\log^* n)$. Whenever \eqref{eq:existential1} does not hold, there is an $\Omega(\log_\Delta n)$-round lower bound for deterministic distributed list arbdefective coloring algorithms.

 \subsection{Our Contributions}
 \label{sec:contributions}

In the following, whenever we consider a graph $G=(V,E)$, we assume that $n$ denotes the number of nodes of $G$, $\Delta$ denotes the maximum degree of $G$, and $\deg(v)$ denotes the degree of a node $v$. Also, if $G$ is a directed graph, $\beta_v$ refers to the outdegree of node $v$ and $\beta$ denotes the maximum outdegree. Further, if we discuss any list defective coloring problem, unless stated otherwise, we assume that the colors come from space $\calC$, $L_v\subseteq \calC$ denotes the list of node $v$, and $d_v:L_v\to \mathbb{N}_0$ denotes the defect function of node $v$. We further assume that $\Lambda:=\max_{v\in V} |L_v|$ denotes the maximum list size. As it is common in the distributed setting, we do not analyze the complexity of internal computations at nodes. We briefly discuss the complexity of internal computations for our algorithms in \cref{sec:InternalComplexity}. 

\para{Oriented List Defective Coloring.}
As our main technical contribution, we give an efficient deterministic distributed algorithm for computing oriented list defective colorings. This algorithm is an adaptation of the techniques developed by Maus and Tonoyan~\cite{MausT20} to obtain a $2$-round algorithm for proper vertex colorings in directed graphs of small maximum outdegree.

\begin{theorem}\label{thm:def_local_list_coloring}
  Let $G=(V,E)$ be a properly $m$-colored directed graph and assume that we are given an oriented list defective coloring instance on $G$. Assume that for every node $v\in V$, for a sufficiently large constant $\alpha>0$, it holds that

  \begin{equation}
    \label{eq:defalg_condition}
    \sum_{x\in L_v} \big(d_v(x)+1\big)^2 \geq \alpha\cdot \beta_v^2 \cdot\kappa(\beta, \calC, m),
  \end{equation}
  where $\kappa(\beta, \calC, m) =  (\log\beta + \log\log|\calC| + \log\log m) \cdot (\log\log\beta + \log\log m) \cdot \log^2\log\beta$.
  
  Then, there is a deterministic distributed algorithm that solves this oriented list defective coloring instance in $O(\log\beta)$ rounds using $O\big(\min\set{|\calC|, \Lambda\cdot \log |\calC|} + \log\beta + \log m\big)$-bit messages.
\end{theorem}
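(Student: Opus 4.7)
The plan is to adapt the derandomization framework of Maus and Tonoyan~\cite{MausT20}, which treats distributed coloring as a pseudorandom sampling problem that can be derandomized with a short seed whose bits are fixed using the given proper $m$-coloring. I would proceed in three parts: analyze a weighted random coloring whose ``bad events'' are rare under~\eqref{eq:defalg_condition}; implement it with bounded-independence hashing so the seed consists of only $O(\log\beta)$ blocks of $O(\log|\calC|+\log m)$ bits; and derandomize by the method of conditional expectations, fixing one block per round.

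For the first part I would let each $v$ independently sample $\phi(v)\in L_v$ with probabilities proportional to a monotone function of $d_v(x)+1$, the natural choice being $p_{v,x}\propto (d_v(x)+1)^2$, which puts mass where the allowed defect is largest. Under~\eqref{eq:defalg_condition}, a direct calculation (essentially a Cauchy--Schwarz estimate against $\sum_x (d_v(x)+1)^2\ge\alpha\beta_v^2\kappa$) shows that conditioned on $\phi(v)=x$, the expected number of out-neighbors of $v$ assigned color $x$ is a factor of roughly $\sqrt{\kappa}$ below $d_v(x)+1$ for every $x\in L_v$. This polylogarithmic slack makes a Chernoff/Bernstein tail bound go through and gives $\Pr[v\text{ violates its defect}]\le 1/\poly(\beta,m,|\calC|)$, so a union bound indexed by the proper $m$-coloring certifies that some coloring is feasible.

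For the algorithmic parts, I would replace true independence with a $\Theta(\log\beta)$-wise independent family; its seed has $O(\log\beta)$ blocks of $O(\log|\calC|+\log m)$ bits, and the loss in concentration is exactly the polylogarithmic factor $\kappa$. Derandomization fixes these blocks sequentially: in each round every node broadcasts its currently induced distribution over $L_v$, which fits in $O(\min\{|\calC|,\Lambda\log|\calC|\})$ bits, to its out-neighbors, and the proper $m$-coloring serializes the choice of the next block so that a node-decomposable potential $\Phi=\sum_v \Phi_v$ never increases. The main obstacle, and where Maus--Tonoyan's ideas are most needed, is designing $\Phi$ so that it simultaneously (i) upper-bounds the global failure probability, (ii) is computable by each $v$ from only the messages of its out-neighbors plus $O(\log\beta+\log m)$ bits identifying the current block, and (iii) admits a monotone update under conditional-expectation steps even with non-uniform lists and defects. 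Establishing (i)--(iii) for the weighted random process above is precisely what forces the exact polylog form of $\kappa$: the $\log\beta$ factor accounts for the number of conditioning rounds, the $\log\log|\calC|+\log\log m$ factors for the alphabet/scheduling overhead, and the $\log^2\log\beta$ factor for the iterated binary search inside each conditioning step.
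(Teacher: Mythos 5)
Your proposal rests on a mischaracterization of Maus--Tonoyan's approach and, more importantly, has a concrete gap in the probabilistic analysis. Maus--Tonoyan's algorithm is not a derandomization-via-conditional-expectations scheme with a hash-function seed; it is a Linial-style cover-free-family construction. The intermediate problem ($P_2$ in the paper) is solved in zero rounds by a \emph{fixed, offline-computed} assignment from ``types'' $(c, L_v)$ to meta-sets $K_v$ whose pairwise $\Psi$-conflict is ruled out by a greedy counting argument over the $m\binom{|\calC|}{\ell}$ types; there is no shared random seed and no conditional-expectation potential. The paper's proof (Lemmas~\ref{lemma:InterfaceToYannic}--\ref{lemma:mainresult}) follows this combinatorial line and additionally introduces $\gamma$-classes (nodes grouped by $\beta_v/(d_v(x)+1)$), a two-phase iteration over these classes, and an auxiliary OLDC instance that assigns $\gamma$-classes — none of which appears in your plan.

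The probabilistic claim in your second part also does not hold. If $v$ samples $\phi(v)=x$ with probability $(d_v(x)+1)^2/D_v$ where $D_v=\sum_y(d_v(y)+1)^2\ge\alpha\beta_v^2\kappa$, the expected number $\mu_x$ of outneighbors that pick $x$ is governed by the \emph{outneighbors'} lists and defects, not by $d_v(x)$, and there is no mechanism forcing $\mu_x \lesssim d_v(x)/\sqrt\kappa$. Concretely, if all $\beta_v$ outneighbors $u$ have $L_u=\{x\}$ (which is consistent with their condition \eqref{eq:defalg_condition} as long as $d_u(x)\ge\sqrt{\alpha\kappa}$), then $\mu_x=\beta_v$ deterministically; if $d_v(x)$ is small, $v$'s defect bound is violated whenever $\phi(v)=x$, and the probability of this is $(d_v(x)+1)^2/D_v$, which can only be bounded by $O(1/\kappa)=1/\polylog\beta$, far from the $1/\poly(\beta,m,|\calC|)$ needed for a union bound. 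The paper avoids this issue by iterating over $\gamma$-classes in a fixed order (high classes pick first in Phase~II), so a node never has to protect against arbitrary higher-defect outneighbors with a one-shot random choice.

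Finally, the round-count argument for the conditional-expectation derandomization does not give $O(\log\beta)$ rounds: ``the proper $m$-coloring serializes the choice of the next block'' would require $\Theta(m)$ rounds (one per color class), whereas fixing $O(\log\beta)$ globally-shared seed blocks would need global agreement. In the paper the $m$-coloring is used only \emph{inside} the zero-round combinatorial assignment (as an index into the type ordering), and the $O(\log\beta)$ rounds come from iterating through $O(\log\beta)$ $\gamma$-classes in Phases~I and~II of Lemma~\ref{lemma:technicalMain}, plus a shorter iteration (over $h'=O(\log\log\beta)$ classes) for assigning the $\gamma$-classes themselves via Lemma~\ref{lemma:SimpleOLDC}.
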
 

\para{Recursive Color Space Reduction.}
We have already discussed that we can use list defective colorings to recursively divide the color space. We next elaborate on the power of doing recursive color space reduction directly for list defective coloring problems. The following theorem shows that in this way, at the cost of solving a somewhat weaker problem, we can sometimes significantly improve the time complexity or the required message size. In the following, we assume that we have an oriented list defective coloring algorithm $\calA$, where the complexity is in particular a function of the maximum list size $\Lambda$. More specifically, the following theorem specifies the properties of $\calA$ by an arbitrary parameter $\nu \geq0$ and by arbitrary non-decreasing functions $\kappa(\Lambda)$, $T(\Lambda)$, and $M(\Lambda)$. Note that the functions $\kappa(\Lambda)$, $T(\Lambda)$, and $M(\Lambda)$ can in principle also depend on other global properties such as the maximum degree $\Delta$, the maximum outdegree $\beta$, or the number of nodes $n$. When applying $\calA$ to obtain algorithm $\calA'$, we then however treat those other parameters as fixed quantities.

\begin{restatable}{theorem}{restatesecond}\label{thm:spacereduction}
  Let $\nu\geq0$ be a parameter and let $\kappa(\Lambda)$, $T(\Lambda)$, and $M(\Lambda)$ be non-decreasing functions of the maximum list size $\Lambda$. Assume that we are given a deterministic distributed algorithm $\calA$ that solves oriented list defective coloring instances for which
  \[
    \forall v\in V\,:\,\sum_{x\in L_v}\big(d_v(x)+1\big)^{1+\nu} \geq \beta_v^{1+\nu}\cdot \kappa(\Lambda).
  \]
  Assume further that the round complexity of $\calA$ is $T(\Lambda)$ and that $\calA$ requires messages of $M(\Lambda)$ bits.

  Then, for any integer $p\in(1,|\calC|]$, there exists a deterministic distributed algorithm $\calA'$ that solves oriented list defective coloring instances for which
  \[
    \forall v\in V\,:\,\sum_{x\in L_v}\big(d_v(x)+1\big)^{1+\nu} \geq \beta_v^{1+\nu}\cdot \kappa(p)^{\lceil \log_p |\calC|\rceil}
  \]
  in time $O(T(p)\cdot \log_p |\calC|)$ and with $M(p)$-bit messages.
\end{restatable}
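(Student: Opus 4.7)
The plan is to proceed by induction on $\lceil \log_p |\calC|\rceil$ via recursive color-space reduction. Fix a balanced partition $\calC = \calC_1 \cup \cdots \cup \calC_p$ with $|\calC_i| \leq \lceil |\calC|/p\rceil$; this partition is determined by $\calC$ alone, so each node computes it locally without communication. The algorithm $\calA'$ first constructs an auxiliary oriented list defective coloring instance on the same graph whose color space is $\{1,\dots,p\}$, invokes $\calA$ once on this auxiliary instance, and then recurses in parallel on each of the $p$ sub-instances obtained by restricting lists to $\calC_i$. Parallel recursion is harmless because distinct sub-instances operate on disjoint color spaces and hence cannot conflict through any edge of $G$.

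The crux is the choice of auxiliary defect function. Let $S_v^{(i)} := \sum_{x\in L_v\cap \calC_i}(d_v(x)+1)^{1+\nu}$ and $K := \kappa(p)^{\lceil \log_p|\calC|\rceil-1}$. For each $v$ and each $i$ with $L_v\cap \calC_i\neq \emptyset$, pick $d'_v(i)$ as the largest non-negative integer with $(d'_v(i)+1)^{1+\nu} \leq S_v^{(i)}/K$, and drop $i$ from $v$'s auxiliary list if no such integer exists. This envelope choice is tight in two directions at once. On the one hand, $S_v^{(i)} \geq (d'_v(i)+1)^{1+\nu}\cdot K \geq (d'_v(i)+1)^{1+\nu}\cdot \kappa(p)^{\lceil\log_p|\calC_i|\rceil}$, which is precisely the hypothesis needed by the recursive call on $\calC_i$ once $\calA$ replaces $v$'s outdegree by $\beta''_v \leq d'_v(i)$. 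On the other hand, summing the envelope bound over $i$ and using $\sum_i S_v^{(i)} = S_v$ together with the input hypothesis $S_v \geq \beta_v^{1+\nu}\cdot K\cdot \kappa(p)$ yields $\sum_i (d'_v(i)+1)^{1+\nu} \geq \beta_v^{1+\nu}\cdot \kappa(p)$, so the auxiliary instance meets $\calA$'s hypothesis with $\Lambda\leq p$.

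Running $\calA$ on the auxiliary instance then takes $T(p)$ rounds with $M(p)$-bit messages; unrolling $\lceil\log_p|\calC|\rceil$ levels of recursion yields the claimed $O(T(p)\cdot \log_p|\calC|)$ round complexity, with $M(p)$-bit messages throughout since every $\calA$-call is on an auxiliary instance of list size at most $p$. The base case is $|\calC|\leq p$, where the hypothesis already matches that of $\calA$ directly and a single call suffices.

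The main technical obstacle is making the single integer-valued $d'_v(i)$ tight in both directions simultaneously: aggregated across $i$ it must dominate $\beta_v^{1+\nu}\kappa(p)$ so that $\calA$ applies, while per $i$ it must be small enough that $(d'_v(i)+1)^{1+\nu} K$ still fits inside $S_v^{(i)}$ so that the next level's hypothesis holds. The envelope definition threads this needle through the additivity $\sum_i S_v^{(i)} = S_v$ and the telescoping of the $\kappa(p)$-exponent. The remaining bookkeeping — rounding the real-valued envelope to an integer defect and discarding parts $i$ for which $S_v^{(i)}<K$ — loses at most a constant factor, which can be absorbed into the implicit constants in $\kappa(p)$.
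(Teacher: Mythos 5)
Your overall architecture — induction on $\lceil\log_p|\calC|\rceil$, balanced partition of the color space, an auxiliary OLDC on colors $\{1,\dots,p\}$ solved by $\calA$, parallel recursion on the $p$ disjoint sub-instances — matches the paper's proof exactly. The gap is in the choice of auxiliary defect, and it is not a cosmetic rounding issue.

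You define $d'_v(i)$ as the largest integer with $(d'_v(i)+1)^{1+\nu}\leq S_v^{(i)}/K$, so that $(d'_v(i)+1)^{1+\nu}$ \emph{under}-approximates $S_v^{(i)}/K$. Summing your envelope bound over $i$ therefore gives $\sum_i (d'_v(i)+1)^{1+\nu}\leq S_v/K$, which is the \emph{wrong direction}: the auxiliary instance needs $\sum_i (d'_v(i)+1)^{1+\nu}\geq \beta_v^{1+\nu}\kappa(p)$, and the upper bound $\leq S_v/K$ does not imply it. The paper instead sets $\beta_{v,i}=\lfloor (S_v^{(i)}/K)^{1/(1+\nu)}\rfloor$ (exactly one larger than your $d'_v(i)$ wherever the latter is defined), so that $\beta_{v,i}^{1+\nu}\leq S_v^{(i)}/K<(\beta_{v,i}+1)^{1+\nu}$. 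The first inequality is all the recursive call needs — since the residual outdegree $\beta''_v$ is at most the defect, you only need $S_v^{(i)}\geq \beta_{v,i}^{1+\nu}K$, not $(\beta_{v,i}+1)^{1+\nu}K$ — while the second inequality is exactly what the auxiliary instance needs after summing, since $\sum_i(\beta_{v,i}+1)^{1+\nu}>S_v/K\geq\beta_v^{1+\nu}\kappa(p)$. Your version is tight on the side the recursion doesn't need and loose on the side the auxiliary instance does need.

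Your closing remark that the loss can be ``absorbed into the implicit constants in $\kappa(p)$'' does not rescue this: $\kappa$ is a given function, not a tunable constant, and even if it were, the correction factor compounds multiplicatively across the $\lceil\log_p|\calC|\rceil$ levels and would change the theorem's exponent rather than a constant. The same applies to dropping the parts with $S_v^{(i)}<K$; the paper keeps all $p$ parts (a defect of $0$ is legal, and $(0+1)^{1+\nu}=1$ still contributes to the auxiliary sum), avoiding that loss entirely. Replacing your $d'_v(i)$ by $\lfloor(S_v^{(i)}/K)^{1/(1+\nu)}\rfloor$ and keeping all $i$ fixes both issues and recovers the paper's argument verbatim.
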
 

When replacing $\beta_v$ by $\deg(v)$, the same theorem also holds for list defective colorings (in undirected graphs). One can easily see this as a list defective coloring can be turned into an equivalent oriented list defective problem, by replacing every edge $\set{u,v}$ of an undirected graph by the two directed edges $(u,v)$ and $(v,u)$.  

Note that the number of colors of a standard defective coloring corresponds to the maximum list size $\Lambda$ of a list defective coloring, i.e., a standard defective coloring with $c$ colors is a special case of list defective coloring with lists of size $c$. Many of the existing defective and arbdefective coloring algorithms have a round complexity that is of the form $\poly(c) + O(\log^* n)$~\cite{BarenboimE11,barenboim14distributed,BarenboimEG18}. For a concrete application of \cref{thm:spacereduction}, we therefore assume  that the time complexity of algorithm $\calA$ is of the form $T(\Lambda) = \poly(\Lambda) + O(\log^* n)$. For simplicity, we further assume that the size of the color space $\calC$ is at most polynomial in $\beta$. By setting $p=2^{O(\sqrt{\log\beta \cdot \log\kappa(\Lambda)})}$, we then get an algorithm that solves oriented list defective coloring problems with $\forall v\in V\,:\,\sum_{x\in L_v}(d_v(x)+1)^{1+\nu} \geq \beta_v^{1+\nu}\cdot 2^{O(\sqrt{\log\beta \cdot \log\kappa(\Lambda)})}$ in time $2^{O(\sqrt{\log\beta \cdot \log\kappa(\Lambda)})} + O(\log^* n)$ rounds. For details, we refer to \Cref{cor:spacereduction_time} in \Cref{sec:spacereduction}.

As a second application of \Cref{thm:spacereduction}, consider the oriented list defective coloring algorithm given by \Cref{thm:def_local_list_coloring}. The round complexity of this algorithm is $O(\log\beta)$ and we cannot hope to get a time improvement by recursively subdividing the color space. We can however improve the necessary message size. The message size of the algorithm is essentially linear in the maximum list size $\Lambda$. If we choose $p\ll\Lambda$, the message size becomes essentially linear in $p$. Assume for example that $\Lambda$ and the color space are both polynomial in $\beta$. We then only need a constant number of recursion levels to reduce the message size to $O(\beta^{\eps} + \log m)$ for any constant $\eps>0$ (see \Cref{cor:spacereduction_msg}). We will apply this idea to obtain our new \CONGEST algorithm for the $(\Delta+1)$-coloring problem.

\para{Degree + 1 and List Arbdefective Colorings.}
The remaining contributions deal with applying (oriented) list defective coloring algorithms to solve the standard $(\mathit{degree}+1)$-coloring problem and more general other coloring problems. The following theorem shows that in combination with (oriented) list defective coloring algorithms, the general technique of \cite{barenboim16sublinear,fraigniaud16local} cannot only be used to solve standard $(\mathit{degree}+1)$-coloring instances, but more generally also to solve list arbdefective coloring instances for which for all nodes $v$, $\sum_{x\in L_v} (d_v(x)+1) \geq \deg(v) + 1$. Further, if we assume (oriented) list defective coloring algorithms of a certain quality (which is better than what we currently know), we directly obtain algorithms that potentially significantly improve the state of the art for the standard $(\mathit{degree}+1)$-coloring problem. For the following theorem, we assume that for two parameters $\nu>0$ and $\kappa>0$, we have an oriented list defective coloring algorithm $\calA_{\nu,\kappa}^O$ or a list defective coloring algorithm $\calA_{\nu,\kappa}^D$ to solve instances for which for all $v$, $\sum_{x\in L_v} (d_v(x)+1)^{1+\nu} \geq \beta_v^{1+\nu}\cdot \kappa$ or $\sum_{x\in L_v} (d_v(x)+1)^{1+\nu} \geq \deg(v)^{1+\nu}\cdot \kappa$. We use $T_{\nu,\kappa}^O$ and $T_{\nu,\kappa}^D$ to denote the time complexities of the two algorithms. Note that the parameter $\kappa$ can depend (monotonically) on global properties such as the maximum list size $\Lambda$, the maximum degree $\Delta$, or the maximum outdegree $\beta$. We then however treat those global parameters as fixed quantities when applying the algorithms $\calA_{\nu,\kappa}^O$ and $\calA_{\nu,\kappa}^D$ recursively.

\begin{restatable}{theorem}{restatethird}\label{thm:arbdefective}
  Let $\nu\geq 0$ and $\kappa>0$ be two parameters, let $G=(V,E)$ be an undirected graph with maximum degree $\Delta$, and assume that we are given a list arbdefective coloring instance of $G$ for which $\forall v\in V\,:\, \sum_{x\in L_v}(d_v(x)+1) > \deg(v)$. Using the oriented list defective coloring algorithm $\calA_{\nu,\kappa}^O$, the given list arbdefective coloring problem can be solved in $O\big(\Lambda^{\frac{\nu}{1+\nu}}\cdot \kappa^{\frac{1}{1+\nu}}\cdot \log(\Delta)\cdot T_{\nu,\kappa}^O + \log^* n\big)$ rounds. Using the list defective coloring algorithm $\calA_{\nu,\kappa}^D$, the given list arbdefective coloring problem can be solved in $O\big(\Lambda^{\nu}\cdot \kappa^2\cdot \log(\Delta)\cdot T_{\nu,\kappa}^D + \log^* n\big)$ rounds. If $\nu\geq \nu_0$ for some constant $\nu_0>0$, in both time bounds, the $\log(\Delta)$ term can be substituted by $\log(\Delta/\Lambda)$. If $\calA_{\nu,\kappa}^D$ (or $\calA_{\nu,\kappa}^O$) uses messages of at most $B$ bits, then the resulting list arbdefective coloring algorithm uses messages of $O(B + \log n)$ bits.
\end{restatable}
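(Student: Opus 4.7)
The plan is to follow the iterative peeling framework introduced by Barenboim and refined by Fraigniaud, Heinrich, and Kosowski: use the list defective coloring algorithm as a subroutine to commit nodes to colors, thereby reducing the outdegree of the residual subproblem each round, and iterate until the problem becomes trivial. The additive $O(\log^* n)$ term will be absorbed by an initial Linial-style $O(\Delta^2)$-coloring of $G$, which also supplies an initial acyclic orientation; all subsequent work is purely local.

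Set $A := \Lambda^{\nu/(1+\nu)}\kappa^{1/(1+\nu)}$. The core inequality comes from applying the Power Mean inequality to the sequence $(d_v(x)+1)_{x \in L_v}$ of length at most $\Lambda$:
\[
  \sum_{x\in L_v}\big(d_v(x)+1\big)^{1+\nu} \;\geq\; \Lambda^{-\nu}\Big(\sum_{x\in L_v}(d_v(x)+1)\Big)^{1+\nu} \;>\; \Lambda^{-\nu}\deg(v)^{1+\nu},
\]
where the last step uses the hypothesis. The precondition of $\calA_{\nu,\kappa}^O$ --- namely $\sum_{x\in L_v}(d_v(x)+1)^{1+\nu}\geq \beta_v^{1+\nu}\kappa$ --- is therefore met whenever $\beta_v\leq \deg(v)/A$ in the residual instance. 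Hence the algorithm skeleton is: orient the uncommitted edges so that every active $v$ has outdegree at most $\deg(v)/A$, call $\calA_{\nu,\kappa}^O$ to commit a subset of nodes to final colors, and remove their committed edges from the residual graph.

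Concretely, I would structure the process into $O(\log\Delta)$ phases, each halving the residual maximum degree, where each phase consists of $O(A)$ calls to $\calA_{\nu,\kappa}^O$ (one per ``bucket'' of defect values, chosen so that each call commits an $\Omega(1/A)$-fraction of every remaining node's current out-edges). This gives the $O(A\log\Delta\cdot T_{\nu,\kappa}^O)$ round bound. For the variant based on $\calA_{\nu,\kappa}^D$, the precondition bounds $\deg(v)$ rather than $\beta_v$, forfeiting the orientation trick and incurring a second factor of $A$ together with another Power Mean slack; simplifying the exponents yields the stated $\Lambda^\nu\kappa^2$ bound. When $\nu\geq\nu_0$ is a constant, the recursion can stop as soon as the residual degree drops below $\Lambda$, because a single invocation of the underlying algorithm then suffices --- this replaces $\log\Delta$ by $\log(\Delta/\Lambda)$.

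The main technical obstacle will be defect-budget accounting across sub-iterations: when $v$ is committed in an early call, same-color neighbors committed in later calls may still become its out-neighbors, so each invocation must be passed a deflated defect function that reserves budget for these future commitments. Verifying that the Power Mean bound (and hence the precondition of the sub-algorithm) still holds after this deflation is the main calculation to carry out. The message-size claim $O(B+\log n)$ is then immediate, since the outer loop only communicates identifiers and bucket indices on top of the messages of $\calA_{\nu,\kappa}^O$ (or $\calA_{\nu,\kappa}^D$).
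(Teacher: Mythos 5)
Your high-level skeleton matches the paper's: an initial Linial $O(\Delta^2)$-coloring absorbs the $\log^* n$, a Power-Mean/H\"older step bridges the $\sum(d_v(x)+1)$ hypothesis and the $\sum(d_v(x)+1)^{1+\nu}$ precondition, the residual maximum degree is halved over $O(\log\Delta)$ stages, and each stage is decomposed into $O(A)$ sub-invocations with $A=\Lambda^{\nu/(1+\nu)}\kappa^{1/(1+\nu)}$. That framing is correct, and you also correctly identify that the defect-budget bookkeeping across sub-invocations is the place where the real work lives.

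But there is a genuine misconception in how you propose to get the $O(A)$-way decomposition. You describe ``$O(A)$ calls \ldots one per bucket of defect values, chosen so that each call commits an $\Omega(1/A)$-fraction of every remaining node's out-edges.'' Grouping nodes by their defect values is not the mechanism and would not produce low-outdegree subgraphs; a node's defect budget has no bearing on how many of its out-edges point into any given part. What the paper does is first compute, via the locally-iterative algorithm of Barenboim--Elkin--Goldenberg, an arbdefective $q$-coloring of $G$ with $q=O(A)$ colors and arbdefect $\delta=\Delta/(2A)$, which supplies both the partition $V_1\cup\dots\cup V_q$ and the orientation making each induced $G[V_i]$ have outdegree at most $\delta$. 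Then it iterates $i=1,\dots,q$, and in step $i$ only colors those nodes of $V_i$ that still have at least $\Delta/2$ uncolored neighbors --- that threshold is exactly what guarantees $\sum_{x\in L_v'}(d_v'(x)+1)>\Delta/2$ after subtracting the counts $a_v(x)$ of already-colored neighbors of each color, which is what makes the H\"older bound close against $\delta^{1+\nu}\kappa$. This ``only color nodes with enough remaining slack'' gating is not an $\Omega(1/A)$-fraction-per-call guarantee; it is what lets a node defer itself to the next stage if it has lost too much budget.

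Two further gaps worth flagging. For the $\calA^D_{\nu,\kappa}$ variant you say one ``forfeits the orientation trick and incurs a second factor of $A$,'' but the paper's actual device is that, lacking an efficient plain defective coloring primitive, it uses $\calA^D_{\nu,\kappa}$ itself to compute a defective $q$-coloring with the same defect $\delta=\Delta/(2A)$, which forces $q=\Theta\big((\Delta/\delta)^{1+\nu}\kappa\big)=\Theta(A^{1+\nu}\kappa)=\Theta(\Lambda^\nu\kappa^2)$; this calculation should be shown rather than hand-waved as ``simplifying exponents.'' And your explanation of the $\log(\Delta/\Lambda)$ improvement (``stop as soon as the residual degree drops below $\Lambda$; a single invocation then suffices'') is not the argument: the paper continues the stages but observes that once the residual degree $\Delta/2^{i-1}$ falls below $\Lambda$, the per-stage cost $\Lambda_i^{\nu/(1+\nu)}\kappa^{1/(1+\nu)}T$ decreases geometrically (because $\Lambda_i\le\Delta/2^{i-1}+1$), so the sum over all $O(\log\Delta)$ stages is dominated by its first $O(\log(\Delta/\Lambda))$ terms once $\nu$ is bounded away from $0$. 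A single call after the degree drops below $\Lambda$ would still have to respect defect budgets consumed in earlier stages, and nothing you've said shows it would succeed in one shot.
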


Note that the algorithm of \Cref{thm:def_local_list_coloring} satisfies the requirements of algorithm $\calA_{\nu,\kappa}^O$ for $\nu=1$. If we assume that we first compute an $O(\Delta^2)$-coloring of $G$ in time $O(\log^* n)$ by using a standard algorithm of \cite{Linial1987} and if we assume the size of the color space is at most exponential in $\Delta$, then $\kappa=O(\log\Delta\cdot\log^3\log\Delta)$ and $T_{\nu,\kappa}^O=O(\log\Delta)$. 
When using the algorithm of \Cref{thm:def_local_list_coloring} as algorithm $\calA_{\nu,\kappa}^O$ in \Cref{thm:arbdefective}, \Cref{thm:arbdefective} therefore implies that the given arbdefective coloring instance can be solved in $O\big(\sqrt{\Lambda}\cdot \log^{3/2}\Delta \cdot \log^{3/2}\log\Delta + \log^* n\big)$ rounds. Hence, the theorem in particular entails that a $d$-arbdefective $\lfloor \frac{\Delta}{d+1} + 1\rfloor$-coloring can be computed in $O(\sqrt{\Delta/(d+1)}\cdot \log^{3/2}\Delta\cdot\log^{3/2}\log\Delta + \log^* n)$ rounds. Even when using $O(\Delta/d)$ colors, the best previous algorithm for this problem required $O(\Delta/d + \log^* n)$ rounds~\cite{BarenboimEG18}. \Cref{thm:arbdefective} further shows that if we could get a fast oriented list defective coloring algorithm for a condition of the form $\sum_{x\in L_v}(d_v(x) +1)^{2-\eps}\geq \beta_v^{2-\eps}\poly\log\Delta$, we would already significantly improve the existing $O(\sqrt{\Delta\log\Delta} + \log^* n)$-round algorithm of \cite{fraigniaud16local,BarenboimEG18,MausT20} to compute a $(\Delta+1)$-coloring. The same would be true in case we could get a fast list defective coloring algorithm for a condition of the form $\sum_{x\in L_v}(d_v(x)+1)^{3/2-\eps} \geq \deg(v)^{3/2-\eps}\cdot\poly\log\Delta$. This indicates that the current obstacle for significantly improving the $O(\sqrt{\Delta\log\Delta} + \log^* n)$-round algorithm (in case this is possible) is to improve our understanding of the complexity of computing defective colorings and possible list defective colorings.

\para{Faster Coloring in the \boldmath\CONGEST Model.} Finally, we show that by combining \Cref{thm:def_local_list_coloring,thm:spacereduction,thm:arbdefective}, we obtain a faster $(\mathit{degree}+1)$-list coloring for the \CONGEST model. \newpage

\begin{restatable}{theorem}{restatecongest}\label{thm:CONGEST}
  Let $G=(V,E)$ be an $n$-node graph and assume that we are given a $(\mathit{degree}+1)$-list coloring instance on $G$. If the color space $\mathcal{C}$ of the problem is of size $|\mathcal{C}|\leq \poly(\Delta)$, there exists a deterministic \CONGEST algorithm for solving the $(\mathit{degree}+1)$-list coloring instance in time $\sqrt{\Delta} \cdot \polylog \Delta + O(\log^* n)$. If the color space is of size $|\mathcal{C}| = O(\Delta)$ (such as, e.g., for the standard $(\Delta+1)$-coloring problem), the time complexity of the algorithm is $O(\sqrt{\Delta}\cdot\log^2\Delta\cdot\log^6\log\Delta + \log^* n)$.
\end{restatable}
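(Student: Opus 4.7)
The argument reduces $(\mathit{degree}+1)$-list coloring to \Cref{thm:arbdefective} by noting that the former is the special case of list arbdefective coloring with $d_v(x)\equiv 0$: then $\sum_{x \in L_v}(d_v(x)+1) = |L_v| \geq \deg(v)+1 > \deg(v)$, the hypothesis of \Cref{thm:arbdefective} holds, and any output coloring is automatically proper (zero same-color neighbors) irrespective of the attached orientation. It therefore suffices to supply \Cref{thm:arbdefective} (with $\nu=1$) with a CONGEST-compliant oriented list defective coloring subroutine $\calA^O$ whose condition factor $\kappa$ is polylogarithmic in $\Delta$.

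The subroutine $\calA^O$ is built in two preparatory steps. First, I run Linial's algorithm for $O(\log^* n)$ rounds to obtain an auxiliary $m$-coloring of $G$ with $m=O(\Delta^2)$, so $\log m=O(\log\Delta)$ and the $n$-dependence drops out of the $\kappa$ of \Cref{thm:def_local_list_coloring}; this yields a base algorithm with $\kappa_0 = O(\log\Delta\cdot \log^3\log\Delta)$, round complexity $T_0=O(\log\Delta)$, and message size $M_0(\Lambda) = O(\min\{|\calC|,\Lambda\log|\calC|\}+\log\Delta)$. Second, I apply \Cref{thm:spacereduction} with a parameter $p$ to shrink the messages. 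The resulting $\calA^O$ has condition factor $\kappa^*=\kappa_0^{\lceil\log_p|\calC|\rceil}$, round complexity $T^* = O(T_0\cdot\log_p|\calC|)$, and message size $M^* = M_0(p)$. Choosing $p=\Theta(\sqrt{|\calC|})$ forces $\lceil\log_p|\calC|\rceil\leq 2$, so $\kappa^* \leq \kappa_0^2 = O(\log^2\Delta\cdot\log^6\log\Delta)$ and $T^*=O(\log\Delta)$; the further reduction of $M^*$ to CONGEST-sized messages is obtained from a constant number of additional recursion levels of \Cref{thm:spacereduction}, which by \Cref{cor:spacereduction_msg} gives $M^* = O(\beta^\epsilon+\log m)$ for any constant $\epsilon>0$.

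Plugging this $\calA^O$ into the time bound of \Cref{thm:arbdefective} with $\nu=1$, namely $O\bigl(\sqrt\Lambda\cdot\sqrt{\kappa^*}\cdot\log(\Delta/\Lambda)\cdot T^* + \log^* n\bigr)$ (using the substitution $\log\Delta\to\log(\Delta/\Lambda)$ that is available because $\nu=1$ is bounded away from zero), yields the claimed complexities. For $|\calC|=O(\Delta)$ one has $\Lambda\leq\Delta+1$ and $\sqrt{\kappa^*}=O(\log\Delta\cdot\log^3\log\Delta)$; together with $T^*=O(\log\Delta)$ this gives $O(\sqrt\Delta\cdot\log^2\Delta\cdot\log^6\log\Delta+\log^* n)$. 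For the more general case $|\calC|\leq\poly\Delta$ with potentially larger lists, the same plug-in bounds the running time by $\sqrt\Delta\cdot\polylog\Delta+O(\log^* n)$. \Cref{thm:arbdefective} inflates the message size only by an additive $O(\log n)$, so the full algorithm operates in the CONGEST model.

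The main obstacle is the balancing act in building $\calA^O$: a single application of \Cref{thm:spacereduction} with large $p$ keeps $\kappa^*$ polylogarithmic but leaves $M^*$ polynomial in $\Delta$, while small $p$ brings $M^*$ into $O(\log n)$ but blows $\kappa^*$ up super-polynomially. The resolution depends on \Cref{cor:spacereduction_msg}, which achieves $M^* = O(\beta^\epsilon+\log m)$ with only constant-factor overhead on $T^*$ and $\kappa^*$, combined with the observation that within \Cref{thm:arbdefective}'s iteration $\calA^O$ is only invoked on directed sub-instances whose effective outdegree $\beta$ can be made small enough that $\beta^\epsilon+\log m=O(\log n)$; verifying this interaction between the message-size reduction and the arbdefective-coloring decomposition is the technical crux.
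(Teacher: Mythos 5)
Your proposal follows the same overall blueprint as the paper — reduce $(\mathit{degree}+1)$-list coloring to a $(\mathit{degree}+1)$-list arbdefective coloring instance (all defects zero), feed it to \Cref{thm:arbdefective} with $\nu=1$, and supply a subroutine built from \Cref{thm:def_local_list_coloring} plus color-space reduction — but it has a genuine gap in establishing CONGEST compliance of the messages.

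The message size of the subroutine coming out of \Cref{cor:spacereduction_msg} is $O\big(|\calC|^{1/r}\cdot B + \log n\big)$, and this depends on the \emph{color space} $\calC$, which stays the same (of size $\poly(\Delta)$) throughout the execution of \Cref{thm:arbdefective}. Your ``technical crux'' — that within \Cref{thm:arbdefective} the subroutine $\calA^O$ is only invoked on subgraphs of small outdegree $\beta\approx\delta$, and therefore the $O(\beta^{\epsilon}+\log m)$ bound from Section 1.1 makes the messages $O(\log n)$ — confuses the outdegree with the color space. Section 1.1's remark $O(\beta^\epsilon + \log m)$ applies when $|\calC| = \poly(\beta)$ for the instance actually solved; here the sub-instances have smaller $\beta$ but the same $|\calC|=\poly(\Delta)$, so the messages stay at $\Theta(|\calC|^{1/r}) = \Theta(\Delta^{c})$ for some $c>0$. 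Without further hypotheses, $\Delta^c$ can be $\omega(\log n)$ and the algorithm is not a CONGEST algorithm.

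The paper closes this gap by a preliminary case split that your proof omits: if $\Delta>\log^2 n$, it simply falls back to the $O(\log^2\Delta\cdot\log n)$-round CONGEST algorithm of Ghaffari--Kuhn, which is already $O(\sqrt{\Delta}\cdot\log^2\Delta)$ in that regime; otherwise one is free to assume $\Delta\le\log^2 n$, which makes $O(\sqrt{|\calC|}+\log n)=O(\sqrt{\Delta}+\log n)=O(\log n)$. With this assumption in hand, for $|\calC|=O(\Delta^p)$ one takes exactly $r=2p$ recursion levels (no ``additional'' levels on top of $r=2$ — each extra level multiplies $\kappa$ by another $\kappa_0$ and would spoil your claimed exponent), yielding messages of $O(\sqrt{\Delta}+\log n)=O(\log n)$ bits and $\kappa = \kappa_0^{2p}$. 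You should add this case split, drop the ``effective outdegree'' argument, and fix the recursion depth to a single application with $r=2p$.
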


Note that there is a $O(\log^2\Delta\cdot\log n)$-round \CONGEST algorithm for solving $(\mathit{degree}+1)$-list coloring instances~\cite{GhaffariKuhn21}. Further, by \cite{fraigniaud16local,BarenboimEG18,MausT20} there is an $O(\sqrt{\Delta \log \Delta} + \log^* n)$ algorithm for the problem that uses messages of size $\tilde{O}(\Delta)$. Thus, $(\Delta+1)$-coloring algorithms running in $\sqrt{\Delta} \cdot \polylog \Delta + O(\log^* n)$ rounds in the \CONGEST model are already known as long as $\Delta = O(\log n)$ or $\Delta = \Omega(\log^2 n)$. Our results fill this gap and give such an algorithm for $\Delta \in [\omega(\log n), o(\log^2 n)]$. A rough explanation of why the previous deterministic \CONGEST algorithms fail to compute $(\Delta+1)$-colorings efficiently when $\Delta \in [\omega(\log n), o(\log^2 n)]$ is the following. In the algorithm of \cite{fraigniaud16local,BarenboimEG18,MausT20}, every node has to \textit{learn} the color lists of its neighbors, which requires that $\Omega(\Delta \cdot \log \Delta)$ bits have to be sent over every edge (which only works in \CONGEST if $\Delta = O(\log n)$). For other algorithms (such as for the algorithm of \cite{GhaffariKuhn21}), the round complexity of the algorithms is at least $\Omega(\log n)$, which only leads to efficient time complexities in $\tilde{O}(\sqrt{\Delta})$ if $\Delta = \Omega(\log^2 n)$. 

\subsection{Organization of the paper} The remainder of the paper is organized as follows. In \Cref{sec:model}, we formally define the communication model and we introduce the necessary mathematical notations and definitions. \Cref{sec:OLDC} is the main technical section. It discusses our oriented list defective coloring algorithms, leading to the proof of \Cref{thm:def_local_list_coloring}. Subsequently, \Cref{sec:spacereduction} discusses how to improve existing list (defective) coloring algorithms by recursively reducing the color space. \Cref{sec:arbdefective} then shows how (oriented) list defective coloring algorithms can be applied to efficiently solve the standard $(\mathit{degree}+1)$-coloring problem and even list arbdefective coloring problems. The section also shows how this, in combination with the results in \Cref{sec:OLDC} and \Cref{sec:spacereduction}, leads to our new $(\mathit{degree}+1)$-coloring algorithm for the \CONGEST model. 


\section{Model and Preliminaries}
\label{sec:model}

\para{Communication Model.} In the \LOCAL model and the \CONGEST model~\cite{Peleg2000}, the network is abstracted as an $n$-node graph $G=(V, E)$ in which each node is equipped with a unique $O(\log n)$-bit identifier. Communication between the nodes of $G$ happens in synchronous rounds. In every round, every node of $G$ can send a potentially different message to each of its neighbors, receive the messages from the neighbors and perform some arbitrary internal computation. Even if $G$ is a directed graph, we assume that communication can happen in both directions. All nodes start an algorithm at time $0$ and the time or round complexity of an algorithm is defined as the total number of rounds needed until all nodes terminate (i.e., output their color in a coloring problem). In the \LOCAL model, nodes are allowed to exchange arbitrary messages, whereas in the \CONGEST model, messages must consist of at most $O(\log n)$ bits.

\para{Mathematical Notation.} Let $G=(V,E)$ be a graph. Throughout the paper, we use $\deg_G(v)$ to denote the degree of a node $v\in V$ in $G$ and $\Delta(G)$ to denote the maximum degree of $G$. If $G$ is a directed graph, we further use $\beta_{v,G}$ to denote the outdegree of a node $v\in V$. More specifically, for convenience, we define $\beta_{v,G}$ as the maximum of $1$ and the outdegree of $v$, i.e., we also set $\beta_{v,G}=1$ if the outdegree of $v$ is $0$. The maximum outdegree $\beta_G$ of $G$ is defined as $\beta_G:=\max_{v\in V}\beta_{v,G}$. We further use $N_G(v)$ to denote the set of neighbors of a node $v$ and if $G$ is a directed graph, we use $\Nout_G(v)$ to denote the set of outneighbors of $v$. In all cases, if $G$ is clear from the context, we omit the subscript $G$. When discussing one of the list defective coloring problems on a graph $G=(V,E)$, we will typically assume that $\calC$ denotes the space of possible colors, and we use $L_v$ and $d_v$ for $v\in V$ to denote the color list and defect function of node $v$. Throughout the paper, we will w.l.o.g.\ assume that $\calC\subseteq \mathbb{N}$ is a subset of the natural numbers. When clear from the context, we do not explicitly introduce this notation each time. Further, for convenience, for an integer $k\geq 1$, we use $[k] := \set{1,\dots ,k}$ to denote the set of integers from $1$ to $k$. Further, for a finite set $A$ and an integer $k\geq 0$, we use $2^A$ to denote the power set of $A$ and $\binom{A}{k}$ to denote the set of subsets of size $k$ of $A$. Finally, we use $\log(x):=\log_2(x)$ and $\ln(x) :=\log_e(x)$.


\section{Distributed Oriented List Defective Coloring Algorithms}
\label{sec:OLDC}

\subsection{Fundamentals}
\label{sec:overview}

Our algorithm is based on the list coloring approach of Maus and Tonoyan~\cite{MausT20} that we sketch next. As input, each node of $G=(V,E)$ obtains a color list $L_v \subseteq \calC$ of size $|L_v|\geq \alpha\beta^2 \tau$ for a sufficiently large constant $\alpha>0$ and some integer parameter $\tau>0$. In addition, the nodes are equipped with an initial proper $m$-coloring of $G$. The ``highlevel'' idea is based on the classic one-round $O(\beta^2\log m)$-coloring algorithm of Linial~\cite{Linial1987}. As an intermediate step of the algorithm of \cite{MausT20}, every node $v$ chooses a subset $C_v\subseteq L_v$ of size $|C_v|=\beta\tau$ of its list such that for every outneighbor $u$ of $v$, it holds that $|C_u\cap C_v|<\tau$. To obtain a proper coloring of $G$, node $v$ can then choose a color $x\in C_v$ that does not appear in any of the sets $C_u$ of one of the $\leq\beta$ outneighbors $u$ of $v$. If all nodes have to pick a color from $\set{1,\dots,\alpha\beta^2\tau}$ and if $\tau=O(\log \beta + \log m)$ is chosen sufficiently large, Linial shows that such sets $C_v$ for all nodes $v$ can be computed in $0$ rounds without communication. However, this is not true for the list coloring variant of the problem considered in \cite{MausT20}.

Before we show how the authors of \cite{MausT20} overcome the problems of list, we introduce some terminology. Let $P_0$ be the original list coloring problem that we need to solve and let $P_1$ be the intermediate problem of choosing a set $C_v$ from $\binom{L_v}{\beta \cdot \tau}$ s.t. $|C_v \cap C_u| < \tau$ for all outneighbors $u$ of $v$. As discussed, after solving $P_1$, $P_0$ can be solved in a single round, each node $v$ just needs to learn the sets $C_u$ of all its outneighbors $u$. To solve $P_1$, the authors of \cite{MausT20} introduce a new problem $P_2$, that can be seen as a ``higher-dimensional'' variant of Linial's algorithm. $P_2$ is defined in such a way that it can be solved without communication in $0$ rounds and such that after solving $P_2$, $P_1$ can be solved in a single round. In problem $P_2$, every node $v$ computes a set of possible candidates for the set $C_v$. For a more detailed description we need to define the following conflict relation.

\begin{definition}[Conflict relation $\Psi(\tau',\tau)$]
  \label{def:PsiConflict}
  Let $\tau',\tau>0$ be two parameters. The relation $\Psi(\tau',\tau)\subseteq 2^{2^{\calC}}\times 2^{2^{\calC}}$ is defined as follows. For any $K_1,K_2\in 2^{2^{\calC}}$, we have

  \begin{eqnarray*}
    (K_1,K_2) \in \Psi(\tau',\tau)  \Leftrightarrow \exists\text{ distinct } C_1,\dots,C_{\tau'}\in K_1\ \text{s.t.}\ \\
    \forall i\in \set{1,\dots,\tau'}\, \exists\,  C\in K_2\text{ for which } |C_i\cap C|\geq \tau.
  \end{eqnarray*}
\end{definition}

In a solution to problem $P_2$, every node $v$ outputs a set $K_v\subseteq 2^{\binom{L_v}{\beta\tau}}$ such that $|K_v|=\beta \tau'$ (for some integer parameter $\tau'>0$) and such that for every outneighbor $u$ of $v$, $(K_v,K_u)\not\in\Psi(\tau',\tau)$. Note that this implies that for every outneighbor $u$, $K_v$ contains at most $\tau'-1$ sets $C$ for which there is a set $C'\in K_u$ for which $|C\cap C'|\geq \tau$. Because $K_v$ has size $\beta\tau'$, there exists some $C_v\in K_v$ such that for every $C'\in K_u$ for every outneighbor $u$, we have $|C_v \cap C'|<\tau$. A solution of $P_2$ can be transformed into a solution of $P_1$ in a single round (each node $v$ has to communicate its set $K_v$ to all its outneighbors $u$). Maus and Tonoyan~\cite{MausT20} showed that for appropriate choices of the parameters $\tau$ and $\tau'$, $P_2$ can be solved in $0$ rounds. To see this, consider the following technical lemma, which follows almost directly from two lemmas in \cite{MausT20}.

\begin{lemma}[adapted from \cite{MausT20}]
  \label{lemma:InterfaceToYannic}
  Let $\gamma,\tau,\tau'\geq 1$ be three integer parameters such that $\tau\geq 8\log\gamma + 2\log\log|\calC| + 2\log\log m + 16$ and $\tau' = 2^{\tau - \lceil\log(2e\gamma^2)\rceil}$. For every color list $L\in \binom{\calC}{\ell}$ of size $\ell$ for some $\ell\geq 2e\gamma^2\tau$, we further define $S(L) := \stackedbinom{L}{\gamma\tau}{\gamma\tau'}$.
  Then, there exists $\bar{S}(L)\subseteq S(L)$ such that $|\bar{S}(L)|\geq |S(L)|/2$ and such that for every $K\in \bar{S}(L)$ and every $L'\in \binom{\calC}{\ell}$, there are at most $d_2<\frac{1}{4m|\calC|^{\ell}}\cdot|S(L)|$ different $K'\in S(L')$ such that $(K, K') \in \Psi(\tau',\tau)$ or $(K', K) \in \Psi(\tau',\tau)$. Further, for every $K\in S(L)$ and every $L'\in \binom{\calC}{\ell}$, there are at most $d_2$ different $K'\in S(L')$ for which $(K,K')\in\Psi(\tau',\tau)$. 
\end{lemma}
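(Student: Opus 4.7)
My plan is to follow the two-step structure of the corresponding argument in Maus and Tonoyan~\cite{MausT20}, from which this lemma is adapted. The statement combines an ``outgoing'' count, controlling $K' \in S(L')$ with $(K, K') \in \Psi(\tau', \tau)$ and holding for every $K \in S(L)$, with an ``incoming'' count for $(K', K) \in \Psi(\tau', \tau)$, which can only be enforced on a subfamily $\bar S(L) \subseteq S(L)$ of size at least $|S(L)|/2$. I will prove strengthened versions of both bounds (with $d_2$ replaced by a suitable constant fraction thereof) so that their sum on $\bar S(L)$ still fits inside $d_2$; the factor $4$ in $d_2 < \frac{1}{4m|\calC|^\ell}|S(L)|$ leaves enough slack to absorb both the Markov loss of $2$ and the splitting of the budget between the two directions.

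For the outgoing bound I would fix $K \in S(L)$ and count ``conflict certificates'' $(K', W, \varphi)$, where $K' \in S(L')$, $W \subseteq K$ is a set of $\tau'$ witnesses from $K$, and $\varphi \colon W \to K'$ assigns each $C \in W$ a partner $\varphi(C) \in K'$ with $|C \cap \varphi(C)| \geq \tau$. The number of such certificates is bounded by $\binom{\gamma\tau'}{\tau'}$ choices of $W$, times a bound on $(K', \varphi)$ extending any fixed $W$: pick the image $\varphi(W) \subseteq K'$ using the fact that each of its elements has at most $\binom{\gamma\tau}{\tau}\binom{\ell}{\gamma\tau-\tau}$ pre-images in $\binom{L'}{\gamma\tau}$ (fix a $\tau$-subset of $C$ to lie inside $\varphi(C)$, then complete $\varphi(C)$ arbitrarily inside $L'$), and fill the remaining $\gamma\tau' - |\varphi(W)|$ members of $K'$ arbitrarily from $\binom{L'}{\gamma\tau}$. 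Applying the estimate $\binom{a}{b} \leq (ea/b)^b$ converts this into a polynomial expression that the choices $\tau \geq 8\log\gamma + 2\log\log|\calC| + 2\log\log m + 16$ and $\tau' = 2^{\tau - \lceil\log(2e\gamma^2)\rceil}$ are tuned exactly to dominate.

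For the incoming bound I would double-count pairs $(K, K') \in S(L) \times S(L')$ with $(K', K) \in \Psi(\tau', \tau)$. Observing that $|S(L)| = |S(L')|$ (both equal $\binom{\binom{\ell}{\gamma\tau}}{\gamma\tau'}$, depending only on $\ell$), summing the count from the $K'$ side and applying the outgoing estimate with the roles of $(K, L)$ and $(K', L')$ swapped yields a total that, divided by $|S(L)|$, bounds the average incoming-conflict count per $K \in S(L)$ by a suitable fraction of $d_2$. Markov's inequality then extracts a subfamily $\bar S(L)$ of size at least $|S(L)|/2$ on which the incoming count remains small enough that, combined with the outgoing bound, the two together still stay below $d_2$.

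The main obstacle is the arithmetic bookkeeping: verifying that the chosen $\tau$ and $\tau'$ are just large enough to absorb the factor $m$ (for the initial $m$-coloring, which will later be used for symmetry breaking in Section~\ref{sec:OLDC}) and the factor $|\calC|^\ell$ (for the eventual union bound over all possible neighbour lists $L' \in \binom{\calC}{\ell}$). The key conceptual input is the exponential gap $\tau' = 2^{\Omega(\tau)}/\gamma^2$, which reflects the fact that random $(\gamma\tau)$-subsets of a length-$\ell$ list rarely have $\tau$-intersection, so that even though $K$ contains $\gamma\tau'$ sets, only an exponentially small fraction (relative to $|K|$) can be ``close'' to any fixed family from another list; this is exactly the ``higher-dimensional'' lift of Linial's classical one-round argument that makes the whole scheme work.
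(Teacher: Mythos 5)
Your outline captures the right building blocks from Maus--Tonoyan---the two-stage $d_1/d_2$ counting of conflicting sets and a Markov/averaging step to extract $\bar S(L)$---but there is a real gap in how you plan to produce a \emph{single} subfamily $\bar S(L)$ that works \emph{uniformly} over every $L' \in \binom{\calC}{\ell}$. Your double-counting argument fixes one $L'$, bounds the average number of $K' \in S(L')$ with $(K', K) \in \Psi$ over $K \in S(L)$, and then invokes Markov; but the half of $S(L)$ that Markov discards depends on which $L'$ you fixed. To make the discarded set independent of $L'$ you would either need a union bound over all $\binom{|\calC|}{\ell} \leq |\calC|^\ell$ possible $L'$---which costs a second factor of $|\calC|^\ell$ that the parameter choice of $\tau$ is not tuned to absorb (the single $4m|\calC|^\ell$ budget is already spent on the greedy type-by-type assignment in the zero-round solution of $P_2$)---or a different argument. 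The paper's fix is a reduction: it defines $\bar S(L)$ purely with respect to $L' = L$ (via a degree/Markov argument on the conflict graph on $S(L) \times S(L)$), and then shows that the case $L' \neq L$ is dominated by the case $L' = L$ by a color-replacement map. Replacing every color of $L' \setminus L$ in every set of every $K' \in S(L')$ by colors of $L$ can only increase intersection sizes $|C \cap C'|$ with $C \subseteq L$, hence can only create $\Psi$-conflicts; this yields an injection from conflicting $K' \in S(L')$ into conflicting $K'' \in S(L)$, so the conflict count for any $L'$ is at most the conflict count for $L$ itself. Without this step your proof does not establish the claimed uniformity.

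As a secondary point, be careful with your ``outgoing'' certificate count $(K', W, \varphi)$: since $\varphi$ need not be injective, the image $\varphi(W)$ can have size as small as $1$, so the number of ways to complete $K'$ is $\binom{\binom{\ell}{\gamma\tau} - |\varphi(W)|}{\gamma\tau' - |\varphi(W)|}$, which is largest (and not exponentially small relative to $|S(L)|$) when $|\varphi(W)|$ is small. The paper avoids this by bounding the \emph{other} direction: if $K'$ attacks $K$ (i.e., $K'$ has $\tau'$ distinct elements each conflicting with some member of $K$), then $K'$ necessarily contains $\tau'$ distinct elements of the ``conflict pool'' $X$ of size at most $k' d_1$, which is exactly the constraint needed to make $\binom{k' d_1}{\tau'} \binom{\binom{\ell}{k}-\tau'}{k'-\tau'}$ the right count. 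You should rework your certificate argument along these lines rather than parametrizing by an arbitrary (possibly non-injective) $\varphi$.
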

\begin{proof}
  The proof is based on Lemma 3.3 and 3.4 in \cite{MausT20}. The details of the proof are given in \cref{sec:adaptionOf}.
\end{proof}

Let us sketch how \Cref{lemma:InterfaceToYannic} implies that for appropriate choices of the parameters, $P_2$ can be solved  without communication. In the following, we set the parameters of \Cref{lemma:InterfaceToYannic} as $\gamma:=\beta$ and $\tau$ and $\tau'$ as given by the lemma statement. Assume that initially, every node $v$ has a list $L_v$ of size $\ell$, where $\ell\geq 2e \beta^2 \tau$. We define the type $T_v$ of a node as the tuple $(c,L_v)$, where $c$ is the color of $v$ in the initial proper $m$-coloring of $G$ and $L_v\in \binom{\calC}{\ell}$ is the color list of $v$. Let $T_1, \dots, T_t$ be a fixed ordering of the $t = m \binom{|\calC|}{\ell} \leq m |\calC|^\ell$ types and let $L_i$ be the color list of nodes of type $T_i$.  If we assign a set $K_i\in S(L)$ to each type $T_i$ so that for any two sets $K_i$ and $K_j$, $(K_i,K_j)\not\in\Psi(\tau', \tau)$ and $(K_j, K_i)\not\in\Psi(\tau', \tau)$, then if each node $v$ of type $T_i$ (for $i\in \set{1,\dots,t}$) outputs $K_i$, this assignment solves problem $P_2$. We assign the sets $K_i$ greedily, where for every type $T_i$, we choose some $K_i\in \bar{S}(L_i)$ such that $\bar{S}(L_i)$ is the subset of $S(L_i)$ that is guaranteed to exist by \Cref{lemma:InterfaceToYannic}. Assume for any $i \geq 1$ each type $T_j$ for $j\in \set{1,\dots,i-1}$ already picked $K_j$. Then type $T_i$ will pick some list $K_i \in \bar{S}(L)$ that does not conflict with choices of the $i-1$ previous types. By the lemma, for any type $T_j$, $j\neq i$, there are at most $d_2<\frac{1}{4m|\calC|^{\ell}}\cdot|S(L_i)|\leq \frac{1}{2t}\cdot|\bar{S}(L_i)|$ sets in $\bar{S}(L_i)$ that conflict. Because there are only $t$ types, we can always choose an appropriate $K_i$ that does not conflict with already assigned sets $K_j$ for $j<i$. Consequently, $P_2$ can be solved in $0$ rounds, and thus the original list coloring problem can be solved in $2$ rounds. A more formal and also more general proof of this zero-round solvability of $P_2$ is given in Lemma 2.1 of \cite{MausT20}.


\subsection{Basic Oriented List Defective Coloring Algorithm}
\label{sec:basicAlgo}

In the following, we first give a basic algorithm that solves a slightly generalized version of the OLDC problem. Concretely, the algorithm assigns a color $x_v\in L_v$ with defect $d_v(x_v)$ to each node $v$ such that at most $d_v(x_v)$ outneighbors $w$ of $v$ choose a color $x_w$ with $|x_v - x_w| \leq g$, where $g\geq 0$ is some given parameter. Recall that we assume that all colors are integers and therefore, the value $|x_v-x_w|$ is defined. Note that for $g=0$, this is the OLDC problem as defined in \Cref{def:LDC}.  We give a basic algorithm for this more general problem because we will need it as a subroutine in the algorithm for proving our main technical result, \Cref{thm:def_local_list_coloring}. The steps for solving the generalized OLDC problem are similar to the approach described in \Cref{sec:overview}. We, however, in particular have to adapt the algorithm to handle the case where each node comes with an individual list size.

\para{A single defect per node.} At the core of our basic (generalized) OLDC algorithm is an algorithm that solves the following weaker variant of the problem. Instead of having color-specific defects, every node $v$ has a fixed defect value $d_v\geq 0$, i.e., we have $d_v(x)=d_v$ for all $x\in L_v$. Based on an algorithm for this \textit{single-defect} case, one can solve the general OLDC problem by using a reduction explained in the proof of \cref{lemma:SimpleOLDC}. For that reason, assume during the following section that each node $v$ has three given inputs, a color list $L_v$, a defect value $d_v \geq 0$ and the number of outneighbors $\beta_v$. For each node $v$, the algorithm then requires lists of size $|L_v| \geq \alpha(\beta_v/(d_v+1))^2\cdot \tau$ for some constant $\alpha>0$ and some parameter $\tau>0$. Note that the required list size only depends on the ratio between $\beta_v$ and $d_v+1$ and not on their actual values. In the following section, we therefore do not work with the defect value $d_v$ or the outdegree $\beta_v$ of some node, but with a value $\gamma_v$ that is essentially equal to the ratio $\beta_v/(d_v+1)$ such that the list size of $v$ is proportional to $\gamma_v^2$. More formally, we partition the nodes into so-called $\gamma$-classes such that nodes in the same class have the same value $\gamma_v$ and hence a similar $\beta_v/(d_v+1)$ ratio. The details appear in the subsequent section.

\subsubsection{\boldmath\texorpdfstring{$\gamma$}{ Gamma}-Classes and Parameters}
\label{sec:gammaclasses}
Each node $v$ comes with some parameter $\gamma_v = 2^i$ for some $i \in [h]$, where $h$ is a parameter. We call $i$ the $\gamma$-class of $v$. Since these $\gamma$-classes have a natural order, we call a node $u$ to be in a \textit{lower} (respectively \textit{higher}) $\gamma$-class than $v$ if $i_u < i_v$ (respectively $i_u > i_v$). We define the following two parameters, which both depend on the maximum $\gamma$-class index $h$, the color space $\calC$, and the initial (proper) $m$-coloring of $G$.
\begin{align}\label{eq:tau}
  \tau(h, \calC, m) & := \lceil 8h + 2\log\log |\calC| + 2\log\log m + 16 \rceil,\\
  \tau'(h, \calC, m) & := 2^{\tau(h, \calC, m) - \lceil 2h + \log (2e) \rceil}\label{eq:tauprime}.
\end{align}
Note that these choices are consistent with the parameter setting in \cref{lemma:InterfaceToYannic}. If clear from the context, we omit the parameters $h, \calC$ and $m$ for simplicity and denote $\tau(h, \calC, m)$ by $\tau$ and $\tau'(h, \calC, m)$ by $\tau'$. \par

A node $v$ of $\gamma$-class $i_v$ is equipped with a color list $L_v$ of size $|L_v| = \ell_{i_v} := \alpha \cdot 4^{i_v} \tau (2g+1) = \alpha \gamma_v^2 \tau (2g+1)$ for some sufficiently large $\alpha > 0$ and $g \geq 0$. \par 
Because we solve a generalized version of the OLDC problem, we have to use a more general form for our conflict relation $\Psi$. Let $x \in \calC$ be a color and $C \subseteq \calC$ a set of colors, we denote the number of conflicts of $x$ with colors in $C$ regarding some given $g \geq 0$ by $\mu_g(x, C) := |\{c \in C \ | \ |x - c| \leq g \}|$. 

\begin{definition}[$\tau \& g$-conflict]
  \label{def:tauAndg}
  Two lists $C, C' \subseteq \mathcal{C}$ do  $\tau \& g$-conflict if $\sum_{x \in C} \mu_g(x, C') \geq \tau$.
\end{definition}

Note that $\sum_{x \in C} \mu_g(x, C') = \sum_{x \in C'} \mu_g(x, C)$ is always true. The $\Psi$ conflict relation from \Cref{def:PsiConflict} is adapted accordingly.

\begin{definition}[Conflict relation $\Psi_g(\tau',\tau)$]
  \label{def:PsiConflict2}
  Let $\tau',\tau>0$ be two parameters. The relation $\Psi_g(\tau',\tau)\subseteq 2^{2^{\calC}}\times 2^{2^{\calC}}$ is defined as follows. For any $K_1,K_2\in 2^{2^{\calC}}$, we have
  \begin{eqnarray*}
    (K_1,K_2) \in \Psi_g(\tau',\tau) \Leftrightarrow \exists\text{ distinct } C_1,\dots,C_{\tau'}\in K_1\ \text{s.t.}\ \\
                                     \forall i\in \set{1,\dots,\tau'}\,\exists\, C\in K_2 \text{ for which } C_i \text{ and } C \text{ do } \tau \& g \text{-conflict}.
  \end{eqnarray*}
\end{definition}
  
We can now adapt the definitions of problem $P_1$ and $P_2$ to what we need for the generalized OLDC problem. We define $k_i := 2^{i} \cdot \tau$ and $k' := 2^{h} \cdot \tau'$. Subsequently, we denote the $\gamma$-class of a node $v$ by $i_v$.
\newpage
\begin{definition}[Problems $P_1$ and $P_2$]\label{def:P1P2}\ 
  \begin{itemize}
  \item[\bf\boldmath$P_1$:] Every node $v$ has to output $C_v \subseteq L_v$ of size $|C_v| = k_{i_v}$ s.t.\ there are at most $d_v/2$ outneighbors $u$ of $v$ s.t.\ $u$ is in $\gamma$-class $i_u \leq i_v$, $C_u$ and $C_v$ do $\tau \& g$-conflict.
  \item[\bf\boldmath$P_2$:] Every node $v$ has to output a list $K_v \in 2^{\binom{L_v}{k_{i_v}}}$ of size $|K_v| = k'$ s.t.\ for each outneighbor $u$ in $\gamma$-class $i_u \leq i_v$, $(K_v, K_u) \not \in \Psi_g(\tau', \tau)$.
  \end{itemize}
\end{definition}

\subsubsection{Zero-Round Solution}

In this section we show that problem $P_2$ can be solved without communication. The high-level idea of the first step is to adapt \cref{lemma:InterfaceToYannic} such that we can apply it even if the sizes of the initial color lists differ (see \cref{lemma:InterfaceWithClasses}). We start by using a simple trick to make sure that the color list $L_v$ of each node $v$ does not contain colors that are \textit{close} to each other i.e., there are no distinct colors $x_1, x_2 \in L_v$ s.t.  $|x_1 - x_2| \leq g$. After doing this, the conflict relation $\Psi_g$ behaves almost the same as $\Psi$ in the fundamental problem. \par

We restrict color list $L_v$ to contain only colors that are from the same congruence class modulo $2g+1$ (recall that colors are from $\mathbb{N}_0$). In this way, for any two color lists $L_v$ and $L_u$ of colors, each color in $L_v$ can only $\tau \& g$-conflict with a single color in $L_u$. Formally, for every subset $P\subseteq \calC$ of the colors and every $a\in \mathbb{Z}_{2g+1}$, we define
\[
  \forall a\in \mathbb{Z}_{2g+1}\,:\,P^a:=\set{x \in P: x \equiv a \pmod{2g+1}}.
\]
For every list of colors $L$, we further define
\begin{align*}
  S_i^a(L) := \stackedbinom{L^a}{k_i}{k'}\text{ and }
  S_i(L) := \bigcup_{a\in \mathbb{Z}_{2g+1}} S_i^a(L).
\end{align*}  
We also define $\ell_i' := \ell_i/(2g+1)$ and
\begin{align*}
  \mathcal{L}_{i, a} :=  \binom{\calC^a}{\ell_i'} \text{ and }\mathcal{L}_{i} := \bigcup_{0 \leq a < 2g+1} \mathcal{L}_{i, a}.
\end{align*}

\begin{lemma}
    \label{lemma:InterfaceWithClasses}
    Let $i$ be a $\gamma$-class and let $L\in \mathcal{L}_{i}$. 
    Then, there exists a $\bar{S}_i(L)\subseteq S_i(L)$ such that $|\bar{S}_i(L)|\geq |S_i(L)|/2$ and such that for every $K\in \bar{S}_i(L)$ and every $L'\in \mathcal{L}_{i}$, there are at most $d_2(i)<\frac{1}{4m|\calC|^{\ell_i'}}\cdot|S_i(L)|$ different $K' \in S_i(L')$ such that $(K, K') \in \Psi_g(\tau', \tau)$ or $(K', K) \in \Psi_g(\tau', \tau)$. 
    Further, let $j \leq i$ and $L'\in \mathcal{L}_{j}$. For all $K' \in S_j(L')$, there are at most $d_2(i)$ $K\in S_i(L)$ such that $(K, K') \in \Psi_g(\tau', \tau)$. 
\end{lemma}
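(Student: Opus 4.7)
My plan is to deduce the lemma by adapting the probabilistic/counting argument behind \Cref{lemma:InterfaceToYannic}. Two new aspects must be handled: (a) the generalized $\tau \& g$-conflict, which accounts for colors within distance $g$ rather than equal colors, and (b) the class-indexed structure, where the subset size $k_i = 2^i\tau$ and list size $\ell_i'=\alpha 4^i\tau$ depend on the $\gamma$-class $i$, while the subset count $k'=2^h\tau'$ and the thresholds $\tau,\tau'$ are tuned to the worst-case class $h$.

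For (a), the restriction of each color list to a single residue class modulo $2g+1$ converts $\tau \& g$-conflicts into standard $\tau$-intersections. Concretely, if $L\in\mathcal{L}_{i,a}$ and $L'\in\mathcal{L}_{i,a'}$ (or $\mathcal{L}_{j,a'}$), then because consecutive elements of each $\calC^b$ are spaced exactly $2g+1$ apart, every $x\in\calC^a$ has at most one $y\in\calC^{a'}$ with $|x-y|\leq g$, giving a partial bijection $\phi_{a,a'}:\calC^a\to\calC^{a'}\cup\{\bot\}$. For $C\subseteq L$ and $C'\subseteq L'$ one then has $\sum_{x\in C}\mu_g(x,C') = |\phi_{a,a'}(C)\cap C'|$, so $C$ and $C'$ do $\tau \& g$-conflict iff $|\phi_{a,a'}(C)\cap C'|\geq\tau$. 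Pulling each $C'\in K'$ back to $\calC^a$ via $\phi_{a,a'}^{-1}$ makes the relation $\Psi_g(\tau',\tau)$ coincide with the standard $\Psi(\tau',\tau)$ of \Cref{lemma:InterfaceToYannic}, restricted to the subspace $\calC^a\subseteq\calC$.

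For (b), I would rerun the Maus--Tonoyan counting argument with parameters $(k_i,k',\ell_i',\tau,\tau')$ in place of the symmetric $(\gamma\tau,\gamma\tau',\ell,\tau,\tau')$. Sampling $K\in S_i(L)$ uniformly, the event $(K,K')\in\Psi(\tau',\tau)$ requires $K$ to contain at least $\tau'$ elements of the ``bad'' set $B(K')=\{C\in\binom{L}{k_i}:\exists\,C'\in K',\ |C\cap\phi_{a,a'}^{-1}(C')|\geq\tau\}$. A standard intersection bound controls $|B(K')|/\binom{\ell_i'}{k_i}$ in terms of $\tau$, and a hypergeometric-style estimate controls the probability that a uniform $k'$-subset of $\binom{L}{k_i}$ contains $\tau'$ members of $B(K')$. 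Summing over $K'\in S_i(L')$, over the at most $(2g+1)\binom{|\calC|}{\ell_i'}$ choices of $L'\in\mathcal{L}_i$, and over the two orientations $(K,K')$ and $(K',K)$ of the $\Psi_g$ relation, the total expected conflict count is at most $d_2(i)/2$, and a Markov argument yields $\bar{S}_i(L)$ of size at least $|S_i(L)|/2$. For the second part ($j\leq i$, fixed $K'\in S_j(L')$), the same bounds apply deterministically: one shows $|\{K\in S_i(L):(K,K')\in\Psi_g(\tau',\tau)\}|\leq\binom{|B(K')|}{\tau'}\binom{\binom{\ell_i'}{k_i}}{k'-\tau'}\leq d_2(i)$.

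The main obstacle will be verifying that this counting survives the full class-indexed asymmetry, i.e., that the exponential factor $\bigl(|B(K')|/\binom{\ell_i'}{k_i}\bigr)^{\tau'}$ dominates all polynomial overheads (the union bound of size $\approx(2g+1)\binom{|\calC|}{\ell_i'}$, the subset count $k'$, and the mismatch $k_i\neq k_j$ in the second part) uniformly over $i\leq h$. The $8h$ term in $\tau$ from \eqref{eq:tau} and the $4^i$ scaling in $\ell_i'$ are precisely tuned to this end: the exponential slack grows with the class while the uniform count $k'=2^h\tau'$ is compensated by the larger list $\ell_i'$.
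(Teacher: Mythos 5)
Your reduction of $\Psi_g$ to the standard $\Psi$ via the residue-class map $\phi_{a,a'}$ is exactly what the paper does (there it is phrased as ``rounding'' $[C]_b$), and your handling of the second claim -- fixing $K'\in S_j(L')$ with $j\leq i$ and directly bounding the number of $K\in S_i(L)$ with $(K,K')\in\Psi_g(\tau',\tau)$ by $\binom{|B(K')|}{\tau'}\binom{\binom{\ell_i'}{k_i}}{k'-\tau'}$, using that $k_j\leq k_i$ forces $|B(K')|\leq k'\cdot d_1(i)$ -- is a clean and valid alternative to the paper's route, which instead pads $L'$ (and $K'$) up to $\gamma$-class $i$ and reduces to the symmetric $i=j$ case.

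The construction of $\bar{S}_i(L)$, however, contains a genuine gap. You claim that summing over $K'\in S_i(L')$, over the $\approx(2g+1)\binom{|\calC|}{\ell_i'}$ choices of $L'\in\mathcal{L}_i$, and over the two orientations gives a total expected conflict count at most $d_2(i)/2$ for a uniform $K\in S_i(L)$. That cannot hold: already for a single $L'$ the expected count is on the order of $d_2(i)/2$, so summing over all $L'\in\mathcal{L}_i$ inflates it by a factor $|\mathcal{L}_i|$, and the Markov step then only bounds the aggregate over all $L'$, not the per-$L'$ count that the lemma promises for every $K\in\bar{S}_i(L)$. There is also no slack to repair this by sharpening $d_2(i)$: the bound $d_2(i)<|S_i(L)|/(4m|\calC|^{\ell_i'})$ is tuned (through the $\log\log|\calC|$ and $\log\log m$ terms in $\tau$) to be just small enough for the greedy choice over $\leq 2m|\calC|^{\ell_i'}$ types in \cref{lemma:SolvingNewP2}, and it cannot absorb another $\binom{|\calC|}{\ell_i'}$ factor. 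What is actually needed -- and what the paper does -- is to run the Markov argument only for $L'=L$ (this is \cref{claim:GoodSet}) and then extend to arbitrary $L'\in\mathcal{L}_i$ by a monotonicity observation: each color of $L'\setminus L$ can be injectively replaced inside the members of $K'$ by a color of $L$, and this replacement can only create new $\Psi$-conflicts with $K$. Hence the per-$L$ conflict count of each $K\in\bar{S}_i(L)$ dominates its per-$L'$ count for every $L'$, and a single good set works uniformly over $\mathcal{L}_i$. Your outline needs this replacement (or an equivalent padding) step to close.
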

\begin{proof}
  This lemma relies on \cref{lemma:InterfaceToYannic}. However, the conflict relation $\Psi$ in \cref{lemma:InterfaceToYannic} is a special case of $\Psi_g$. Thus, we will show that whenever we consider color lists that exclusively contain colors of the same residual class, we can not construct more conflicts regarding $\Psi_g$ as we can in the setting of \cref{lemma:InterfaceToYannic} regarding $\Psi$.
  \begin{claim}
      For any $i, j$, let $L_1 \in \mathcal{L}_i$ and $L_2 \in \mathcal{L}_j$ be some arbitrary color lists.
      Let $C_1 \subseteq L_1^a$ and $C_2 \subseteq L_2^b$ for some $0 \leq a, b \leq 2g+1$ and let $[C_1]_b$ be a set that contains the colors of $C_1$ rounded to the closest value congruent to $b \mod 2g+1$. $C_1$ and $C_2$ do $\tau \& g$-conflict iff $[C_1]_b$ and $C_2$ do $\tau \& 0$-conflict  (i.e., $|[C_1]_b \cap C_2| \geq \tau$). 
  \end{claim}
  \begin{proof} \renewcommand{\qedsymbol}{$\blacksquare$}
      For all distinct colors $x, x' \in C_1$ we have $|x - x'| > 2g+1$. Thus, $x$ and $x'$ will be rounded to different values in $[C_1]_b$ which implies $|C_1| = |[C_1]_b|$. Let $x_1 \in C_1$, $x_2 \in C_2$ and $x_1' \in [C_1]_b$ be the rounded value of $x_1$. If $|x_1 - x_2| \leq g$, then there is no value in the residual class of $b$ that is closer to $x_1$ than $x_2$ is and hence $x_1' = x_2$. If on the other hand $|x_1 - x_2| > g$, there is a closer value than $x_2$ and hence $x_1' \not = x_2$. By construction, color $x_1$ conflicts with at most $1$ color from $C_2$ (otherwise not all elements in $C_2$ would be congruent to $b$).
      Hence, the statement of the lemma follows from $\sum_{x \in C_2} \mu_g(x, C_1) = \sum_{x \in C_2} \mu_0(x, [C_1]_b)$. 
  \end{proof}
  \begin{claim}
      \label{claim:equivalenceOfPsi}
      For any $L \subseteq \binom{\calC}{l_i}$ let $K \in S_i^a(L)$ and $K' \in S_i^b(L)$. For $K = \{C_1, ..., C_{k'}\}$ we define $[K]_b := \{[C_1]_b, \dots, [C_{k'}]_{b}\}$. If $(K, K') \in \Psi_g(\tau', \tau)$ then $([K]_b, K') \in \Psi(\tau', \tau)$. If $(K', K) \in \Psi_g(\tau', \tau)$ then $(K', [K]_b) \in \Psi(\tau', \tau)$. 
  \end{claim}
  \begin{proof}\renewcommand{\qedsymbol}{$\blacksquare$}
      If $(K, K') \in \Psi_g(\tau', \tau)$ there exists the distinct sets $C_1, \dots, C_{\tau'}\in K$ and for all $0 \leq i \leq \tau'$ there is a $C \in K'$ s.t. $C$ and $C_i$ do $\tau \& g$-conflict. Combined with above's claim we have that for each set $[C_1]_b, \dots, [C_{\tau'}]_b$ there exists $C \in K'$ s.t. $|C \cap [C_i]_b| \geq \tau$. By definition of $[K]_b$ we have $[C_1]_b, \dots, [C_{\tau'}]_b \in [K]_b$ and hence $([K]_b, K') \in \Psi(\tau', \tau)$. The second implication follows by the same argument.
  \end{proof}
  The following lines go along the proof of \cref{lemma:InterfaceToYannic} (since we fix some $\gamma$-class $i$ we basically have the same setting). We start by defining the conflict degree $d_2(i)$ in the same way as we did there:
  \begin{align*}
    d_1(i) := \binom{k_i}{\tau} \binom{\ell_i' - \tau}{k_i - \tau}, \quad
    d_2(i) := 4 \cdot \binom{k' \cdot d_1(i)}{\tau'} \cdot \binom{\binom{\ell_i'}{k_i} - \tau'}{k' - \tau'}
  \end{align*}
  By \cref{claim:equivalenceOfPsi}, given $K \in S_i(L)$ the number of different $K' \in S_i^b(L')$ for some arbitrary $0 \leq b \leq (2g+1)$ s.t. $(K, K') \in \Psi_g(\tau', \tau)$ is at most the number of different $K' \in S_i^b(L')$ s.t. $([K]_b, K') \in \Psi(\tau', \tau)$ and thus is upper bounded by $d_2(i)/4$ (due to $|L| = |L'| = l_i'$ we can go along the lines of \cref{sec:adaptionOf}). Defining $\bar{S}_i(L)$ s.t. for each element $K \in \bar{S}_i(L)$ there are at most $d_2(i)$ many $K' \in S_i^b(L')$ with $(K, K') \in \Psi_g(\tau', \tau)$ or $(K', K)\in \Psi_g(\tau', \tau)$, by \cref{claim:GoodSet} we have $|\bar{S}_i(L)| \geq |S_i(L)|/2$. Furthermore, for each $K \in \bar{S}_i(L)$ the number of different $K' \in S_i^b(L')$ with $(K, K') \in \Psi_g(\tau', \tau)$ or $(K', K) \in \Psi_g(\tau', \tau)$ is at most the number of $K' \in S_i^b(L')$ s.t. $([K]_b, K') \in \Psi(\tau', \tau)$ or $(K', [K_b]) \in \Psi(\tau', \tau)$ and thus at most $d_2(i)$. \par
  We will now show the last part of the Lemma: Let $j \leq i$ and $L'\in \mathcal{L}_{j}$. For each $K' \in S_j(L')$, there are at most $d_2(i)$ $K\in S_i^b(L)$ such that $(K, K') \in \Psi_g(\tau', \tau)$. Note that for $i=j$ this statement already follows from the above analysis. Let $L''$ be a superset of $L'$ that is filled up with arbitrary colors $x \equiv b \mod 2g+1$ such that $|L''| = |L|$. By that, $L''$ and $L$ imply the same $\gamma$-class $i$ and hence we have that there are at most $d_2(i)$ many $K'' \in S_i(L'')$ with $(K, K'') \in \Psi_g(\tau', \tau)$. By construction, for each $K'\in S_j(L')$ there exists a $K'' \in S_i(L'')$ that contains all the color lists in $K'$ but extends them by some arbitrary colors.  
  So if $(K, K') \in \Psi_g(\tau', \tau)$ then $(K, K'')\in \Psi_g(\tau', \tau)$. Thus, there are at most as many sets $K'\in S_j(L')$ that are in conflict with $K$ than $K''\in S_i(L'')$ that are in conflict with $K$, which ends the proof. 
\end{proof}

The next lemma shows the $0$-round solvability of $P_2$. Note that the ideas we use here are based on those of the $0$-round solution of \cref{sec:overview}. 

\begin{lemma}
    \label{lemma:SolvingNewP2}
    $P_2$ can be solved without communication given an initial $m$-coloring.
\end{lemma}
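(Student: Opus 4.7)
I would mirror, almost verbatim, the zero-round argument sketched at the end of \Cref{sec:overview} for the fundamental case, lifting every object to the $\gamma$-class-aware, $\Psi_g$-aware setting of \Cref{def:P1P2}. Define the \emph{type} of a node $v$ as $T(v) := (c_v, L_v)$, where $c_v \in [m]$ is its color in the input $m$-coloring and $L_v$ is its color list; since $|L_v| = \ell_{i_v}$, the list $L_v$ already encodes the $\gamma$-class $i_v$. The set of all possible types $[m] \times \bigcup_{i \in [h]} \mathcal{L}_i$ is finite and depends only on globally known parameters, so every node can simulate the following procedure locally without exchanging any messages.

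Fix any canonical ordering of all types in which the $\gamma$-class is non-decreasing. Processing types in this order, for each type $T$ of $\gamma$-class $i$ with list $L$ I would greedily pick a set $K_T \in \bar{S}_i(L)$ such that for every previously processed type $T'$ of class $i' \leq i$ the relation $(K_T, K_{T'}) \not\in \Psi_g(\tau',\tau)$ holds. Each node $v$ then outputs $K_v := K_{T(v)}$. Correctness of $P_2$ is then immediate: if $u$ is an outneighbor of $v$ with $i_u \leq i_v$, then $T(u)$ precedes or equals $T(v)$ in the ordering, so the greedy rule has already enforced $(K_{T(v)}, K_{T(u)}) \not\in \Psi_g(\tau',\tau)$.

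The heart of the argument is showing that $\bar{S}_i(L)$ is never exhausted. I would invoke \Cref{lemma:InterfaceWithClasses} in two ways. For each previously processed type $T'$ of strictly lower class $i' < i$, the ``Further'' part of the lemma bounds by $d_2(i)$ the number of $K \in S_i(L)$ that conflict with the fixed $K_{T'}$ in the single direction that matters. For each previously processed $T'$ of the same class $i$, applying the first part of the lemma with the roles of $L$ and $L_{T'}$ swapped and using that $K_{T'} \in \bar{S}_i(L_{T'})$ by construction bounds by $d_2(i)$ the number of $K \in S_i(L)$ that conflict with $K_{T'}$ in either direction, which is what correctness demands for a same-class pair. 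Summing over all $t_{\leq i} \leq m \sum_{j \leq i} |\mathcal{L}_j| = O\big(m(2g+1)|\calC|^{\ell_i'}\big)$ already-processed types (the sum is dominated by its last term because $\ell_j'$ grows geometrically in $j$) and plugging in $d_2(i) < \tfrac{1}{4m|\calC|^{\ell_i'}}|S_i(L)|$ together with $|\bar{S}_i(L)| \geq |S_i(L)|/2$ gives a strict inequality, provided the constant $\alpha$ hidden in $\ell_i = \alpha \cdot 4^i \tau(2g+1)$ is chosen large enough to absorb the $O(h \cdot (2g+1))$ overhead in the number of types.

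The main obstacle is exactly this last counting step: the slack $\tfrac{1}{4m|\calC|^{\ell_i'}}$ guaranteed by \Cref{lemma:InterfaceWithClasses} is tailored to the single-class, $g=0$ setting, whereas here one pays an extra $h \cdot (2g+1)$ factor from the number of $\gamma$-classes and the number of residue classes modulo $2g+1$. Making the counting quantitative amounts to chasing constants through the definitions of $\tau$ in \eqref{eq:tau}, $\tau'$ in \eqref{eq:tauprime}, and $\ell_i$; the slack is ultimately provided by the additive terms $8h$ and $2\log\log|\calC| + 2\log\log m$ inside $\tau$, which make $d_2(i)$ small enough that the overhead is tolerated. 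Everything else in the proof is a direct generalization of the overview argument.
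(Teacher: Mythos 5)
Your argument mirrors the paper's proof, but it omits the residue-class restriction step that makes \Cref{lemma:InterfaceWithClasses} applicable. You define the type of $v$ as $(c_v, L_v)$ with $|L_v| = \ell_{i_v}$, and then assert that the type space is $[m]\times\bigcup_{i}\mathcal{L}_i$; but $\mathcal{L}_i$ consists of lists of size $\ell_i' = \ell_i/(2g+1)$ contained in a single residue class modulo $2g+1$, so $L_v\notin\mathcal{L}_{i_v}$ and the assertion is inconsistent with your own definition of type. In the paper each node first picks the residue class $a_v$ maximizing $|L_v^{a_v}|$, discards every other color (pigeonhole guarantees $|L_v^{a_v}|\geq\ell_{i_v}'$), and only then forms the type $(c_v, L_v^{a_v})$; this is exactly where the factor $(2g+1)$ built into $\ell_i$ is spent, and it is what places the type's list in $\mathcal{L}_{i_v}$ as required by the hypothesis of \Cref{lemma:InterfaceWithClasses}. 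Without this step, neither the lemma's conflict bound $d_2(i)$ nor the estimate $t \le 2m|\calC|^{\ell_h'}$ applies to the objects you are manipulating.

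A second, smaller imprecision: you claim that $i_u\le i_v$ implies $T(u)$ precedes or equals $T(v)$ in the ordering, but this is false when $i_u=i_v$, since the ordering only respects $\gamma$-classes. Because $\Psi_g(\tau',\tau)$ is not symmetric, a greedy rule enforcing only $(K_T, K_{T'})\notin\Psi_g(\tau',\tau)$ against earlier same-class $T'$ does not give the required $(K_v, K_u)\notin\Psi_g(\tau',\tau)$ when $T(v)$ happens to be processed before $T(u)$. Your own counting already invokes the bidirectional first clause of \Cref{lemma:InterfaceWithClasses} for same-class pairs, so the fix is simply to have the greedy rule forbid both directions of conflict against earlier same-class types; the bound on the number of forbidden choices is unchanged. (The paper's phrasing of the greedy step has the same one-directional form, so this imprecision is shared, but your write-up makes the incorrect ordering claim explicit where the paper leaves it implicit.)
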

\begin{proof}
  Each node $v$ computes a value $a_v$ satisfying $0 \leq a_v < 2g+1$ that maximizes the size of $L_v^{a_v}$. By the pigeonhole principle $|L_v^{a_v}| \geq |L_v|/(2g+1) = \ell_{i_v}'$ where $i_v$ is the $\gamma$-class of $v$. Colors in $L_v \setminus L_v^{a_v}$ will be ignored in the subsequent steps. Note that this is necessary to apply \cref{lemma:InterfaceWithClasses}. \par
  We define the type $T_v$ of a node as the tuple $(c, L_v^{a_v})$ where $c$ is the color of $v$ in an initial $m$-coloring. Further, we can assign each type a $\gamma$-class since the $\gamma$-class depends on the corresponding list size (note that the $\gamma$-class of a node can be determined by the size of its color list. In \cref{sec:algOLDCSingleDef} this is formulated in more detail). \par
  Let $T_1, \dots, T_t$ be a fixed ordering of the types where we require that for any types $T_i = (c, L)$ and $T_j = (c', L')$ with $j < i$ we have $|L'| \leq |L|$. The number of types $t$ can be upper bounded by
  \begin{eqnarray*}
    t & = & m \sum_{i=1}^h \sum_{a=0}^{2g} |\mathcal{L}_{i, a}| \ \leq\ m \sum_{i=1}^h \binom{|\mathcal{C}|}{l_i'}\\
    & \leq & m \left(|\mathcal{C}|^{l_h'} + \sum_{i=1}^{h-1} |\mathcal{C}|^{l_i'} \right) \ \leq\  m \left(|\mathcal{C}|^{l_h'} + \frac{h-1}{|\mathcal{C}|^{3l_{h-1}'}}|\mathcal{C}|^{l_h'} \right) \ <\  2m|\calC|^{l_h'},
  \end{eqnarray*}
  where $h$ is the largest $\gamma$-class. We assign sets $K_i$ greedily to types i.e., for a given type $T_i = (c, L)$ let $1 \leq r \leq h$ be the $\gamma$-class implied by $L$, we choose some $K_i \in \bar{S}_r(L)$ (guaranteed to exist by \cref{lemma:InterfaceWithClasses}). Assume for any $i \geq 1$, each type $T_j$ for $j \in [i-1]$ already picked a $K_j$. Then $T_i = (c, L)$ has to pick some $K_i$ that does not conflict with any of the $i-1$ previous types, i.e., $(K_i, K_j) \not \in \Psi_g(\tau', \tau)$ for all $j \in [i-1]$. By \cref{lemma:InterfaceWithClasses}, there are at most $d(r) < \frac{|S_r(L)|}{4m|\mathcal{C}|^{l_r'}}$ conflicts to any type $T_j$ with $j < i$. Thus, the maximum number of conflicts of $T_i$ is $m \sum_{z=1}^r \sum_{a=0}^{2g} |\mathcal{L}_{z, a}| d_2(r) \leq 2m|\calC|^{l_r'}d_2(r) < |S_r(L)|/2 \leq |\bar{S}_r(L)|$. Hence, there is some conflict-free $K_i \in \bar{S}_r(L)$ that can be picked by $T_i$. By that, $P_2$ can be solved in $0$ rounds.
\end{proof}

\subsubsection{Algorithm} 
\label{sec:algOLDCSingleDef}
Assume each node is equipped with a color list of size $|L_v| \geq \alpha \left(\beta_v/(d_v+1) \right)^2 \tau (2g+1)$ and some defect value $d_v > 0$. The $\gamma$-class of a node $v$ is defined as the smallest $i_v$ s.t. $2^{i_v} \geq \frac{2\beta_v}{d_v+1}$. Based on its individual $\gamma$-class, each node solves $P_2$ (\cref{lemma:SolvingNewP2}) and forwards the solution to the neighbors. The knowledge gained by that is used to solve $P_1$ without additional communication. In more detail, node $v$ comes with list $K_v$ s.t. $(K_v, K_u) \not \in \Psi_g(\tau', \tau)$ for any outneighbor $u$ with $i_u \leq i_v$. This implies that at most $\tau'-1$ lists $C \in K_v$ do $\tau \& g$-conflict with some list in $K_u$. Hence, there are at most $\beta_v(\tau'-1)$ many $C \in K_v$ that $\tau \& g$-conflict. By the pigeonhole principle there is some $C_v \in K_v$ with at most\footnote{By definition of the $\gamma$-class, we have $(d_v + 1)/2 \geq \beta_v/2^h$} $\beta_v(\tau'-1)/k' < (d_v+1)/2$ many such conflicts within all the $C_u$ of outneighbors $u$ of smaller $\gamma$-classes. Hence, $C_v$ is a valid solution for $P_1$ that is then forwarded to its neighbors. \par

To solve the list coloring problem itself we iterate through the $\gamma$-classes in descending order. Each node $v$ has to decide on a color $x \in C_v$ s.t.\ in the end at most $d_v$ outneighbors are colored with the same color $x$. Let us fix a node $v$. By design, in the iteration $v$ decides on a color, all outneighbors of higher $\gamma$-classes already decided on a color and $v$ knows the $P_1$ solution lists $C_u$ of the outneighbors $u$ with $i_u \leq i_v$. Let $f_v(x)$ be the frequency of color $x$ within the outneighbors $u$ of $v$ i.e., the sum over all occurrences of $x$ in the $C_u$ sets of neighbors with the same or smaller $\gamma$-class plus the number of outneighbors of higher $\gamma$-classes that are already colored with $x$. The color $x$ with the lowest frequency in $C_v$ will be the final color of $v$ for the following reason: There are at most $d_v/2$ outneighbors that have an unbounded number of $\tau \& g$-conflicts, while $C_v$ shares at most $\tau-1$ colors with the remaining $C_u$ sets (note that with outneighbors of higher $\gamma$-class at most one color in $C_v$ is in conflict, hence, for worst-case observation we can ignore that case). By the pigeonhole principle there exists a color $x \in C_v$ with
\begin{align*}
    f_v(x) \leq \frac{\sum_{c \in C_v} f_v(c)}{|C_v|} \leq \frac{d_v/2 \cdot |C_v| + \beta_v (\tau-1)}{|C_v|} < d_v + 1.
\end{align*}
The round complexity of the whole algorithm is $O(h)$, since we iterate through all the $\gamma$-classes to assign colors. This completes the algorithm to handle \textit{single}-defects. We will now extend this result to solve the OLDC problem.

  \subsubsection{Multiple Defects}
  \begin{lemma}
      \label{lemma:SimpleOLDC}
      Given a graph with an initial $m$-coloring, there is an $O(h)$-round \LOCAL algorithm that assigns each node $v$ a color $x_v\in L_v$ such that every node $v\in V$ has at most $d_v(x_v)$ outneighbors $w$ with a color $x_w$ for which $|x_w-x_v|\leq g$ if for each node $v \in V$
      \begin{align*}
          \sum_{x \in L_v} (d_v(x)+1)^2 \geq \alpha \beta_v^2 \cdot \tau(h, \calC, m) \cdot h \cdot (2g+1)
      \end{align*}
      for some sufficiently large constant $\alpha$, some integer $h \geq \max_v \lceil \log(\frac{\beta_v}{\min_{x \in L_v}d_v(x)+1}) \rceil$ and color space $\calC$. Messages are of size at most $O(\min\{ \Lambda \cdot \log |\calC| , |\calC| \} + \log \log \beta + \log m)$-bits.
  \end{lemma}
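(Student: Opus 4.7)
The plan is to reduce the multi-defect problem to the single-defect algorithm of \Cref{sec:algOLDCSingleDef} via a local pigeonhole argument over defect magnitudes. The single-defect routine requires each node to supply one scalar defect $\tilde{d}_v$ together with a list of size $\Omega\bigl((\beta_v/(\tilde{d}_v+1))^2\,\tau\,(2g+1)\bigr)$. Since the reduction must produce both choices without any communication, I plan to exploit the fact that the hypothesis weighs colors with larger defect quadratically, so some narrow range of defect values must dominate the sum.

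Concretely, for each node $v$ I would bucket $L_v$ by defect magnitude, setting $L_v^{(j)} := \set{x \in L_v : 2^j \leq d_v(x)+1 < 2^{j+1}}$. The assumption $h \geq \lceil \log(\beta_v/(\min_{x \in L_v} d_v(x)+1)) \rceil$ guarantees that every color of $L_v$ lies in one of the at most $h+2$ nonempty buckets indexed by some $j \in \set{0,1,\dots,h+1}$. Pigeonhole then yields a bucket $j^* = j^*(v)$ with
\[
  \sum_{x \in L_v^{(j^*)}} (d_v(x)+1)^2 \;\geq\; \frac{\alpha}{2}\,\beta_v^2\,\tau\,(2g+1).
\]
Since every color in $L_v^{(j^*)}$ satisfies $(d_v(x)+1)^2 \leq 4^{j^*+1}$, this forces $|L_v^{(j^*)}| = \Omega\bigl((\beta_v/2^{j^*})^2 \tau (2g+1)\bigr)$. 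Setting $\tilde{d}_v := 2^{j^*}-1$, so that $\tilde{d}_v + 1 = 2^{j^*}$, the restricted instance $(L_v^{(j^*)}, \tilde{d}_v)$ meets the precondition of the single-defect algorithm once $\alpha$ is enlarged by a bounded constant factor.

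Each node then runs the single-defect algorithm on its restricted list. It outputs a color $x_v \in L_v^{(j^*)}$ such that at most $\tilde{d}_v$ outneighbors $w$ pick a color $x_w$ with $|x_v - x_w| \leq g$. Because $x_v \in L_v^{(j^*)}$ forces $d_v(x_v) \geq 2^{j^*} - 1 = \tilde{d}_v$, the per-color defect budget of the original instance is honored and the output is a valid solution. The $O(h)$ round bound and message-size bound follow directly from the single-defect subroutine: the bucketing and the choice of $j^*$ are purely local decisions, the restricted list is a subset of $L_v$ so its encoding still fits within $O(\min\set{\Lambda \log|\calC|, |\calC|})$ bits, and the bucket index uses only $O(\log h) = O(\log\log\beta)$ bits on top of the $\log m$ for the initial coloring. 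The main obstacle I anticipate is therefore only a careful bookkeeping of constants in the pigeonhole step; the essential conceptual work has been carried out in the single-defect setting, and the extra factor of $h$ in the right-hand side of the hypothesis is precisely what absorbs the loss from the bucketing.
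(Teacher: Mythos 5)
Your proof is correct and follows essentially the same strategy as the paper's: bucket $L_v$ by dyadic scale of the defect, use pigeonhole over the $O(h)$ nonempty buckets to find a bucket whose contribution to $\sum_x (d_v(x)+1)^2$ dominates (losing the factor $h$ in the hypothesis), observe that the resulting list size on that bucket satisfies the single-defect precondition with the representative scalar defect $\tilde{d}_v$, and run the algorithm of \cref{sec:algOLDCSingleDef}. The only cosmetic differences are that the paper first rounds $d_v(x)+1$ and $\beta_v$ to powers of $2$ and then buckets by the exact ratio $\beta_v/(d_v(x)+1)$, whereas you leave the values unrounded and bucket $d_v(x)+1$ into dyadic intervals, taking the lower endpoint as $\tilde{d}_v$; both are sound. (Minor: the pigeonhole over $h+2$ buckets gives a constant factor of roughly $h/(h+2) \geq 1/3$, not $1/2$, but since $\alpha$ is a tunable constant this is immaterial.)
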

  \begin{proof}
      Assume in the following that the values $d_v(x) + 1$ as well as $\beta_v$ are powers of $2$ for all nodes and colors. Note that this can be enforced by rounding. \par
      For each $1 \leq i \leq h$, let the set $L_{v, i} := \{x \in L_v ~|~ i =  \log_2(\beta_v/(d_v(x)+1)) \}$ be a subset of colors of $L_v$ where all colors share the same defect value.
      \begin{align*}
          \sum_{x \in L_v} (d_v(x) + 1)^2 = \sum_{i=1}^h \sum_{x \in L_{v, i}} (d_v(x)+1)^2 
      \end{align*}
      Hence, there exists some $i^*$ with the following property. Let $d_v$ be the defect of the colors in  $L_{v, i^*}$.
      \begin{align*}
          |L_{v, i^*}| \cdot (d_v+1)^2 = \sum_{x \in L_{v, i^*}} (d_v(x) + 1)^2 \geq \frac{\sum_{x \in L_v} (d_v(x) + 1)^2}{h} \geq \alpha \beta_v^2 \cdot \tau(h, \calC, m) \cdot (2g+1)
      \end{align*}
      Thus, restricting the set of colors to $L_{v, i^*}$ the algorithm described in \cref{sec:algOLDCSingleDef} solves the required coloring problem with (single)defect $d_v$ in $O(h)$ rounds. \par
      Now we will focus on the message complexity. To solve $P_1$, nodes have to communicate their $K_v$ list with the neighbors. Since $K_v$ is uniquely determined by $d_v$, $L_{v, i^*}$ and the initial color, we forward these values instead of $K_v$. By $|L_{v, i^*}| \leq |L_v| \leq \Lambda$, we need $O(\min \{ \Lambda \cdot \log |\calC|, |\calC| \})$-bits to send $L_{v, i^*}$ since we can either send a bit string of $|\calC|$ entries, indicating what colors of the globally known color space $\cal{C}$ are in $L_{v, i^*}$ or by simply sending the colors, where each color has size at most $O(\log |\calC|)$. Sending the defect naively takes $O(\log \beta)$-bits. Note that the defects are a power of $2$, hence $O(\log \log \beta)$ bits are indeed sufficient to send.
      Thus, together we need to transmit $O(\min \{ \Lambda \cdot \log |\calC|, |\calC| \} + \log \log \beta + \log m)$-bits. To solve the coloring, the set $C_v$ has to be forwarded. We encode $C_v$ as an index in $K_v$, thus, at most $O(\log k') = O(\Lambda)$ bits have to be sent. Thus, no message needs more bits than $O(\min \{ \Lambda \cdot \log |\calC|, |\calC| \} + \log \log \beta + \log m)$. 
  \end{proof}

  To apply this lemma, one can use the fact that $h = O(\log \beta)$ and for some color space of size $\poly \Delta$ and initial $O(\Delta^2)$-coloring (e.g., by \cite{Linial1987}) we have $\tau(h, \calC, m) = O(\log \Delta)$. Hence, \cref{lemma:SimpleOLDC} solves the OLDC problem (where $g=0$) in $O(\log \Delta)$ communication rounds if the initial color list $L_v$ for each node $v$ fulfills the condition $\sum_{x \in L_v} (d_v(x)+1)^2 \geq \alpha \beta_v^2 \cdot \log^2 \Delta$ for some sufficiently large constant $\alpha$. 
  Note that the OLDC algorithm mentioned in \cref{thm:def_local_list_coloring} requires a stronger condition on the color lists $L_v$ than \cref{lemma:SimpleOLDC}. However, to reach this better result, we apply \cref{lemma:SimpleOLDC} as a subroutine. The details of this more involved analysis are stated in the next section. 


\subsection{Main Contribution}

If we have an OLDC instance in which some nodes have colors with constant defect requirements, the number $h$ of $\gamma$-classes can be $\Theta(\beta)$. Because also the value of $\tau(h,|\calC|,m)$ is linear in $h$, this means that even if $g=0$ and even if $|\calC|$ and $m$ are both polynomial in $\beta$, the condition in \Cref{lemma:SimpleOLDC} is of the form $\sum_{x\in L_v}(d_v(x)+1)^2\geq \alpha\beta_v^2\log^2\beta$. One of the $\log\beta$ factors comes from the fact that at the very beginning of the algorithm, every node $v$ reduces its color list to a list in which all colors have approximately the same defect value. In the following, we show that at the cost of a more complicated algorithm, we can improve this $\log\beta$ factor to a factor of the form $\poly\log\log\beta$. In the following discussion, we assume that $g=0$, but we note that along the way, we will have to use \Cref{lemma:SimpleOLDC} with positive $g$ as a subroutine.

In order to obtain the improvement, we first want an algorithm where for computing the $0$-round problem $P_2$, a node $v$ of some $\gamma$-class only needs to compete with outneighbors of the same $\gamma$-class. For this, we use an iterative approach to solve $P_2$ and $P_1$. For each $i\in [h]$, let $V_i\subseteq V$ be the set of nodes in $\gamma$-class $i$. For each node $v\in V_i$ that is colored with some color $x$, we will make sure that $v$ has at most $d_v(x)/4$ outneighbors of color $x$ in $\gamma$-classes $j$ for $j<i$, at most $d_v(x)/4$ outneighbors of color $x$ in the same $\gamma$-class and that $v$ has at most $d_v/2$ outneighbors in $\gamma$-classes $j$ for $j>i$. Thus, we assume that each node $v\in V$ only uses the part $L_{v,i}$ of its $L_v$ consisting of colors with a defect $d_v$ such that $\gamma_v=2^i \geq 4\beta_v /(d_v+1)$. We then iterate over the $\gamma$ classes $i\in [h]$ in increasing order. In iteration $i$, we solve problems $P_2$ and $P_1$ for the nodes in $V_i$. When dealing with nodes in $V_i$, we can therefore assume that for all outneighbors in $u\in V_j$ for $j<i$, the list $C_u$ (i.e., the output of problem $P_1$) is already computed. We then remove each color $x$ from the list $L_{v,i}$ for which there are more than $d_v(x)/4$ outneighbors $u\in V_1\cup\dots\cup V_{i-1}$ for which $x\in C_u$. In this way, we guarantee that $v$ cannot choose a color with defect more than $d_v/4$ to outneighbors in lower $\gamma$-classes even before solving $P_2$ for node $v$. We can then solve $P_2$ and $P_1$ by only considering outneighbors in $V_i$. Of course, we have to make sure that even after removing colors from $L_{v,i}$, the list of $v$ is still sufficiently large to solve problem $P_2$. The advantage of only having to consider neighbors in $V_i$ when solving $P_2$ and $P_1$ is that in the condition on $\sum_{x\in L_{v,i}} (d_v(x)+1)^2$, we can replace the outdegree $\beta_v$ of $v$ by the number of outneighbors $\beta_{v,i}$ that $v$ has in $V_i$. This gives us more flexibility in the choice of $v$'s $\gamma$-class $i$. If $\sum_{x\in L_{v,i}} (d_v(x)+1)^2$ is large, $v$ can choose $\gamma$-class $i$ and tolerate many outneighbors in the same $\gamma$-class and if $\sum_{x\in L_{v,i}} (d_v(x)+1)^2$ is small, $v$ can only choose $\gamma$-class $i$ if a small number of outneighbors choose $\gamma$-class $i$. We will see that the problem of choosing a good $\gamma$-class can be phrased as an OLDC problem that can be solved by using \Cref{lemma:SimpleOLDC} with appropriate parameters. The following technical \cref{lemma:technicalMain} assumes that the $\gamma$-classes are already assigned and it formally proves under which conditions the above algorithmic idea allows us to solve a given OLDC instance.

\begin{lemma}\label{lemma:technicalMain}
  Let $G=(V,E)$ be a directed graph that is equipped with an initial proper $m$-coloring. Let $h\geq 1$ be an integer parameter and assume that every node $v\in V$ is in some $\gamma$-class $i_v\in [h]$. For every $i\in [h]$, let $V_i$ be the nodes in $\gamma$-class $i$ and let $\beta_{v,i}$ be the number of outneighbors of $v$ in $V_i$. Each node $v$ has a color list $L_v\subseteq \calC$ and one fixed defect value $d_v$, i.e., $d_v(x)=d_v$ for all $x\in L_v$. We further define $\tau :=\tau(h, \calC, m)$ and some integer parameter $q\in [\tau]$. We assume that for all $v \in V$
  \[
    \forall v\in V\,:\, \frac{4 \cdot \max\set{\beta_{v,i_v}, \frac{\beta_v}{q}}}{d_v+1} \leq 2^{i_v} \quad\text{and}\quad 
    |L_v| \geq \left[ \alpha \cdot 4^{i_v} +
    \frac{4}{d_v+1}\cdot\sum_{j=i_v-\lfloor \log q\rfloor}^{i_v-1} \beta_{v,j}\cdot 2^j\right]\cdot \tau
  \]
  for a sufficiently large constant $\alpha>0$. Then there is an $O(h)$-round algorithm that assigns each node $v$ a color $x\in L_v$ such that every node $v\in V$ has at most $d_v$ outneighbors of color $x$. The algorithm requires messages consisting of $O(\min\{\Lambda \log |\mathcal{C}|, |\mathcal{C}| \} + \log \log \beta + \log m)$ bits.
\end{lemma}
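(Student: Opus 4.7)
My plan follows the blueprint sketched just before the lemma. I would process the $\gamma$-classes in increasing order $i = 1, 2, \dots, h$, and in iteration $i$ produce for every $v \in V_i$ a committed set $C_v \subseteq L_v$ of size $k_{i_v} = 2^{i_v}\tau$ (the analogue of the $P_1$ output). The actual final color $x_v \in C_v$ is then chosen in a second, descending-order pass, so that when $v$ commits to $x_v$, all outneighbors of $v$ in strictly higher $\gamma$-classes already have a committed color. The defect budget of a node $v$ is split into three pieces of size roughly $(d_v+1)/4$, $(d_v+1)/4$, and $(d_v+1)/2$, charged respectively to outneighbors in strictly lower, equal, and strictly higher $\gamma$-classes.

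\textbf{Filtering and the counting step.} When iteration $i$ begins in the first pass, every $u \in V_1 \cup \dots \cup V_{i-1}$ has already output its set $C_u$ of size $2^{i_u}\tau$. For each $v \in V_i$ I first restrict $L_v$ to $L_v' \subseteq L_v$ by removing every color $x$ that lies in $C_u$ for more than $(d_v+1)/4$ lower-class outneighbors $u$ of $v$. Since the $C_u$ sets contribute in total $\sum_{j < i_v} \beta_{v,j} \cdot 2^j \tau$ color-slots, the number of removed colors is at most $\frac{4}{d_v+1}\sum_{j < i_v}\beta_{v,j} \cdot 2^j \tau$. I split this sum into a \emph{near} range $i_v - \lfloor\log q\rfloor \leq j < i_v$, which is paid for verbatim by the explicit second summand in the $|L_v|$ hypothesis, and a \emph{far} range $j < i_v - \lfloor\log q\rfloor$, where $2^j < 2^{i_v}/q$ so that $\sum_{\text{far}}\beta_{v,j}\cdot 2^j < 2^{i_v}\beta_v/q \leq 4^{i_v}(d_v+1)/4$ by the hypothesis $\beta_v/q \leq 2^{i_v}(d_v+1)/4$. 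Combining the two ranges gives $|L_v'| \geq (\alpha-1)\cdot 4^{i_v}\tau$.

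\textbf{Within-class solve and final color.} On the reduced lists $L_v'$, I apply the single-defect algorithm of \cref{sec:algOLDCSingleDef} (the engine underlying \cref{lemma:SimpleOLDC}, with $g=0$) on the subgraph $G[V_i]$, using $\beta_{v,i_v}$ as outdegree and $\lfloor (d_v+1)/4\rfloor$ as single defect. Its list-size requirement becomes $|L_v'| \geq \alpha' \cdot (4\beta_{v,i_v}/(d_v+1))^2 \cdot \tau \leq \alpha' \cdot 4^{i_v} \tau$ thanks to the hypothesis $4\beta_{v,i_v}/(d_v+1) \leq 2^{i_v}$, so the bound $|L_v'| \geq (\alpha-1)\cdot 4^{i_v}\tau$ suffices for $\alpha$ large enough. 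The output is a set $C_v \subseteq L_v'$ of size $2^{i_v}\tau$ that $\tau$-conflicts with at most $(d_v+1)/4$ of the $C_u$'s for same-class outneighbors $u$. In the second pass (descending order), $v$ picks $x_v \in C_v$ by a single pigeonhole over all three contributions: at most $(d_v+1)/4$ lower-class $u$'s with $x_v \in C_u$ (from the filtering), a same-class contribution bounded by $(d_v+1)/4 \cdot |C_v| + \beta_{v,i_v}(\tau-1)$ (the standard bound used in \cref{sec:algOLDCSingleDef}), and a higher-class contribution of at most $\beta_v - \sum_{j \leq i_v}\beta_{v,j}$ that, divided by $|C_v| = 2^{i_v}\tau$, is at most $\beta_v/(2^{i_v}\tau) \leq q(d_v+1)/(4\tau) \leq (d_v+1)/4$ since $q \leq \tau$. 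Summing and dividing by $|C_v|$ then shows the minimum color has total count at most $d_v$.

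\textbf{Main obstacle.} The crux of the proof is the counting step: the two summands in the lemma's lower bound on $|L_v|$ must align exactly with the two ranges of the filtering loss (the explicit near sum and the far sum controlled by $\beta_v/q$), with enough slack left to invoke the single-defect algorithm on $G[V_i]$ at the reduced defect $(d_v+1)/4$. The role of the novel $\beta_v/q$ hypothesis, absent from \cref{lemma:SimpleOLDC}, is precisely to tame the contribution of far-away lower classes in the filtering step, and simultaneously to tame the higher-class contribution in the final pigeonhole. Beyond this, the round complexity $O(h)$ and the message-size bounds follow immediately because each iteration consists of one invocation of the single-defect algorithm together with a single broadcast of $C_u$ across each edge, both of which match the message budget of \cref{lemma:SimpleOLDC}.
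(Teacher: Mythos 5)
Your proposal follows essentially the same approach as the paper's proof: an ascending Phase~I pass that filters each list against already-committed sets of lower $\gamma$-classes (splitting the filtering loss into the explicit ``near'' sum from the hypothesis and a geometrically-summable ``far'' tail controlled by $\beta_v/q$), a zero-round $P_2$-based within-class selection of $C_v$, and a descending Phase~II pigeonhole over $C_v$ that charges separate defect budgets to lower-, same-, and higher-class outneighbors. The only nit is that invoking the single-defect engine of \cref{sec:algOLDCSingleDef} as a black box with defect $\lfloor(d_v+1)/4\rfloor$ is delicate when $d_v\le 2$ (the modified defect vanishes), whereas the paper sidesteps this by solving $P_2$ directly and selecting $C_v\in K_v$ by averaging to get at most $d_v/4$ same-class conflicts; your far-range bound is also off by a harmless factor of~$2$, which only costs a slightly larger~$\alpha$.
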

\begin{proof}
  As discussed, the algorithm consists of two phases I and II. In Phase I, we iterate over the $\gamma$-classes $1,\dots,h$ in increasing order. The objective of Phase I is for every node $v\in V$ a modified version of $P_1$. Specifically, at the end of Phase I, every node $v$ has to output a set $C_v\subseteq L_v$ such that if $v\in V_i$, $|C_v|=2^i\cdot \tau$ and for every $x\in C_v$, there are at most $d_v/4$ outneighbors $u\in V_j$ for $j<i$ for which $x\in C_u$ and such that for all except at most $d_v/4$ outneighbors $u\in V_i$, we have $|C_v\cap C_u|< \tau$. In Phase II, we iterate over the $\gamma$-classes in decreasing order and we use the sets $C_v$ computed in Phase I to compute the final solution to the given OLDC instance.

  \para{Phase I.} Consider iteration $i\in [h]$ in Phase I, when we process the nodes in $V_i$. At this point, we have already computed the sets $C_u$ for all nodes $u\in V_j$ for $j<i$. Every node $v\in V_i$ proceeds as follows. We define the bad colors $B_v\subseteq L_v$ for $v$ as the set of colors $x\in L_v$ for which there are more than $d_v/4$ outneighbors $u\in V_1\cup\dots\cup V_{i-1}$ for which $x\in C_u$. As a first step, node $v$ updates its list $L_v$ to $L_v' = L_v \setminus B_v$. We next lower bound the size of $L_v'$. For this, we define
  \begin{eqnarray*}
    D & :=& \sum_{j=1}^{i-1}\sum_{u\in V_j\cap \Nout(v)} |C_u|\ \ = \ \
            \sum_{j=1}^{i-1} 2^j\cdot \beta_{v,j}\cdot \tau\\
      & \leq & \sum_{j=1}^{i-\lfloor\log q\rfloor -1}2^j\cdot \beta_{v}\cdot\tau + \sum_{j=i - \lfloor\log q\rfloor}^{i-1} 2^j\cdot \beta_{v,j}\cdot \tau\\
     &  < & 2^{i+1}\cdot \frac{\beta_v}{q} \cdot\tau + \sum_{j=i - \lfloor\log q\rfloor}^{i-1} 2^j\cdot \beta_{v,j}\cdot \tau.
  \end{eqnarray*}
  The smallest integer larger than $d_v/4$ has value at least $(d_v+1)/4$. We therefore have $|B_v| \leq 4D/(d_v+1)$. By using our upper bound on $D$ and the lower bound on $|L_v|$ given by the lemma statement, we can thus bound $|L_v'|$ as
  \begin{eqnarray*}
    |L_v'| =  |L_v| - |B_v|
    & \geq & \left[\alpha\cdot 4^{i} - \frac{2^{i+3}\cdot \beta_v}{q\cdot (d_v+1)}\right]\cdot\tau\\
    & \geq & \left[(\alpha-2)\cdot4^{i} + 2\cdot 2^i\cdot \frac{4\cdot \frac{\beta_v}{q}}{d_v+1} - \frac{2^{i+3}\cdot \beta_v}{q\cdot (d_v+1)}\right]\cdot\tau\\
    & = & (\alpha-2)\cdot 4^i \cdot\tau.
  \end{eqnarray*}
  The second inequality follows because $2^i\geq \frac{4\beta_v/q}{d_v+1}$ and $i\geq 1$. The goal of each node $v\in V_i$ now is to select a set $C_v$ such that there are at most $d_v/4$ outneighbors $u\in V_i$ for which $|C_u\cap C_v|\geq\tau$. For this, we first solve a special case of problem $P_2$ defined in \Cref{sec:gammaclasses} and which is identical to the problem $P_2$ solved in \cite{MausT20}. Every node $v\in V_i$ computes a set $K_v\subseteq 2^{\binom{L_v'}{2^i\tau}}$ of size $2^i\cdot\tau'\leq 2^h\cdot\tau'$, where $\tau':=\tau'(h, \calC, m)$. For every outneighbor $u$ of $v$, it is guaranteed that $(K_v,K_u)\not\in \Psi_0(\tau', \tau)$, i.e., $K_v$ contains at most $\tau'-1$ sets $C$ for which there exists $C'\in K_u$ with $|C\cap C'|\geq \tau$. By \Cref{lemma:SolvingNewP2} (or by Lemma 3.2 in \cite{MausT20}), those sets $K_v$ can be computed in 0 rounds. Each node $v\in V_i$ now chooses a set $C_v\in K_v$ for which there are at most $d_v/4$ outneighbors $u\in V_i$ for which $|C_u\cap C_v|\geq \tau$. For each $C\in K_v$ and each $u\in \Nout\cap V_i$, let $I_{C,u}=1$ if there exists $C'\in K_u$ for which $|C\cap C'|\geq \tau$ and let $I_{C,u}=0$ otherwise. We know that for each $u\in \Nout\cap V_i$, there are at most $\tau'-1$ sets $C\in K_v$ for which $I_{C,u}=1$. We further define $D_C$ as the number of nodes $u\in \Nout\cap V_i$ for which $I_{C,u}=1$. That is, if $v$ chooses set $C\in K_v$, there are $D_C$ outneigbors to which the conflict is potentially too large and which might therefore contribute to the defect of $v$. We have
  \[
    \sum_{C\in K_v} D_C = \sum_{C\in K_v}\sum_{u\in \Nout\cap V_i}  I_{C,u} =  \sum_{u\in \Nout\cap V_i} \sum_{C\in K_v} I_{C,u} \leq \beta_{v,i}\cdot (\tau'-1).
  \]
  On average over all $C\in K_v$, the value of $D_C$ is therefore at most $\frac{\beta_{v,i}(\tau'-1)}{|K_v|} \leq \frac{\beta_{v,i}(\tau'-1)}{2^i \tau'} < \frac{\beta_{v,i}}{2^i}$. By using that $2^i\geq \frac{4\beta_{v,i}}{d_v+1}$, we can upper bound the average $D_C$ value by $<\frac{d_v+1}{4}$. Because the smallest integer larger than $d_v/4$ has value at least $\frac{d_v+1}{4}$, there must be a set $C \in K_v$ for which $D_C\leq d_v/4$. Node $v$ chooses such a set as its set $C_v$. This concludes Phase I. In the following, we use $\Nout_{i,*}(v)$ to denote the outneighbors $u$ of $v$ in $V_i$ for which $|C_u\cap C_v|<\tau$. The remaining at most $d_v/4$ outneighbors in $\Nout(v)\cap V_i\setminus \Nout_{i,*}(v)$ are ignored in the rest of the algorithm. They contribute at most $d_v/4$ to the overall defect of $v$.  
  
  \para{Phase II.} In Phase II, we again iterate over the $h$ $\gamma$-classes, but this time in decreasing order starting with $\gamma$-class $h$. That is, when the nodes $V_i$ of some $\gamma$-class $i$ choose their color, we can assume that all the nodes in $V_j$ for $j>i$ have already chosen their colors. Consider again a node $v\in V_i$. In Phase I, node $v$ has computed a set $C_v$ of size $|C_v|=2^i\cdot \tau$ such that for every outneighbor $u\in \Nout_{i,*}(v)$, it holds that $|C_u\cap C_v|<\tau$. We need to guarantee a defect of $\leq d_v/2$ to the outneighbors in $\Nout_{i,*}(v)\cup\dots\cup V_h$. Node $v$ therefore needs to choose a color $x\in C_v$ that appears at most $d_v/2$ times among the already chosen colors of outneighbors in higher $\gamma$-classes or among any of the set $C_u$ of outneighbors in $\Nout_{i,*}(v)$. The number of outneighbors of higher $\gamma$-classes is $\leq\beta_v\leq 2^i(d_v+1)\tau/4$. Here, we use that $2^i\geq \frac{4\beta_v/q}{d_v+1}$ and that $q\leq \tau$. Using that also $2^i\geq \frac{4\beta_{v,i}}{d_v+1}$, we further have
  \[
    \sum_{u\in \Nout_{i,*}(v)} |C_u\cap C_v| < \beta_{v,i}\cdot\tau \leq \frac{2^i(d_v+1)\tau}{4}.
  \]
  Overall, the multiset consisting of all the colors of $C_v$ that can be picked by any outneighbor of $v$ in $V_j$ for $j> i$ or in $\Nout_{i,*}(v)$ is therefore less than $\frac{2^i(d_v+1)\tau}{2}$. Hence, because $|C_v|=2^i\tau$, on average, the colors in $C_v$ appear less than $\frac{d_v+1}{2}$ times in this multiset. This implies that there exists a color $x\in C_v$ that can be chosen by at most $d_v/2$ outneighbors $u$ in $V_j$ for $j> i$ or in $\Nout_{i,*}(v)$. The solution of Phase I already guarantees that every color in $C_v$ can be chosen by at most $d_v/2$ of the other outneighbors. Overall, the total number of outneighbors of $v$ that can pick the same color is therefore at most $d_v$ as required.

  In both phases we iterate over the $h$ $\gamma$-classes. To determine the message complexity, note that each node has to transmit the sets $K_v$ and $C_v$ to all neighbors. Here we can use the same encoding as in \cref{lemma:SimpleOLDC} and hence, the maximum message per round does not require more than $O(\min\{\Lambda \log |\mathcal{C}|, |\mathcal{C}| \} + \log \log \beta + \log m)$-bits. 
\end{proof}

We are now ready to prove \Cref{thm:def_local_list_coloring}, our main contribution. In the following $\hat{\beta}_v$ is the outdegree of $v$ rounded up to the next integer power of $2$. Further $\hat{\beta}:=\max_{v\in V} \hat{\beta}_v$. Note that for all $v$, we have $\hat{\beta}_v\leq 2\beta_v$. The following lemma is a rephrasing of the theorem. By adjusting the constant $\alpha$, \Cref{thm:def_local_list_coloring} follows from \Cref{lemma:mainresult} because for $h=\lceil\log\hat{\beta}\rceil$ and $h'=\lceil\log 4h\rceil$, $\tau(h, \calC, m)=O(\log \beta + \log\log|\calC| + \log\log m)$ and $\tau(h', [h], m)=O(\log\log\beta + \log\log m)$.

\begin{lemma}\label{lemma:mainresult}
  Let $G=(V,E)$ be a properly $m$-colored directed graph and let $h:=\lceil\log\hat{\beta}\rceil$ and $h':=4^{\lceil\log_4\log 8h\rceil}$.  Assume that we are given an OLDC instance on $G$ for which
  \begin{equation}\label{eq:adaptedmaincond}
    \forall v\in V\,:\,\sum_{x\in L_v} \big(d_v(x)+1\big)^2 \geq \alpha^2\cdot \hat{\beta}_v^2 \cdot \tau\cdot\bar{\tau}\cdot h'^2,
  \end{equation}
  where $\tau = 4^{\lceil \log_4 \tau(h,\calC,m)\rceil}$ and $\bar{\tau} = 4^{\lceil\log_4 \tau(h', [h], m)\rceil}$.
  Then, there is a deterministic distributed algorithm that solves this OLDC instance in $O(\log\beta)$ rounds using $O\big(\min\set{|\calC|, \Lambda\cdot \log |\calC|} + \log \beta + \log m\big)$-bit messages.
\end{lemma}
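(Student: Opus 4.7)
I will reduce the OLDC instance of the statement to two sequential calls: first an application of \Cref{lemma:SimpleOLDC} on an auxiliary coloring problem to pick a $\gamma$-class $i_v \in [h]$ for every node $v$, and then an application of \Cref{lemma:technicalMain} on that assignment to produce the output coloring. The factor $\tau \cdot \bar\tau \cdot h'^2$ in the hypothesis is chosen to decompose as $\tau$ (the list-size overhead in \Cref{lemma:technicalMain}) times $\bar\tau \cdot h' \cdot (2g+1) \approx \bar\tau h'^2$, which is precisely the prerequisite of \Cref{lemma:SimpleOLDC} applied to the auxiliary problem with meta-$\tau$ equal to $\bar\tau$, meta-$h$ equal to $h'$, and range parameter $g = \Theta(h')$.

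\textbf{Constructing the auxiliary problem.} For each $v$, partition $L_v$ into defect-buckets $L_{v,k} := \{x \in L_v : d_v(x)+1 \in [2^k, 2^{k+1})\}$, so that all colors in bucket $k$ share, up to a factor of two, a common defect value $d_v^{(k)} := 2^k - 1$. The meta-list of $v$ consists of the pairs $(k, i)$ with $L_{v,k} \neq \emptyset$, $2^i \geq 4\hat\beta_v/(d_v^{(k)}+1)$, and $|L_{v,k}| \geq \alpha \cdot 4^i \tau$. The meta-color for $(k, i)$ is $i \in [h]$ and the attached meta-defect $M_{k,i}$ is $\lfloor (d_v^{(k)}+1) \cdot 2^i / 4 \rfloor - 1$, i.e., the tightest value compatible with condition~1 of \Cref{lemma:technicalMain}.

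\textbf{Verifying the auxiliary hypothesis.} For each bucket $k$, the admissible range of $i$ is an interval whose right endpoint is $\tfrac12 \log_2(|L_{v,k}|/(\alpha \tau))$, so the squared meta-defects form a geometric series in $i$ and summing over all admissible pairs gives
\[
  \sum_{(k,i)} (M_{k,i}+1)^2 \;=\; \Theta\!\left(\frac{1}{\tau} \sum_{x \in L_v} (d_v(x)+1)^2\right) \;\geq\; \Omega\!\left(\alpha^2 \hat\beta_v^2 \, \bar\tau \, h'^2\right),
\]
which, for a sufficiently large constant $\alpha$, is the prerequisite of \Cref{lemma:SimpleOLDC} on the auxiliary instance (whose color space is $[h]$, whose $m$-coloring is inherited, whose range parameter is $g = \Theta(h')$, and whose own $\gamma$-class hierarchy has height at most $h'$). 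The output of \Cref{lemma:SimpleOLDC} gives each $v$ a meta-color $i_v$ with an $M_{k,i_v}$-bounded number of outneighbors at meta-distance $\leq g$; taking $i_v$ as the $\gamma$-class yields condition~1 of \Cref{lemma:technicalMain} directly, and the $g$-range guarantee upper-bounds $\sum_{j = i_v - \lfloor \log q \rfloor}^{i_v - 1} \beta_{v,j}$, which, combined with $\sum_{j = i_v - \log q}^{i_v - 1} 2^j \leq 2^{i_v}$ and $2^{i_v} \geq 4\hat\beta_v/((d_v+1)q)$, absorbs the weighted sum in condition~2 into the $\alpha \cdot 4^{i_v} \tau$ term.

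\textbf{Combining the phases.} \Cref{lemma:SimpleOLDC} delivers the $\gamma$-class assignment in $O(h') = O(\log \log \beta)$ rounds using $O(h + \log \log \beta + \log m)$-bit messages, and with this assignment \Cref{lemma:technicalMain} runs in $O(h) = O(\log \beta)$ additional rounds with messages of the claimed size; the two bounds add to the claimed totals. The main obstacle is calibrating $q \in [\tau]$, the meta-defect formula, and the range parameter $g$ so that (a) the geometric-series identity above produces exactly the factor $\bar\tau h'^2$ matching the hypothesis, (b) condition~2 of \Cref{lemma:technicalMain} is indeed absorbed by the $g$-range guarantee after the $2^j$-weighting, and (c) the auxiliary problem's own $\gamma$-class hierarchy has height at most $h'$ rather than $h$; once the parameter dependencies are laid out, the arithmetic is elementary but bookkeeping-heavy, and it is this calibration (rather than any new algorithmic idea) that drives the precise shape of the hypothesis in the lemma statement.
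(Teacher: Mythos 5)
Your overall plan correctly matches the paper's two-stage architecture (an auxiliary OLDC via \Cref{lemma:SimpleOLDC} to pick $\gamma$-classes, then \Cref{lemma:technicalMain}), and your accounting of the factor $\tau\cdot\bar\tau h'^2$ as $\tau$ (for \Cref{lemma:technicalMain}) times $\bar\tau\cdot h'\cdot(2g+1)$ (for \Cref{lemma:SimpleOLDC} with $q=\Theta(h)$, $g=\Theta(\log h)$) is the right factorization. But there is a genuine gap in the construction of the auxiliary instance: your admissibility filter $2^i\geq 4\hat\beta_v/(d_v^{(k)}+1)$ is too strong, and it kills the claimed lower bound on $\sum_{(k,i)}(M_{k,i}+1)^2$.

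Concretely, for a bucket $k$ to admit \emph{any} pair $(k,i)$, the smallest permitted $i$ must not exceed the largest, which unwinds to $D_{v,k}:=|L_{v,k}|(d_v^{(k)}+1)^2\geq 16\alpha\tau\hat\beta_v^2$. The lemma hypothesis only guarantees $D_v:=\sum_{x\in L_v}(d_v(x)+1)^2\geq\alpha^2\hat\beta_v^2\tau\bar\tau h'^2$, so a bucket can be excluded while still holding a fraction $D_{v,k}/D_v=\Theta\bigl(1/(\bar\tau h'^2)\bigr)$ of the total mass. Since $h'=\Theta(\log h)$ and $\bar\tau=\Theta(\log h+\log\log m)$, one has $\bar\tau h'^2=\Theta(\polylog h)=o(h)$, while there are up to $\Theta(h)$ buckets. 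If the mass is spread roughly evenly, $D_{v,k}/D_v\approx 1/h < 1/(\bar\tau h'^2)$, so \emph{every} bucket is excluded and the sum you need is $0$ rather than $\Theta(D_v/\tau)$. Thus the $\Theta$-claim in your ``Verifying the auxiliary hypothesis'' step is false as stated, and the downgrade from ``$=$'' to ``$\geq$'' does not hold.

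The paper avoids this because its condition $\frac{4\max\{\beta_{v,i_v},\beta_v/q\}}{d_v+1}\leq 2^{i_v}$ is not imposed as an a priori admissibility filter against the full outdegree $\hat\beta_v$. Instead, the paper excludes a bucket $\mu$ only when $D_{v,\mu}/D_v<1/(2h)$ (so at most half of $D_v$ is lost), sets the auxiliary defect to $\delta_{v,i}:=\lfloor\sqrt{\lambda_{v,\mu_v(i)}R_v}\rfloor$, and then derives $4\max\{\beta_{v,i_v},\beta_v/q\}/(d_v+1)\leq 4\delta_{v,i_v}/(d_v+1)\leq 2^{i_v}$ \emph{after the fact}, using $\beta_{v,i_v}\leq\delta_{v,i_v}$ (guaranteed by the meta-coloring), $\beta_v/q\leq\delta_{v,i_v}$, and the algebraic identity between $\lambda,R_v,i_v$ and $(d_v+1)^2$. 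There is a further subtlety you do not address: the map from buckets to $\gamma$-classes is not injective, and the paper needs the argument leading to Inequality~\eqref{eq:finallambdasum} (the $f_v,i_v$ construction with the $4^{\mu'-\mu}$ geometric-series bound on colliding $\lambda$'s) to argue that resolving these collisions loses only a constant factor. Your proposal implicitly relies on a similar geometric decay in $k$ for fixed $i$, but this cannot rescue the argument once the too-aggressive admissibility filter has already discarded all the relevant pairs.

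Finally, a smaller omission: the paper needs the case distinction between $\lambda_{v,\mu}<1/4$ for all $\mu$ and $\exists\mu:\lambda_{v,\mu}\geq 1/4$ because the $\gamma$-class formula $f_v(\mu)=\mu-r_{v,\mu}+2$ exceeds $h$ when $r_{v,\mu}\leq 1$; your proposal does not mention this, though it is easy to patch.
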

\begin{proof}
  First note that w.l.o.g., we can assume that for all $v$ and all $x\in L_v$, $(d_v(x)+1)^2$ and $\alpha$ are both integer powers of $4$. We can just round up $\alpha$ and round down $d_v(x)$ to the next value for which this is true. We then just need to choose the constant $\alpha$ slightly larger. With those assumptions, the right-hand side of \eqref{eq:adaptedmaincond} is then a integer power of $4$. For each node $v$, we define $R_v:=\alpha\cdot\hat{\beta}_v^2 \cdot\bar{\tau}\cdot h'^2$. For every $v\in V$ and every $x\in L_v$, we then have $\frac{R_v}{(d_v(x)+1)^2}= 4^{\mu}$ for some $\mu\in[h]$. We can therefore partition each list $L_v$ in to lists $L_v=L_{v,1}\cup\dots\cup L_{v,h}$ such that for all $\mu\in [h]$, $L_{v,\mu}$ consists of the colors $x\in L_v$ for which $\frac{R_v}{(d_v(x)+1)^2}= 4^{\mu}$.  The algorithm to solve the given OLDC instance consists of two phases. In the first phase, every node $v\in V$ chooses its $\gamma$-class, which is an integer $i_v\in [h]$. In the second phase, we then use \Cref{lemma:technicalMain} to solve the OLDC instance.

  We first discuss the objective of the first phase and we consider some node $v$. For every $\mu\in [h]$, we define $D_{v,\mu} := \sum_{x\in L_{v,\mu}} (d_v(x)+1)^2$ and $D_v:=\sum_{\mu=1}^h D_{v,\mu} = \sum_{x\in L_v} (d_v(x)+1)^2$.  For each $\mu\in [h]$, we further define $\lambda_{v, \mu}\in(0,1]$ as follows
  \[
    \lambda_{v,\mu} :=
    \begin{cases}
      0 & \text{ if } D_{v,\mu}/D_v < 1/(2h)\\
      4^{\lfloor \log_4(D_{v,\mu}/D_v)\rfloor} & \text{ otherwise.}
    \end{cases}
  \]
  In the next part, we make a case distinction and first assume that $\lambda_{v,\mu}<1/4$ for all $\mu$. The case when there is some $\mu$ with $\lambda_{v,\mu}\geq 1/4$ is a simple case that we discuss later.

  \para{Case I : \boldmath $\forall \mu \in [h]\,:\,\lambda_{v,\mu}<1/4$.} Note that the definition of $\lambda_{v,\mu}$ implies that for all $\mu$ where $\lambda_{v,\mu} \not= 0$, $\lambda_{v,\mu}> 1/(8h)$ and $\lambda_{v,\mu}=4^{-r_{v,\mu}}$ for some integer $r_{v,\mu}\in\set{0,\dots,\lceil \log_4 4h\rceil}$. Note also that the values of $D_{v,\mu}$ for which $\lambda_{v,\mu}=0$ sum up to at most $D_v/2$ and therefore
  \[
    \sum_{\mu=1}^h \lambda_{v,\mu} \geq \frac{1}{8}.
  \]
  For every $v\in V$, we next define a function $f_v:[h]\to[h]\cup\set{\bot}$ such that for all $\mu\in[h]$, $f_v(\mu) = \mu - r_{v,\mu} + 2$ if $\lambda_{v,\mu}>0$ and $f_v(\mu)=\bot$ otherwise. Note that $0<\lambda_{v,\mu}<1/4$ implies that $f_v(\mu)\leq h$. For every $\mu\in [h]$, we next also define a second function $i_v:[h]\to[h]\cup \set{\bot}$ as follows. For every $\mu$, we set $i_v(\mu) = f_v(\mu)$ if $f_v(\mu)=\bot$ or if $f_v(\mu)\geq 1$ and there is no $\mu'<\mu$ for which $f_v(\mu')=f_v(\mu)$. Otherwise, we set $i_v(\mu) = \bot$. Note that for any two $\mu,\mu'\in[h]$ with $\mu\neq \mu'$, we either have $i_v(\mu)=i_v(\mu')=\bot$ or we have $i_v(\mu)\neq i_v(\mu')$. We next show that
  \begin{equation}\label{eq:finallambdasum}
    \sum_{\mu\in [h] : i_v(\mu)\neq \bot} \lambda_{v,\mu} \geq \frac{2}{3}\cdot \sum_{\mu\in [h] : f_v(\mu)\neq \bot \land f_v(\mu) \geq 1} \lambda_{v,\mu} \geq
    \frac{2}{3}\cdot \left(\frac{1}{8} - \frac{1}{48}\right)\geq \frac{1}{20}.
  \end{equation}
  To see this, consider some $\mu$ for which $\lambda_{v,\mu}>0$. Note that for $f_v(\mu)<1$, we need $r_{v,\mu} \geq \mu + 2$ and therefore $\lambda_{v,\mu} \leq 4^{-\mu - 2}$. Thus, the sum over those $\lambda_{v,\mu}$ is at most $\frac{4}{3}\cdot 4^{-1-2} = 1/48$. It therefore remains to show that the sum over the $\lambda_{v,\mu}$ for which $f_v(\mu)\geq 1$, but $i_v(\mu)=\bot$ is at most a third the sum over the $\lambda_{v,\mu}$ for which $f_v(\mu)\geq 1$. We have $i_v(\mu)=\bot$ and $f_v(\mu)\geq 1$ iff there is a $\mu'<\mu$ for which $f_v(\mu')=f_v(\mu)$. For $f_v(\mu')=f_v(\mu)$, we need to have $\lambda_{v,\mu} = \lambda_{v,\mu'}\cdot 4^{\mu'-\mu}$.  Consider some value $z\geq 1$ for which there is a value $\mu_z$ with $f_v(\mu_z)=z$ and assume that $\mu_z$ is the smallest such value. The sum over all $\lambda_{\mu}$ for $\mu>\mu_z$ and $f_v(\mu)=f_v(\mu_z)$ is at most $\sum_{\mu=\mu_z+1}^\infty \lambda_{\mu_z} \cdot 4^{\mu_z - \mu} = \lambda_{\mu_z}/3$. This concludes the proof of Inequality \eqref{eq:finallambdasum}.

  In order to assign a $\gamma$-class $i_v$ to every node $v\in V$, we define another (generalized) OLDC instance. For this instance, the "color" list of node $v$ is $\calL_v=\set{i_v(\mu) : i_v(\mu)\neq \bot}$. For every $i\in \calL_v$, we define the inverse function $\mu_v(i)$ to be the value $\mu$ for which $i_v(\mu)=i$. For each color $i\in \calL_v$, we then define a defect $\delta_{v,i}$ as
  \[
    \delta_{v,i} := \left\lfloor \sqrt{\lambda_{v,\mu_v(i)}\cdot R_v}\right\rfloor.
  \]
  We further define $q:= h$ and $g := \lfloor\log h\rfloor$. We then want to find an assignment of values $i_v\in \calL_v$ to each node such that for every $v\in V$, the number of outneighbors $u$ for which $i_u \in [i_v - g, i_v]$ is at most $\delta_{v,i_v}$. We next show that such an assignment of $\gamma$-classes $i_v$ satisfies the requirement needed by \Cref{lemma:technicalMain} and we can therefore use it to efficiently solve the original OLDC instance. We afterwards show that the generalized OLDC instance to find the values $i_v$ satisfies the requirement of \Cref{lemma:SimpleOLDC}.

  Let us therefore assume that we have an assignment of $\gamma$-class $i_v$ to the nodes that solve the above generalized OLDC problem. For each $i\in [h]$, we again use $V_i$ to denote the set of nodes $v$ with $i_v=i$ and we assume $\beta_{v,i}$ is the number of outneighbors of $v$ in $V_i$. The fact that $v$ has at most $\delta_{v,i_v}$ outneighbors $u$ with $i_u\in [i_v-g, i_v]$ implies that $\delta_{v,i_v}\geq \beta_{v,j}$ for all $j\in [i_v-g, i_v]$. For all $v\in V$, we have
  \[
    \delta_{v,i_v} = \left\lfloor \sqrt{\lambda_{v,\mu_v(i)}\cdot R_v}\right\rfloor
    \stackrel{(\lambda_{v,\mu_v(i)}\geq 1/(8h))}\geq
    \left\lfloor\sqrt{\frac{R_v}{8h}}\right\rfloor
    =
    \left\lfloor\sqrt{\frac{\alpha\cdot\hat{\beta}_v^2 \cdot\bar{\tau}\cdot h'^2}{8h}}\right\rfloor
    \geq \frac{\hat{\beta}_v}{h}.
  \]
  The last inequality follows because $h, \tau, \bar{\tau}, h'\geq 1$ and if we choose $\alpha\geq 8$. For the following calculations, we define $\mu_v := \mu_v(i_v) = i_v + r_{v,\mu} -2$. Note that if $v$ chooses $\gamma$-class $i_v$, it uses the colors in $L_{v,\mu_v}$. All those colors have a defect $d_v$ such that $(d_v+1)^2=R_v / 4^{\mu_v}$. Using $q=h$, we therefore have
  \begin{eqnarray*}
    \frac{4 \cdot \max\set{\beta_{v,i_v}, \frac{\beta_v}{q}}}{d_v+1}
    & \leq & \frac{4 \cdot \delta_{v,i_v}}{d_v+1}\\
    & \leq & \sqrt{\frac{16\lambda_{v,\mu_v}\cdot R_v}{(d_v+1)^2}}\\
    & = & \sqrt{16\lambda_{v,\mu_v}\cdot 4^{\mu_v}}\\
    & = & \sqrt{4^2\cdot 4^{-r_{v,\mu_v}}\cdot 4^{i_v +r_{v,\mu_v}-2}}\ =\ 2^{i_v}.
  \end{eqnarray*}
  The first part of the requirement of \Cref{lemma:technicalMain} is therefore satisfied. For the second part, recall that $v$ uses the colors in $L_{v,\mu_v}$ and that $D_{v,\mu_v}=\sum_{x\in L_{v,\mu_v}}(d_v(x)+1)^2 = |L_{v,\mu_v}|\cdot (d_v+1)^2$, where $d_v$ is defined as before. We have
  \begin{equation}\label{eq:listbound1}
    |L_{v,\mu_v}| \geq \frac{\lambda_{v,\mu_v} D_v}{(d_v+1)^2} \geq
    \frac{\lambda_{v,\mu_v} \cdot \alpha \tau \cdot R_v}{(d_v+1)^2} =
    4^{-r_{v,\mu_v}}\cdot \alpha\tau\cdot 4^{\mu_v} = \frac{\alpha}{16}\cdot 4^{i_v}\cdot \tau.
  \end{equation}
  Before we continue, we switch to Case II.
  
  \para{Case II : \boldmath $\exists \mu \in [h]\,:\,\lambda_{v,\mu}\geq 1/4$.}
  Before looking at the problem of assigning the $\gamma$-classes, we have a look at the requirements for \Cref{lemma:technicalMain} in Case II, i.e., if there is a $\mu\in [h]$ for which $\lambda_{v,\mu}\geq 1/4$. Let $\mu_v$ be one such value $\mu$. In this case, we set $i_v = \mu_v$, $\calL_v=\set{i_v}$, and $\delta_{v,i_v}:=\lfloor \sqrt{R_v}/4\rfloor$. We then have $\delta_{v,i_v}=\big\lfloor\sqrt{\alpha\hat{\beta}_v^2\bar{\tau}h'^2/16}\big\rfloor\geq \beta_v$. The last inequality holds if $\alpha\geq 16$ because $\bar{\tau}$ and $h'$ are positive integers. We therefore clearly have $\delta_{v,i_v}\geq \max\set{\beta_{v,i_v}, \beta_v/q}$ and therefore
    \[
      \frac{4\cdot\max\set{\beta_{v,i_v}, \frac{\beta_v}{q}}}{d_v+1} \leq \frac{4\delta_{v,i_v}}{d_v+1}
      \leq \sqrt{\frac{16 R_v}{16(d_v+1)^2}} = 2^{\mu_v} = 2^{i_v}.
    \]
    Hence, the first part of the requirement of \Cref{lemma:technicalMain} also holds in Case II. Similarly to Case I, we can lower bound the size of the color list $L_{v,\mu_v}$ that is used by $v$:
  \begin{equation}
    |L_{v,\mu_v}| \geq \frac{\lambda_{v,\mu_v} D_v}{(d_v+1)^2} \geq
    \frac{\alpha \tau \cdot R_v}{4(d_v+1)^2} =
    \frac{\alpha}{4}\cdot 4^{i_v}\cdot \tau.
  \end{equation}
  The bound given by \eqref{eq:listbound1} therefore also holds in Case II.

  We now continue considering both cases together. It remains to show (to apply \Cref{lemma:technicalMain}) that 
  \begin{equation}\label{eq:listbound2}
    |L_{v,\mu_v}| \geq  \alpha' \cdot 4^{i_v}\cdot \tau +
    \frac{4}{d_v+1}\cdot\sum_{j=i_v-\lfloor \log q\rfloor}^{i_v-1} \beta_{v,j}\cdot 2^j\cdot \tau
  \end{equation}
  for some constant $\alpha'$ that can be chosen as large as needed by choosing the constant $\alpha$ sufficiently large. Note that we have $g=\lfloor \log q\rfloor$ and thus for $j\in [i_v-\lfloor\log q\rfloor, i_v]$, $\beta_{v,j}\leq \delta_{v,i_v}$. We therefore have
  \[
    \frac{4}{d_v+1}\cdot\sum_{j=i_v-\lfloor \log q\rfloor}^{i_v-1} \beta_{v,j}\cdot 2^j\cdot \tau \leq \frac{4\delta_{v,i_v}\tau}{d_v+1} \cdot\sum_{\ell=1}^{g} 2^{i_v - \ell} < \frac{4\delta_{v,i_v}}{d_v+1}\cdot 2^{i_v}\cdot \tau.
  \]
  We have already seen that $4\delta_{v,i_v}/(d_v+1)\leq 2^{i_v}$ and the bound in the above inequality can therefore be upper bounded by $4^{i_v}\tau$. For every constant $\alpha'>0$, we can therefore choose a constant $\alpha>0$ such that the bound in \eqref{eq:listbound2} is upper bounded by the bound in \eqref{eq:listbound1}. This shows that if $v$ is in Case I, node $v$ satisfies the requirements to apply \Cref{lemma:technicalMain}.
  
  We next also show that the assignment of $\gamma$-classes $i_v$ can be done by using the algorithm of \Cref{lemma:SimpleOLDC}. Recall that every node $v$ needs to pick an $i_v\in \calL_v$ such that the total number of outneighbors $u$ that pick $i_u\in[i_v-g, i_v]$ is at most $\delta_{v,i_v}$. To apply \Cref{lemma:SimpleOLDC}, we have to lower bound $\sum_{i\in \calL_v}(\delta_{v,i}+1)^2$. We have
  \begin{eqnarray*}
    \sum_{i\in \calL_v}(\delta_{v,i}+1)^2
    & \geq &
             \min\set{\frac{1}{16},\sum_{i\in\calL_v}\lambda_{v,\mu_v(i)}}\cdot R_v\\
    & \stackrel{\eqref{eq:finallambdasum}}{\geq} & \frac{1}{20}\cdot R_v\\
    & = & \frac{1}{20}\cdot\alpha\cdot\hat{\beta}_v^2\cdot\bar{\tau}\cdot h'^2.
  \end{eqnarray*}
  Note that we have $g=\lfloor \log h\rfloor$ and $h'\geq \log(8h)$. We therefore have $2h'\geq 2g+1$. By choosing a sufficiently large constant $\alpha$, the above inequality therefore implies that the requirements of \Cref{lemma:SimpleOLDC} are satisfied as long as the value of $\bar{\tau}$ is sufficiently large. We have $\bar{\tau}=\tau(h', [h], m)$. Note that the color space of the OLDC problem that we use to assign the values $i_v$ is $[h]$. Recall that for all $v$ and $\mu$, $\lambda_{v,\mu}=0$ or $\lambda_{v,\mu}\geq 1/(8h)$. For each node $v\in V$, we therefore have
  \[
    \min_{i\in \calL_v} \delta_{v,i} \geq \min\set{\frac{\sqrt{R_v}}{4}, \sqrt{\lambda_{v,\mu_v(i)} R_v}} \geq
    \min\set{\frac{1}{4}, \frac{1}{\sqrt{8h}}}\cdot\sqrt{R_v} \geq \frac{\sqrt{R_v}}{8h} \geq \frac{\beta_v}{8h}.
  \]
  We therefore have
  \[
    \max_{v\in V}\frac{\beta_v}{\min_{i\in\calL_v}\delta_{v,i}+1} \leq 8h.
  \]
  Because we have $h'\geq \log(8h)$, the choice $\bar{\tau}=\tau(h', [h], m)$ satisfies the requirements of \Cref{lemma:SimpleOLDC} and we can therefore compute the $\gamma$-classes $i_v$ for all nodes $v$ by using the algorithm of \Cref{lemma:SimpleOLDC}.

  We will now analyze the required message size and round complexity of the algorithm. In the first phase the OLDC problem on color lists $\mathcal{L}_v$ and defects $\delta_{v, i_v}$ has to be solved. As $|\mathcal{L}_v| \leq h$ and  transmitting such a single defect does not need more than $\log h$ bits, by  \Cref{lemma:SimpleOLDC}, solving this OLDC instance the maximum message size is  $O(h + \log h+ \log m) = O(h + \log m)$ bits. The number of rounds needed for this first phase are $O(h')$. In the second phase we have messages of size $O(\min \{ |\calC| + \Lambda \log |\calC| \} + \log \log \beta)$ (the initial color is already known in the second phase) due to \Cref{lemma:SimpleOLDC}. The round complexity of the second phase is $O(h)$ due to \Cref{lemma:technicalMain}. Combining both phases and using that $h' = O(h) = O(\log \beta)$, the maximum message size and the runtime are as stated. 
\end{proof}


\section{Recursive Color Space Reduction}
\label{sec:spacereduction}

Distributed list defective coloring algorithms first implicitly appeared in \cite{Kuhn20} as a tool to recursively reduce the color space of a distributed coloring problem. In this section, we show that the idea of using list defective colorings to recursively reduce the color space can also be applied directly to the (oriented) list defective coloring problem. In this way, at the cost of requiring slightly larger lists, we can turn a given distributed (oriented) list defective coloring algorithm into another distributed (oriented) list defective coloring algorithm that is faster and/or needs smaller messages. The high-level idea is the following. Assume that we are given an (oriented) list defective coloring problem with colors from a color space $\calC$. We can arbitrarily partition $\calC=\calC_1\cup\dots\cup\calC_p$ into $p$ approximately equal parts. Instead of directly choosing a color, each node $v$ now first just selects the color subspace $\calC_i$ from which $v$ chooses its color. If $v$ starts with color list $L_v$, then after choosing the color subspace $\calC_i$, $v$'s color list reduces to $L_{v,i}=L_v\cap \calC_i$ (with the original defects on those colors). However, $v$ now only has to compete with neighbors that also pick the same color subspace $\calC_i$. The choices of color subspaces by the nodes can itself be phrased as an (oriented) list defective coloring instance for a color space of size $p$ and thus also with lists of size at most $p$. \Cref{thm:spacereduction} in \Cref{sec:contributions}  formalizes this idea. The theorem is stated in a parameterized way for more general oriented list defective coloring algorithms than the one we have given by \Cref{thm:def_local_list_coloring}. 

\restatesecond*
\begin{proof}
  Assume that $\calC$ is the color space for the oriented list defective coloring problem for which we need to build algorithm $\calA'$ and let $k=\lceil\log_p |\calC| \rceil$. W.l.o.g., we can assume that $|\calC|=p^k$ as otherwise, we can just add some dummy colors to $\calC$ without changing the claimed results. We prove the statement by induction on $k$. For $k=1$ (i.e., for the base of the induction), we have $p=|\calC|$, and we can directly set $\calA'=\calA$ to obtain the properties claimed about algorithm $\calA'$. Let us therefore do the induction step from $k-1$ to $k$ and assume that the claimed results hold if $|\calC|=p^{k-1}$.

  We arbitrarily partition the color space $\calC$ of size $|\calC|=p^k$ into $p$ parts $\calC=\calC_1\cup\dots\cup\calC_p$ of size $|\calC_i|=|\calC|/p=p^{k-1}$. Next, each node $v\in V$ has to choose a color subspace $\calC_{i}$. For each node $v\in V$ and every $i\in\set{1,\dots,p}$, we define $L_{v,i}:=L_v\cap \calC_i$ to be the remaining color list if $v$ decides to choose a color from $\calC_i$. For each node $v\in V$, in the following, we use $i_v\in \set{1,\dots,p}$ to denote the subspace $\calC_{i_v}$ that node $v$ picks. Note that after deciding to choose a color from $\calC_{i_v}$, node $v$ is already guaranteed to pick a different color than any outneighbor $w$ with $i_w\neq i_v$. After the nodes pick their subspaces, we therefore have individual independent oriented list defective coloring problems for each color subspace $\calC_i$.

  For each $v\in V$ and each $i\in\set{1,\dots, p}$, we define
  \[
    \lambda_{v,i} := \frac{\sum_{x\in L_{v,i}}(d_v(x)+1)^{1+\nu}}{\beta_v^{1+\nu}\cdot \kappa(p)^k}
    \quad\text{and}\quad
    \beta_{v,i} := \left\lfloor \left(\lambda_{v,i}\cdot \beta_v^{1+\nu}\cdot \kappa(p) \right)^{1/(1+\nu)}\right\rfloor.
  \]
  Note that we therefore have
  \[
    \sum_{x\in L_{v,i}}(d_v(x)+1)^{1+\nu} = \lambda_{v,i}\cdot \beta_v^{1+\nu}\cdot \kappa(p)^k \geq
    \beta_{v,i} ^{1+\nu}\cdot \kappa(p)^{k-1}.
  \]
  If each node $v$ has at most $\beta_{v,i_v}$ outneighbors that pick the same color subspace $\calC_{i_v}$, the induction hypothesis therefore allows us to compute a solution to the remaining oriented list defective problem in $(k-1)\cdot T(p)$ rounds and with messages of at most $M(p)$ bits. It therefore remains to show that we can assign color subspaces $i_v\in \set{1,\dots,p}$ such that each node $v\in V$ has at most $\beta_{v,i_v}$ outneighbors that also choose color subspace $\calC_{i_v}$. For the problem of choosing the color subspace $i_v$ of each node, we use algorithm $\calA$. The problem can be phrased as an oriented list defective coloring problem, where the color list of every node is $\set{1,\dots,p}$ and the defect that node $v$ is allowed to have for color $i$ is $\beta_{v,i}$. Note that for all $v\in V$, we have
  \[
   \sum_{i=1}^p (\beta_{v,i}+1)^{1+\nu} \geq \sum_{i=1}^p \lambda_{v,i}\cdot \beta_v^{1+\nu}\cdot \kappa(p) \geq \beta_v^{1+\nu}\cdot \kappa(p).
  \]
  The last inequality follows because for every $v\in V$, we have $\sum_{i=1}^p \lambda_{v,i} \geq 1$. This follows by the definition of $\lambda_{v,i}$ and because
  \[
    \sum_{i=1}^p \sum_{x\in L_{v,i}} (d_v(x)+1)^{1+\nu} = \sum_{x\in L_{v}} (d_v(x)+1)^{1+\nu} \geq \beta_v^{1+\nu}\cdot \kappa(p)^k.
  \]
  The oriented list defective problem that we need to solve for choosing the color subspaces $i_v$ therefore satisfies the requirements for algorithm $\calA$, and we can therefore choose the subspaces of all nodes in time $T(p)$ and with messages of size $M(p)$. This concludes the induction step and the proof.
\end{proof}

We remark that when replacing $\beta$ by $\Delta$ and $\beta_v$ by $\deg(v)$, the result of \Cref{thm:spacereduction} also holds for the list defective coloring problem in undirected graphs. To see this, assume that we are given an undirected graph $G$. By replacing every undirected edge $\set{u,v}$ by two directed edges $(u,v)$ and $(v,u)$, an oriented list defective coloring problem on the resulting directed graph is equivalent to the corresponding list defective coloring problem on $G$.

 It has been well-known since Linial's seminal work in \cite{Linial1987} that in directed graphs of outdegree at most $\beta$, one can compute a proper $O(\beta^2)$-coloring in $O(\log^* n)$ rounds (or in $O(\log^* m)$ rounds if an initial proper $m$-coloring is provided). In \cite{Kuhn2009}, it was shown that in the same way, one can also compute an oriented $d$-defective coloring with $O((\beta/d)^2)$ colors. In \cite{MausT20}, the coloring result of \cite{Linial1987} was extended to the list coloring problem and in \Cref{sec:OLDC} of this paper (and to a limited extent also in \cite{MausT20}), the defective coloring result of \cite{Kuhn2009} is extended to the oriented list defective coloring problem. While there has been progress on solving the natural list and defective coloring variants of $O(\beta^2)$ coloring, it is still unknown if a coloring with $O(\beta^{2-\eps})$ colors (for some constant $\eps>0$) can be computed in time $f(\beta) + O(\log^* n)$.\footnote{Note that oriented graphs with maximum outdegree $\beta$ have colorings with $O(\beta)$ colors. However, the best distributed algorithm to compute an $O(\beta)$-coloring requires time $O(\log^3\beta \cdot \log n)$~\cite{GhaffariKuhn21}. It is not known if colorings with $O(\beta^{2-\eps})$ colors can be computed with an $n$-dependency $o(\log n)$.} Even if a moderately fast distributed algorithm for better oriented list defective colorings exists, we directly also get much faster algorithms for computing proper colorings with $o(\beta^2)$ colors. In the following, we assume that there exists an oriented defective coloring algorithm with a round complexity that is polynomial in the number of colors per node plus $O(\log^* n)$. Such algorithm, for example, exist for (list) defective colorings in graphs of neighborhood independence at most $\Delta^\eps$~\cite{barenboim11,Kuhn20}.

 \begin{corollary}\label{cor:spacereduction_time}
  Let $\nu \geq 0$ be a parameter and let $\kappa(\Lambda)$ be a non-decreasing functions of the maximum list size $\Lambda$. Assume that we are given a deterministic distributed algorithm $\calA$ that solves oriented list defective coloring instances for which
  \[
    \forall v\in V\,:\,\sum_{x\in L_v}\big(d_v(x)+1\big)^{1+\nu} \geq \beta_v^{1+\nu}\cdot \kappa(\Lambda).
  \]
  Assume further that if an initial proper $m$-coloring is given, $\calA$ has a round complexity of $\poly(\Lambda) + O(\log^* m)$. Then, there exists a $\big(2^{O(\sqrt{\log\beta \log\kappa(\Lambda)})} + O(\log^* m)\big)$-round deterministic distributed list coloring algorithm $\calA'$ to solve list coloring instances with colors from a color space of size $\poly(\beta)$ for which
  \[
    \forall v\in V\,:\,\sum_{x\in L_v} (d_v(x)+1)^{1+\nu} \geq \beta_v^{1+\nu}\cdot 2^{O(\sqrt{\log\beta \cdot \log\kappa(\Lambda)})}.
  \]
\end{corollary}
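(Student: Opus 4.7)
The plan is to apply \Cref{thm:spacereduction} with a single well-chosen value of the branching parameter $p$ that balances the per-level cost $\poly(p)$ against the recursion depth $\log_p|\calC|$. Since $|\calC|=\poly(\beta)$, we have $\log|\calC|=\Theta(\log\beta)$. The natural choice is
\[
p := 2^{\lceil\sqrt{\log\beta\cdot\log\kappa(\Lambda)}\rceil},
\]
which gives $\log p=\Theta(\sqrt{\log\beta\log\kappa(\Lambda)})$ and hence a recursion depth of $\lceil\log_p|\calC|\rceil = O(\sqrt{\log\beta/\log\kappa(\Lambda)})$.

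First I would verify the condition on $\calA'$ produced by \Cref{thm:spacereduction}. Using the monotonicity of $\kappa$, we have $\kappa(p)\leq \kappa(\max\{p,\Lambda\})$, and one may WLOG cap $p$ at $\Lambda$ since enlarging the branching factor beyond $\Lambda$ never helps. Under this convention,
\[
\kappa(p)^{\lceil\log_p|\calC|\rceil} \;\leq\; \kappa(\Lambda)^{O(\sqrt{\log\beta/\log\kappa(\Lambda)})} \;=\; 2^{O(\sqrt{\log\beta\cdot\log\kappa(\Lambda)})},
\]
which matches the condition demanded of $\calA'$ in the corollary statement.

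For the running time, \Cref{thm:spacereduction} yields $O(T(p)\cdot\log_p|\calC|)$ rounds. The polynomial contribution satisfies $\poly(p)=2^{O(\log p)}=2^{O(\sqrt{\log\beta\log\kappa(\Lambda)})}$, so multiplying by the depth $O(\sqrt{\log\beta/\log\kappa(\Lambda)})$ keeps it within the same bound $2^{O(\sqrt{\log\beta\log\kappa(\Lambda)})}$. The delicate point, and the main obstacle, is the $O(\log^*m)$ summand of $T(p)$: multiplied by the recursion depth it becomes $O(\log^*m\cdot\sqrt{\log\beta/\log\kappa(\Lambda)})$, whereas the corollary requires a single additive $O(\log^*m)$.

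To handle this I would invoke Linial's $O(\log^*m)$-round algorithm once at the very start to refine the given $m$-coloring into a proper $O(\beta^2)$-coloring. After this one-time preprocessing, every recursive application of $\calA$ inside \Cref{thm:spacereduction} can reuse this fixed $O(\beta^2)$-coloring -- the underlying graph never changes across the recursion -- so the per-level cost drops to $\poly(p)+O(\log^*\beta)$. The $O(\log^*\beta)$ factor, even multiplied by the depth $O(\sqrt{\log\beta/\log\kappa(\Lambda)})$, is dominated by $2^{O(\sqrt{\log\beta\log\kappa(\Lambda)})}$, and the initial preprocessing contributes the additive $O(\log^*m)$ in the final bound, yielding the claimed complexity.
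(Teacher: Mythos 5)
Your proof is correct and takes essentially the same approach as the paper's: preprocess with Linial's $O(\log^*m)$-round $O(\beta^2)$-coloring so the $\log^*m$ term appears only once and additively, then invoke \cref{thm:spacereduction} with $p=2^{\Theta(\sqrt{\log\beta\log\kappa(\Lambda)})}$, exactly as the paper does.

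One remark on the ``cap $p$ at $\Lambda$'' step. As stated, the justification does not hold up: if you literally replace $p$ by $\Lambda$ when $\Lambda < 2^{\sqrt{\log\beta\log\kappa(\Lambda)}}$, the recursion depth $\lceil\log_p|\calC|\rceil$ becomes $\Theta(\log\beta/\log\Lambda)$, which in that regime is strictly larger than $\sqrt{\log\beta/\log\kappa(\Lambda)}$, so your displayed bound $\kappa(p)^{\lceil\log_p|\calC|\rceil}\le\kappa(\Lambda)^{O(\sqrt{\log\beta/\log\kappa(\Lambda)})}$ no longer follows from the chain (and ``enlarging $p$ beyond $\Lambda$ never helps'' is not true either; it is a trade-off between depth and per-level cost). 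The cleaner observation is that the regime $p>\Lambda$ is vacuous: a node $v$ with some color $x$ satisfying $d_v(x)\ge\beta_v$ can pick $x$ immediately and drop out, and for the remaining nodes $\sum_{x\in L_v}(d_v(x)+1)^{1+\nu}\le\Lambda\,\beta_v^{1+\nu}$, so the corollary's own precondition already forces $\Lambda\ge 2^{\Omega(\sqrt{\log\beta\log\kappa(\Lambda)})}\ge p$ for an appropriate constant in the choice of $p$. Hence $\kappa(p)\le\kappa(\Lambda)$ holds directly, with no capping; this is presumably what the paper's proof tacitly uses when it writes $\kappa(\Lambda)^{\log_p|\calC|}$ in place of $\kappa(p)^{\lceil\log_p|\calC|\rceil}$.
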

\begin{proof}
  At the beginning $\calA'$ uses a standard $O(\log^* m)$-round algorithm of \cite{Linial1987} to properly color $G$ with $O(\beta^2)$ colors. As a result, the time complexity of algorithm $\calA$ afterwards becomes $\poly(\Lambda) + O(\log^* \beta)$. We choose $p=2^{\Theta(\sqrt{\log\beta \log\kappa(\Lambda)})}$ such that $|\calC|=\poly(\beta)=p^{\Theta(\sqrt{\log \beta / \log \kappa(\Lambda)})}$ and we therefore have $\log_p |\calC|= \Theta(\sqrt{\log\beta / \log \kappa(\Lambda)})$. By \Cref{thm:spacereduction}, we then have that algorithm $\calA'$ has a round complexity of $\poly(p)\cdot\log_p |\calC| = 2^{O(\sqrt{\log \beta \log \kappa(\Lambda)})}$ and it requires lists for which
    \[
      \sum_{x\in L_v} (d_v(x)+1)^{1+\nu} \geq \beta_v^{1+\nu}\cdot \kappa(\Lambda)^{\log_p |\calC|} = \beta_v^{1+\nu}  \cdot 2^{O(\sqrt{\log\beta \log\kappa(\Lambda)})}.
    \]
    Where the last equality follows from $\log \kappa(\Lambda) \cdot {\log_p |\calC|} = O \left(\sqrt{\log\beta \log\kappa(\Lambda)}\right)$. This concludes the proof of the corollary.
  \end{proof}
  
Note that the $2^{O(\sqrt{\log \Delta})} + O(\log^* n)$-round algorithm for computing a $(\Delta+1)$-coloring in graphs of bounded neighborhood independence and thus in particular in line graphs of bounded rank hypergraphs is based on the same idea as \Cref{cor:spacereduction_time}. The corollary shows how in some cases, recursive color space reduction can be used to significantly speed up a given (oriented) list defective coloring problem. The following corollary shows that recursive color space reduction can sometimes also be used to significantly reduce the required message size of an (oriented) list defective coloring algorithm. In the oriented list defective coloring algorithm of \Cref{sec:OLDC}, all nodes need to learn the lists and defect vectors of their neighbors and this dominates the required communication. A list $L_v$ of length $|L_v|\leq \Lambda$ consisting of colors from a color space of size $|\calC|$ can be represented by $\min\set{|\calC|, \Lambda\log |\calC|}$ bits and a corresponding defect vector can be represented by $\Lambda\log\beta$ bits, or even by $\Lambda\log\log \beta$ bits if we assume that all defects are integer powers of $2$ (which can usually be assumed at the cost of a factor $2$ in the required list size). In the following, we assume that we have an algorithm that requires $O(|\calC|\cdot B + \log n)$ bits for some parameter $B\geq 1$ (the $\log n$ is included to cover things like exchanging initial colors, unique IDs, etc.).

\begin{corollary}\label{cor:spacereduction_msg}
  Let $\nu \geq 0$ be a parameter and let $\kappa(\Lambda)$ be a non-decreasing functions of the maximum list size $\Lambda$. Assume that we are given a deterministic distributed algorithm $\calA$ that solves oriented list defective coloring instances for which
  \[
    \forall v\in V\,:\,\sum_{x\in L_v}\big(d_v(x)+1\big)^{1+\nu} \geq \beta_v^{1+\nu}\cdot \kappa(\Lambda).
  \] 
  Assume further that $\calA$ has a round complexity of $T(\Lambda)$ and requires messages of $O(|\calC|\cdot B + \log n)$ bits, where $\calC$ the color space and $B\geq 1$ is some parameter. Then, for every integer $r\geq 1$, there exists an $O(T(\Lambda)\cdot r)$-round deterministic distributed list coloring algorithm $\calA'$ to solve list coloring instances with colors from the same color space and for which
  \[
    \forall v\in V\,:\,\sum_{x\in L_v} (d_v(x)+1)^{1+\nu} = \beta_v^{1+\nu}\cdot \kappa(\Lambda)^r.
  \]
  The algorithm $\calA'$ requires messages of size $O(|\calC|^{1/r}\cdot B + \log n)$.
\end{corollary}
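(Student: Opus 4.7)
The plan is to apply Theorem \ref{thm:spacereduction} directly with the parameter choice $p := \lceil |\calC|^{1/r}\rceil$, so that $\lceil \log_p |\calC| \rceil = r$. Feeding this $p$ into the theorem produces an algorithm $\calA'$ of round complexity $O(T(p)\cdot r)$ that handles instances satisfying $\sum_{x \in L_v}(d_v(x)+1)^{1+\nu} \geq \beta_v^{1+\nu}\cdot\kappa(p)^r$. Because the input lists are subsets of $\calC$ we have $p \leq |\calC|$, and in the regime of interest $p\leq \Lambda$; combining this with the monotonicity of $T$ and $\kappa$ upgrades these two bounds to $O(T(\Lambda)\cdot r)$ and $\beta_v^{1+\nu}\cdot\kappa(\Lambda)^r$, which are exactly the claims the corollary makes about time and about the required list condition.

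The piece not handed over verbatim by Theorem \ref{thm:spacereduction} is the message-size bound. The key observation to supply is that in the recursive construction used to prove that theorem, every invocation of the base algorithm $\calA$ is applied to an auxiliary ``choose-a-subspace'' instance whose working color space is exactly $\{1,\dots,p\}$ of size $p$, regardless of the current recursion depth: at each level one partitions the current color space into $p$ equally-sized parts, calls $\calA$ once on the corresponding $p$-color selection problem, and then recurses independently inside each part. Substituting $p$ for the color-space-size parameter in the given message-size expression $O(|\calC|\cdot B + \log n)$ for $\calA$ therefore produces messages of $O(p\cdot B + \log n) = O(|\calC|^{1/r}\cdot B + \log n)$ bits per round, matching the claim.

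The only mild obstacle is precisely this last verification: one has to re-trace the inductive proof of Theorem \ref{thm:spacereduction} and confirm that $\calA$ is never run on a color space larger than $p$ at any recursion level. Once this is in place, the three claims of the corollary (condition, time, message size) all follow from a single application of the theorem with $p=\lceil |\calC|^{1/r}\rceil$, with no further calculation required.
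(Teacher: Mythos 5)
Your proposal matches the paper's proof exactly: the paper also sets $p=\lceil |\calC|^{1/r}\rceil$ so that $\lceil \log_p|\calC|\rceil\le r$, and then simply invokes \Cref{thm:spacereduction}. The extra paragraph you add about re-tracing the induction of \Cref{thm:spacereduction} to confirm that $\calA$ is only ever run on color spaces (and hence lists) of size at most $p$ is a correct and worthwhile clarification of why the theorem's abstract $M(p)$ bound instantiates to $O(|\calC|^{1/r}\cdot B+\log n)$ in this corollary; the paper leaves this implicit.
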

\begin{proof}
  We choose $p=\lceil |\calC|^{1/r}\rceil$ so that $p^r\geq |\calC|$ and thus the color space can be recursively partitioned in $r$ steps. The corollary then follows directly from \Cref{thm:spacereduction}.
\end{proof}

As for \Cref{thm:spacereduction}, when replacing $\beta_v$ by $\deg(v)$, \Cref{cor:spacereduction_time} and \Cref{cor:spacereduction_msg} both also hold for the list defective coloring problem in undirected graphs.


\section{Applying List Defective Colorings}
\label{sec:arbdefective}

In \cite{barenboim16sublinear} and \cite{fraigniaud16local}, Barenboim, and Fraigniaud, Heinrich, and Kosowski developed a technique to transform fast, but relaxed (oriented) list coloring into efficient algorithms for the $(\mathit{degree}+1)$-list coloring problem. The same technique has later also been used by the algorithms in \cite{Kuhn20,BalliuKO20,BalliuBKO22}. The high-level idea of this transformation is as follows. Assume that for some $\alpha>1$, we have a $T$-round algorithm $\calA$ that solves list coloring instances with lists of size $>\alpha\Delta$ in graphs of maximum degree $\Delta$. We can first use a defective $k$-coloring to decompose the graph into $k$ subgraphs of maximum degree $\leq \Delta/(2\alpha)$. One then iterates over those color classes and extends a given partial $(\mathit{degree}+1)$-list coloring. When working on the nodes of some color class, all nodes that still have at least $\Delta/2$ uncolored neighbors also still have a list of size $>\Delta/2$. This is more than $\alpha$ times the maximum degree $\Delta/(2\alpha)$ in the current color class, and we can therefore color such nodes by using algorithm $\calA$. In $k\cdot T$ rounds, we can therefore reduce the maximum degree of our $(\mathit{degree}+1)$-list coloring problem from $\Delta$ to $\Delta/2$ and by repeating $O(\log\Delta)$ times, we can solve the $(\mathit{degree}+1)$-list coloring problem. If the algorithm $\calA$ works on directed graphs of maximum outdegree $\beta$ and requires lists of size $>\alpha\beta$, the same idea also works if we decompose the graph by using an arbdefective coloring instead of a defective coloring.

The contribution of this section is two-fold. First, we show that if we assume the existence of (oriented) list coloring algorithms that are significantly better than the current state of the art, we would directly obtain significantly faster algorithms for the standard $(\Delta+1)$-coloring problem. Moreover, we show that by replacing the algorithm $\calA$ in the description above by an (oriented) list defective coloring algorithm, the technique cannot only be used for the $(\mathit{degree}+1)$-list coloring problem, but it also works for computing arbdefective colorings and more generally list arbdefective colorings. In fact, it works for list arbdefective colorings with lists $L_v$ and defects $d_v$ such that for all $v\in V$, $\sum_{x\in L_v}(d_v(x)+1)>\deg(v)$. In the following, we refer to such instances as $(\mathit{degree}+1)$-list arbdefective coloring instances. We subsequently assume that $\calA_{\nu,\kappa}^D$ is a deterministic distributed list defective coloring algorithm that operates on undirected graphs and $\calA_{\nu,\kappa}^O$ is a deterministic distributed oriented list defective coloring algorithm that operates on directed graphs. For real values $\nu \geq 0$ and $\kappa > 0$ we assume that $\calA_{\nu,\kappa}^D$ and $\calA_{\nu,\kappa}^O$ solve all (oriented) list defective coloring problems for which for all $v\in V$,

\begin{eqnarray}\label{eq:betterdefcoloringcond1}
  \sum_{x\in L_v}\big(d_v(x)+1\big)^{1+\nu} & \geq & \deg(v)^{1+\nu}\cdot \kappa\ \text{ and }\\
  \sum_{x\in L_v}\big(d_v(x)+1\big)^{1+\nu} & \geq & \beta_v^{1+\nu}\cdot \kappa, 
  \label{eq:betterdefcoloringcond2}
\end{eqnarray}
respectively. We assume that the round complexity of algorithm $\calA_{\nu,\kappa}^D$ is $T_{\nu,\kappa}^D$ and that the round complexity of algorithm $\calA_{\nu,\kappa}^O$ is $T_{\nu,\kappa}^O$. \Cref{thm:arbdefective} in \Cref{sec:contributions} shows that by using $\calA_{\nu,\kappa}^O$, one can solve $(\mathit{degree}+1)$-list arbdefective coloring instances in time $O\big(\Lambda^{\frac{\nu}{1+\nu}}\cdot \kappa^{\frac{1}{1+\nu}}\cdot \log(\Delta)\cdot T_{\nu,\kappa}^O + \log^* n\big)$ and by using  $\calA_{\nu,\kappa}^D$ one can solve such list arbdefective colorings in time $O\big(\Lambda^{\nu}\cdot \kappa^2\cdot \log(\Delta)\cdot T_{\nu,\kappa}^D + \log^* n\big)$. 
\restatethird*

\begin{proof} 
  We show how to reduce the original list arbdefective coloring problem on $G$ to a list arbdefective coloring problem on a subgraph of $G$ of maximum degree less than $\Delta/2$. We first argue about how to use the oriented list defective coloring algorithm $\calA_{\nu,\kappa}^O$ to achieve this and we then discuss what changes when using the list defective coloring algorithm $\calA_{\nu,\kappa}^D$. As a first step, we compute an arbdefective coloring of $G$ with
  \begin{equation}
    \label{eq:arbcoloring}
    q := O\left(\Lambda^{\frac{\nu}{1+\nu}}\cdot \kappa^{\frac{1}{1+\nu}}\right) \mathit{colors}\quad\text{and}\quad
    \mathit{arbdefect}\ \delta := \frac{\Delta}{2}\cdot \frac{1}{\Lambda^{\frac{\nu}{1+\nu}}\cdot\kappa^{\frac{1}{1+\nu}}}
  \end{equation}
  Note that $q\cdot \delta = \Theta(\Delta)$ and if at the beginning, we spend $O(\log^* n)$ rounds to compute a proper $O(\Delta^2)$-coloring of $G$, such an arbdefective coloring can therefore be computed in $O(q)$ rounds by using an algorithm of \cite{BarenboimEG18}. For each $i\in \set{1,\dots,q}$, let $V_i$ be the set of nodes that are colored with color $i$ in this arbdefective coloring and let $G_i$ be the directed graph that is given by the subgraph of $G$ induced by the nodes in $V_i$ together with the outdegree $\leq \delta$ edge orientation of this graph that is given as part of the arbdefective coloring. We then iterate through the colors $i\in \set{1,\dots,q}$. In iteration $i$, the goal is to color a subset of the nodes in $V_i$. Throughout the algorithm, we maintain the following. Each node is either colored or uncolored, and a node that is colored remains with this color for the rest of the algorithm. As soon as two neighbors in $G$ are both colored, the edge between them is oriented and remains oriented in this way until the end of the algorithm. For each node $v\in V$ and each color $x\in L_v$, at all times, we use $a_v(x)$ to denote the number of colored neighbors that are colored with color $x$. We make sure that for every colored node $v\in V$, if $v$ is colored with color $x\in L_v$, then the number of neighbors that are colored with color $x$ and for which the edge is pointing away from $v$ is at most $d_v(x)$.

  Let us now focus on the phase, where we process the nodes in $V_i$. Note that at this point, a subset of the nodes in $V_1\cup\dots\cup V_{i-1}$ are already colored and the remaining nodes are still uncolored. Let $V_i'\subseteq V_i$ be the subset of the nodes in $V_i$ that still have at least $\Delta/2$ uncolored neighbors in $G$. Our goal is to color the nodes in $V_i'$ by using the oriented list defective algorithm $\calA_{\nu,\kappa}^O$ on the directed graph $G_i':=G_i[V_i']$. Note that the maximum outdegree in $G_i'$ is at most $\delta$. For each node $v\in V_i'$, we define a list $L_v'$ and a defect function $d_v'$ as follows. The list $L_v'$ contains all colors $x\in L_v$ for which $a_v(x)\leq d_v(x)$. For each color $x\in L_v'$, we then set $d_v'(x) = d_v(x) - a_v(x)$. We then have
  \[
    \forall v\in V_i'\,:\, \sum_{x\in L_v'}(d_v'(x)+1) \geq \sum_{x\in L_v}(d_v(x)+1) - \sum_{x\in L_v} a_v(x) > \deg(v) - (\deg(v) - \lceil\Delta/2\rceil) \geq \Delta/2.
  \]
  For the second inequality, note that the sum over the $a_v(x)$ is equal to the number of colored neighbors of $v$, which is at most $\deg(v) - \lceil\Delta/2\rceil$ because $v$ still has at least $\lceil\Delta/2\rceil$ uncolored neighbors. For all $v\in V_i'$, we then have
  \begin{eqnarray*}
    \sum_{x\in L_v'} (d_v'(x) + 1)^{1+\nu}
    & \geq & \frac{\left(\sum_{x\in L_v'}(d_v'(x) +1)\right)^{1+\nu}}{|L_v'|^{\nu}}\ >\ \frac{(\Delta/2)^{1+\nu}}{|L_v'|^{\nu}}\\
    & \geq & \frac{(\Delta/2)^{1+\nu}}{\Lambda^\nu\cdot \kappa}\cdot \kappa \ =\ \delta^{1+\nu}\cdot\kappa. 
  \end{eqnarray*}
  The first inequality follows from H\"oder's inequality, which is a generalization of the Cauchy-Schwarz inequality to general $l_p$ norms.
  We can therefore use algorithm $\calA_{\nu,\kappa}^O$ to compute an oriented list defective coloring of $G_i'$ (and thus of the nodes in $V_i'$). The colors of $V_i'$ extend the partial list arbdefective coloring of $G$ as follows. All the edges from $V_i'$ to colored nodes in $V_1\cup\dots\cup V_{i-1}$ are oriented away from the node in $V_i'$ and the orientations of the edges between two nodes in $V_i'$ are given by the directions of those edges in $G_i'$. The nodes in $V_1\cup\dots\cup V_{i-1}$ then do not get any new outgoing edges and thus their arbdefect condition trivially still holds. If a node $v\in V_i'$ is colored with color $x\in L_v'$, it obtains at most $d_v'(x)$ additional outneighbors of color $x$ and its total number of outneighbors of color $x$ is therefore $\leq a_v(x) + d_v'(x) = d_v(x)$. After iterating over all $q$ color classes of the arbdefective coloring of $G$, we therefore have a partial list arbdefective coloring of $G$ such that a) every uncolored node $v\in V$ has at most $\Delta/2$ uncolored neighbors (otherwise, $v$ would have been added to the respective set $V_i'$), b) all edges between colored nodes are oriented, and c) for every color $x$, every colored node $v$ of color $x$ has at most $d_v(x)$ colored outneighbors of color $x$. The time for computing this partial list arbdefective coloring is 
  \begin{equation*}
    O\big(q\cdot T_{\nu,\kappa}^O\big) = O\left(\Lambda^{\frac{\nu}{1+\nu}}\cdot\kappa^{\frac{1}{1+\nu}}\cdot T_{\nu,\kappa}^O\right).
  \end{equation*}
  At the end, we orient all edges between uncolored and colored nodes from the uncolored to the colored node. In this way, the nodes that are already colored will remain valid independent of how we color the rest of the graph. Let $\bar{G}=(\bar{V}, \bar{E})$ be the subgraph of $G$ induced by the uncolored nodes. For each node  $v\in \bar{V}$ and each color $x\in L_v$, $v$ already has $a_v(x)$ neighbors of color $x$. Hence, when coloring the graph $\bar{G}$, $v$ can only tolerate $d_v'(x) = d_v(x) - a_v(x)$ additional neighbors of color $x$. We therefore obtain a new list arbdefective coloring problem on $\bar{G}$, where each node $v\in \bar{V}$ gets a list $L_v'$ that contains all colors $x\in L_v$ for which $a_v(x)\leq d_v(x)$ and the allowed arbdefects for $x\in L_v'$ are $d_v'(x) = d_v(x) - a_v(x)$. Note that since $\sum_{x\in L_v}(d_v(x)+1)>\deg_G(v)$, and $\sum_{x\in L_v}a_v(x) = \deg_G(v)-\deg_{\bar{G}}(v)$ (i.e., the number of colored neighbors of $v$), we have $\sum_{x\in L_v'}(d_v'(x)+1)>\deg_{\bar{G}}(v)$. Note that we clearly also have $\sum_{x\in L_v'}(d_v'(x)+1)\geq |L_v'|$ and we can therefore reduce the number of colors in $L_v'$ to at most $\deg_{\bar{G}}(v)+1$. We can therefore proceed with the graph $\bar{G}$ in the same way and again compute a partial coloring to reduce the maximum degree of the uncolored part of the graph from $\Delta/2$ to $\Delta/4$. Let us define the phase where we compute a partial coloring of the current graph to reduce the maximum degree from $\leq \Delta/2^{i-1}$ to $\leq \Delta/2^i$ as stage $i > 0$. Also, let $\Lambda_i$ be the maximum list length in stage $i$. Note that $\Lambda_i\leq \min\set{\Lambda, \Delta/2^{i-1}+1}$. For the overall time, we then obtain
  \[
    O\left(\sum_{i=1}^{\log\Delta} \Lambda_i^{\frac{\nu}{1+\nu}}\cdot\kappa^{\frac{1}{1+\nu}}\cdot  T_{\nu,\kappa}^O\right).
  \]
  This can be upper bounded by $O\big(\Lambda^{\frac{\nu}{1+\nu}}\cdot \kappa^{\frac{1}{1+\nu}}\cdot \log(\Delta)\cdot T_{\nu,\kappa}^O\big)$ and by $O\big(\Lambda^{\frac{\nu}{1+\nu}}\cdot \kappa^{\frac{1}{1+\nu}}\cdot \log(\frac{\Delta}{\Lambda})\cdot T_{\nu,\kappa}^O\big)$ if $\nu\geq \nu_0$ for some constant $\nu_0>0$. 

  Let us now also discuss the differences when using a list defective algorithm $\calA_{\nu,\kappa}^D$ instead of an oriented list defective algorithm $\calA_{\nu,\kappa}^O$. In this case, we decompose the graph by using a defective coloring instead of an arbdefective coloring. Unfortunately, there are no defective coloring algorithms that are similarly efficient as the arbdefective coloring algorithm of \cite{BarenboimEG18}. Instead, we can compute the defective coloring of $G$ by using our algorithm $\calA_{\nu,\kappa}^D$. We can use $\calA_{\nu,\kappa}^D$ to compute a defective coloring with $q$ colors and defect $\delta$ as long as $q\cdot (\delta+1)^{1+\nu} > \Delta^{1+\nu}\cdot\kappa$ (due to \ref{eq:betterdefcoloringcond1}). For the list arbdefective coloring algorithm to work, we need to use the same defect $\delta$ as before. However, the number of colors $q$ now becomes larger. We now get $q=\Theta\big(\big(\frac{\Delta}{\delta}\big)^{1+\nu}\cdot\kappa\big) = \Theta\big(\Lambda^{\nu}\cdot\kappa^2\big)$. The message complexity follows because apart from running the algorithm $\calA_{\nu,\kappa}^D$ as a subroutine, we only need to use the algorithms of \cite{Linial1987} and \cite{BarenboimEG18} to compute the initial proper $O(\beta^2)$-coloring and the arbdefective $q$-coloring. Both algorithms operate in the \CONGEST model and require only $O(\log n)$-bit messages.

  The rest of the proof now follows in almost the same way as before. The graph $G_i$ for which we compute a partial list arbdefective coloring is then an undirected graph. For the neighbors that are not colored in the same instance of $\calA_{\nu,\kappa}^D$, we still orient the edge from the node that is colored later to the node that is colored first. Edges between nodes that are colored by the same instance of $\calA_{\nu,\kappa}^D$ can be colored arbitrarily. 
\end{proof}

\subsection{Implications of \Cref{thm:arbdefective}}

We first discuss two immediate implications of \Cref{thm:arbdefective}, and we afterwards show how the theorem can be used to improve the best current deterministic complexity of the $(\Delta+1)$-coloring problem in the \CONGEST model.

\para{Complexity of Computing (List) Arbdefective Colorings.} For a first immediate implication of \Cref{thm:arbdefective}, we can use the algorithm of \Cref{thm:def_local_list_coloring} as the oriented list defective coloring algorithm $\calA_{\nu,\kappa}^O$. If we assume that the color space that we have is of size $|\calC|=\poly(\beta)$, in this case, $\nu=1$ and $\kappa=O(\log\beta \cdot \log^3\log\beta)$. This results in an arbdefective coloring algorithm that solves instances with lists $L_v$ for which $\forall v\in V\,:\,\sum_{x\in L_v}(d_v(x)+1)>\deg(v)$ in time $O\big(\sqrt{\Lambda}\cdot \log^{5/2}\Delta \cdot \log^{3/2}\log\Delta + \log^* n\big)$. In particular, this implies that for any $d\geq 0$ and any $q>\frac{\Delta}{d+1}$, a $d$-arbdefective $q$-coloring can be computed in time $O\big(\sqrt{\frac{\Delta}{d+1}}\cdot\log^{5/2}\Delta \cdot\log^{3/2}\log\Delta + \log^* n\big)$, which significantly improves the previously best algorithms that achieves the same arbdefective coloring in time $O(\Delta+\log^* n)$~\cite{balliu2021hideandseek} or a more relaxed $d$-arbdefective $O\big(\frac{\Delta}{d+1}\big)$-coloring in time $O\big(\frac{\Delta}{d+1}+\log^* n\big)$. Note also that the condition $\forall v\in V\,:\,\sum_{x\in L_v}(d_v(x)+1)>\deg(v)$ is necessary in order to compute a (list) arbdefective coloring in time $f(\Delta)+O(\log^* n)$. If the condition does not hold, any deterministic algorithm for the problem requires at least $\Omega(\log_\Delta n)$ rounds~\cite{balliu2021hideandseek}.

\para{Better List Defective Coloring Implies Better \boldmath$(\Delta+1)$-Coloring.} The theorem in particular also implies that certain progress on (oriented) list defective coloring algorithms would directly lead to faster algorithms for the standard $(\Delta+1)$-coloring problem. Assume that for an initial $m$-coloring of the graph, we have an oriented list defective coloring algorithm with a round complexity that is $\poly(\Lambda) + O(\log^* m)$ and that satisfies equation \eqref{eq:betterdefcoloringcond1} for any constant $\nu<1$. In combination with \Cref{cor:spacereduction_time}, \Cref{thm:arbdefective} then implies that we then obtain a $(\mathit{degree}+1)$-list coloring (and thus $(\Delta+1)$-coloring) algorithm with a time complexity of $O\big(\Delta^{\frac{\nu}{1+\nu}+o(1)} + \log^* n\big)$, which would be polynomial improvement over the $O(\sqrt{\Delta\log\Delta}+\log^* n)$-round algorithm of \cite{fraigniaud16local,BarenboimEG18,MausT20}. The same would be true if we had a list defective coloring algorithm with a round complexity of $\poly(\Lambda) + O(\log^* m)$ and that satisfies equation \eqref{eq:betterdefcoloringcond2} for any constant $\nu<1/2$. We believe that if it is possible to significantly improve the current best $O(\sqrt{\Delta\log\Delta}+\log^* n)$-round complexity of $(\Delta+1)$-coloring, the key will be to better understand the distributed complexity of (oriented) defective colorings and probably also of the more general (oriented) list defective colorings.

\para{Complexity of \boldmath$(\Delta+1)$-Coloring in the \CONGEST Model.} Apart from the standard $(\Delta+1)$-coloring problem, in the following, we also consider the general $(\mathit{degree}+1)$-list coloring problem. In order to keep the results simple and because this is the most interesting case, we will assume that we have $(\mathit{degree}+1)$-list coloring instances with a color space of size at most $\poly(\Delta)$. Note that in the case of the standard $(\Delta+1)$-coloring problem, the color space is of size $\Delta+1$. For small $\Delta$, the best $(\Delta+1)$-coloring algorithm in the \LOCAL model has a round complexity of $O(\sqrt{\Delta\log\Delta} + \log^* n)$~\cite{fraigniaud16local,BarenboimEG18,MausT20} and the following theorem shows that this round complexity can almost be matched in the \CONGEST model.

\restatecongest*
\begin{proof}
  If $\Delta$ is moderately large, the fastest known deterministic $(\mathit{degree}+1)$-list coloring algorithm in the \CONGEST model (for lists of size $\poly(\Delta)$) has a time complexity of $O(\log^2\Delta \cdot\log n)$ due to \cite{GhaffariKuhn21}. If $\Delta > \log^2 n$, the time complexity of this algorithm is $O(\sqrt{\Delta}\cdot\log^2\Delta)$. We can therefore assume that $\Delta \leq \log^2 n$.

  As a first step, we use a standard algorithm of \cite{Linial1987} to properly color $G$ with $O(\Delta^2)$ colors in $O(\log^* n)$ rounds. We then design an oriented list defective coloring algorithm $\calA$ that can be implemented efficiently in the \CONGEST model and that can afterwards be used in \Cref{thm:arbdefective}. For this, we use \Cref{cor:spacereduction_msg} in combination with the algorithm of \Cref{thm:def_local_list_coloring}. For graphs with outdegree $\beta=\poly\Delta$, the algorithm of \Cref{thm:def_local_list_coloring} requires messages of $O(|\calC| + \log \Delta)$ bits. If we assume that $|\calC|=O(\Delta^p)$ and set the parameter $r$ in \Cref{cor:spacereduction_msg} to $r=2p$, we obtain an algorithm $\calA$ with $O(\sqrt{\Delta} + \log n)=O(\log n)$-bit messages (recall that we assume $\Delta\leq\log^2 n$). For $\beta=\poly\Delta$, the algorithm has a round complexity of $O(\log\Delta)$.  Using the fact that the color space and the given proper coloring of $G$ are both of size $\poly\Delta$, \Cref{thm:def_local_list_coloring} together with \Cref{cor:spacereduction_msg} imply that the resulting algorithm $\calA$ requires lists $L_v$ such that
  \begin{equation}\label{eq:conditionA}
    \forall v\in V\,:\,\sum_{x\in L_v}(d_v(x)+1)^2 \geq \beta_v^2\cdot \Theta\big(\log^{2p}\Delta\cdot \log^{6p}\log\Delta\big). 
  \end{equation}
  We can now use \Cref{thm:arbdefective} to solve $(\mathit{degree}+1)$-list coloring on $G$. Note that $(\mathit{degree}+1)$-list coloring is a special case of list arbdefective coloring, where the size of $v$'s list is $\deg(v)+1$ and all defects are equal to $0$. We therefore have $\Lambda=\Delta+1$. As discussed, algorithm $\calA$ has a round complexity of $T_{\calA} =O(\log\Delta)$ and all lists must require the condition given by \eqref{eq:conditionA}. \Cref{thm:arbdefective} implies that the resulting $(\mathit{degree}+1)$-list coloring algorithm has a round complexity of $O\big(\Lambda^{\frac{\nu}{1+\nu}}\cdot\kappa_{\calA}^{\frac{1}{1+\nu}}\cdot \log\big(\frac{\Delta}{\Lambda}\big)\cdot T_{\calA}\big)$, where $\Lambda=\Delta+1$, $\nu=1$, and $\kappa_{\calA}=\Theta\big(\log^{2p}\Delta\cdot \log^{6p}\log\Delta\big).$ We therefore get the round complexity claimed by the theorem statement.
\end{proof}



\bibliography{references}
\clearpage

\appendix

\section{Sequential List Defective Coloring Algorithms}
\label{sec:existential}

In this section, we give sequential algorithms to compute list defective and list arbdefective colorings. The two algorithms give sufficient conditions on when such colorings exist. We will see that both conditions are also necessary on the complete graph $K_{\Delta+1}$ if all nodes have the same lists.

\begin{lemma}\label{lemma:defectiveexistence}
  Let $G=(V,E)$ be a graph and assume that we are given a list defective coloring instance with lists $L_v\subseteq\calC$ and defect functions $d_v$. The given list defective coloring instance can be solved if
  \[
    \forall v\in V\,:\,\sum_{x\in L_v} (d_v(x)+1) > \deg_G(v).
  \]
\end{lemma}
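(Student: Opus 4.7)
The plan is to adapt Lov\'asz's classical potential-function argument~\cite{lovasz66} to handle vertex-specific defects. For a list vertex coloring $\varphi:V\to\calC$ with $\varphi(v)\in L_v$, let $m_v(c):=|\{u\in N(v):\varphi(u)=c\}|$. I would consider the real-valued potential
$$\Phi(\varphi):=\sum_{v\in V}\frac{m_v(\varphi(v))}{d_v(\varphi(v))+1}$$
and take a minimizing coloring $\varphi^*$, which exists since there are only finitely many valid list colorings. Note that when $d_v(x)\equiv d$ and all $L_v=\{1,\dots,c\}$, $\Phi$ reduces (up to the usual factor of $2$) to $\sum_i e(G_i)/(d+1)$, the potential from Lov\'asz's original proof. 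The aim is to show that $\varphi^*$ is a valid list defective coloring.

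Suppose for contradiction that some $v\in V$ violates, so $m_v(c)>d_v(c)$ where $c:=\varphi^*(v)$, giving $\frac{m_v(c)}{d_v(c)+1}\ge 1$. From the hypothesis together with $\deg_G(v)\ge\sum_{x\in L_v}m_v(x)$, pigeonhole produces some $x^*\in L_v$ with $m_v(x^*)\le d_v(x^*)$, and hence $\frac{m_v(x^*)}{d_v(x^*)+1}<1$; so the \emph{$v$-contribution} to $\Phi$ strictly decreases if we recolor $v$ with $x^*$. The full change under recoloring $c\to x$ is
$$\Delta\Phi(x)=\left[\frac{m_v(x)}{d_v(x)+1}-\frac{m_v(c)}{d_v(c)+1}\right]+\sum_{u\in N(v),\,\varphi^*(u)=x}\frac{1}{d_u(x)+1}-\sum_{u\in N(v),\,\varphi^*(u)=c}\frac{1}{d_u(c)+1},$$
and the goal is to exhibit some $x\in L_v$ with $\Delta\Phi(x)<0$, contradicting minimality.

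The main obstacle is that the last two sums involve the neighbors' defect functions $d_u$, which the local hypothesis at $v$ does not control. The plan to handle this is an averaging argument over $x\in L_v$: after the rearrangement $\sum_{x\in L_v}\sum_{u\in N(v),\,\varphi^*(u)=x}\frac{1}{d_u(x)+1}=\sum_{u\in N(v),\,\varphi^*(u)\in L_v}\frac{1}{d_u(\varphi^*(u))+1}$, the average of $\Delta\Phi(x)$ collapses into a single combination of quantities localized at $v$'s neighborhood, and the hypothesis $\sum_x(d_v(x)+1)>\deg_G(v)$ should furnish enough slack to force that average to be strictly negative, so that at least one $x$ achieves $\Delta\Phi(x)<0$. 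If the averaging does not close cleanly (for example when $|L_v|$ is small while the individual $d_v(x)$ are highly skewed), the fallback is to use a lexicographic objective, minimizing first the number of violating vertices and then $\Phi$ as a tiebreaker; a swap of a violating $v$ to an $x^*$ with $m_v(x^*)\le d_v(x^*)$ then either strictly reduces the primary count or, if it does not, strictly reduces $\Phi$ at constant primary count. Iterating drives the number of violating vertices to zero, yielding the desired valid list defective coloring.
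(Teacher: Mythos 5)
Your choice of potential $\Phi(\varphi)=\sum_{v}\frac{m_v(\varphi(v))}{d_v(\varphi(v))+1}$ is the natural normalization, but you have correctly identified the obstacle and, as written, you have not overcome it. The averaging route does not actually eliminate the neighbor terms: after the rearrangement you quote, the average of $\Delta\Phi(x)$ over $x\in L_v$ still contains $\frac{1}{|L_v|}\sum_{u\in N(v),\,\varphi^*(u)\in L_v}\frac{1}{d_u(\varphi^*(u))+1}$ and $-\sum_{u\in N(v),\,\varphi^*(u)=c}\frac{1}{d_u(c)+1}$, both of which depend on the \emph{neighbors'} defect functions $d_u$, which the hypothesis $\sum_{x\in L_v}(d_v(x)+1)>\deg_G(v)$ says nothing about. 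Nothing in the hypothesis prevents, say, all neighbors $u$ with $\varphi^*(u)=x^*$ from having $d_u(x^*)=0$, making the positive neighbor contribution as large as $m_v(x^*)$ while the first bracket contributes only slightly less than $1$; the average can be nonnegative. You have not shown the average is negative, only asserted it ``should'' be.

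The lexicographic fallback is also broken. Recoloring a violating $v$ from $c$ to $x^*$ makes $v$ non-violating, but it adds a monochromatic edge to each of the $m_v(x^*)$ neighbors $u$ with $\varphi^*(u)=x^*$, any of whom can thereby cross their threshold $d_u(x^*)$ and become violating. The primary count can go \emph{up} by as much as $m_v(x^*)-1$, and when the primary count stays flat you have not shown $\Phi$ decreases (it is exactly the same uncontrolled quantity as before). So ``iterating drives the count to zero'' does not follow.

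The fix the paper uses is to change the potential so that the neighbor contributions become automatic. Take $\Phi := M + \sum_{v\in V}\big(\deg_G(v)-d_v(\varphi(v))\big)$, where $M$ is the (unweighted) number of monochromatic edges. Unweighted $M$ means each edge $\{v,u\}$ contributes exactly $\pm 1$ when it becomes/ceases to be monochromatic, with no dependence on $d_u$, so the neighbor bookkeeping is exact rather than something to be averaged away. The second sum is a drift term that absorbs the change in $v$'s own allowance: recoloring an unhappy $v$ from $c$ to a color $y$ with $m_v(y)\le d_v(y)$ (which your pigeonhole step, used exactly as in the paper, guarantees exists) changes $M$ by at most $d_v(y)-(d_v(c)+1)$ and changes the drift term by $d_v(c)-d_v(y)$, for a net change $\le -1$. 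Since $\Phi$ is an integer bounded below, the process terminates with no unhappy node. Your approach could perhaps be rescued, but only by effectively reinventing this cancellation; as stated it leaves the crucial inequality unproven.
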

\begin{proof}
    The proof is a natural generalization of a proof in \cite{lovasz66} for the (standard) special case, where all lists are the same and all colors have the same defect. We start with an arbitrary initial coloring, where each node $v\in V$ is assigned an arbitrary color from its list $L_v$. We then iteratively transform this coloring into one that satisfies all the defect requirements. In the following, we use $x_v\in L_v$ to denote the current color of a node $v\in V$. Define a node $v\in V$ to be unhappy if $v$ is currently colored with a color $x_v$ for which it has more than $d_{v}(x_v)$ neighbors of color $x_v$. As long as there are unhappy nodes, we pick an arbitrary unhappy node $v$ and recolor $v$ with some color $y\in \calC$ for which $v$ has at most $d_v(y)$ neighbors of color $y$. Note that by the assumption that $\sum_{x\in L_v} (d_v(x)+1) > \deg_G(v)$, such a color must always exist for node $v$.
    
    To show that this process converges to a state in which there are no unhappy nodes, we use a potential function argument. For a given coloring, let $M$ be the number of monochromatic edges. We then define a potential $\Phi$ of a given coloring as follows.
    \[
    \Phi := M + \sum_{v\in V}(\deg_G(v) - d_{v}(x_v)).
    \]
    Note that initially, $\Phi \leq 3|E|$ and at all times, we have $\Phi\geq 0$. Now, consider the change of the potential function in one recoloring step. Assume that node $v$ is recolored from color $x_v$ to color $y$. Let $\Phi$ be the potential before the recoloring and let $\Phi'$ be the potential after the recoloring. We have
    \[
    \Phi' - \Phi \leq d_v(y) - (d_v(x_v)+1) + (\deg_G(v) - d_v(y)) - (\deg_G(v) - d_{v}(x_v)) = -1. 
    \]
    The potential therefore strictly decreases in each step until we reach a state where there are no unhappy nodes. Note that if the potential reaches $0$, we have $0$ monochromatic edges and $d_{v}(x_v)=\deg_G(v)\geq 0$ for all $v$ and, in this case, we therefore also clearly do not have any unhappy nodes.
\end{proof}

Note that the lemma is tight for some graphs if all nodes have the same list. If $G=K_{\Delta+1}$ is a complete graph of size $\Delta+1$ and all nodes have the same list $L$ and defect function $d$ with $\sum_{x\in L}(d(x)+1)=\Delta$, then by pigeonhole principle, there must be some color $x\in L$ for which at least $d(x)+2$ nodes of $G$ are colored with color $x$. 

\begin{lemma}\label{lemma:arbdefectiveexistence}
    Let $G=(V,E)$ be a graph and assume that we are given a list arbdefective coloring instance with lists $L_v\subseteq\calC$ and defect functions $d_v$. The given list arbdefective coloring instance can be solved if
  \[
    \forall v\in V\,:\,\sum_{x\in L_v} (2d_v(x)+1) > \deg_G(v).
  \]
\end{lemma}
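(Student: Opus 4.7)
The plan is to reduce the list arbdefective existence statement to Lemma \ref{lemma:defectiveexistence} and to combine it with a balanced orientation argument inside each color class. Concretely, I would first apply Lemma \ref{lemma:defectiveexistence} with the substituted defect functions $d'_v(x) := 2d_v(x)$. The required hypothesis $\sum_{x\in L_v}(d'_v(x)+1)>\deg_G(v)$ is exactly the assumption of the lemma we want to prove, so we obtain a list vertex coloring $\varphi:V\to\calC$ with $\varphi(v)\in L_v$ such that every node $v$ has at most $2d_v(\varphi(v))$ neighbors of its own color.

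Next, for every color $c\in \calC$, let $H_c$ be the subgraph of $G$ induced by $\varphi^{-1}(c)$. By the guarantee from the first step, $\deg_{H_c}(v)\leq 2d_v(c)$ for every $v\in \varphi^{-1}(c)$. The remaining task is to orient all edges of $H_c$ so that every $v\in \varphi^{-1}(c)$ has out-degree at most $d_v(c)$ inside $H_c$. To do this, I would use the standard Eulerian trick: within each connected component of $H_c$, add an auxiliary vertex $u^{\ast}$ connected to all vertices of odd degree (so the augmented component has all even degrees), take an Eulerian circuit, orient every edge in the direction in which the circuit traverses it, and finally delete $u^{\ast}$. This produces an orientation of $H_c$ in which $|\mathrm{outdeg}(v)-\mathrm{indeg}(v)|\leq 1$ for every $v$, hence $\mathrm{outdeg}(v)\leq \lceil \deg_{H_c}(v)/2\rceil$.

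A brief verification handles both parities: if $\deg_{H_c}(v)=2k$ is even, then $k\leq d_v(c)$ and the out-degree is at most $k\leq d_v(c)$; if $\deg_{H_c}(v)=2k+1$ is odd, then $2k+1\leq 2d_v(c)$ together with integrality forces $k+1\leq d_v(c)$, and the out-degree is at most $k+1\leq d_v(c)$. Edges of $G$ that are not monochromatic can be oriented arbitrarily, since they do not contribute to any out-neighbor of the same color. Combined, $\varphi$ together with the constructed edge orientation $\sigma$ is a valid list arbdefective coloring.

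The only non-routine ingredient is the balanced-orientation step, and even that is completely standard; the main conceptual point is the choice of the multiplier $2$ in the defect, which is what makes $\sum_x(2d_v(x)+1)>\deg_G(v)$ the right hypothesis and leaves exactly enough slack for the Eulerian halving to bring the out-degree from $2d_v(c)$ down to $d_v(c)$. I expect the mildest care is required in the odd-degree case above, where one must use that the $d_v(c)$ are integers to convert the slack $2k+1\leq 2d_v(c)$ into $k+1\leq d_v(c)$; a one-line remark suffices. Tightness on $K_{\Delta+1}$ with a common list then mirrors the tightness comment following Lemma \ref{lemma:defectiveexistence}, since any orientation of $K_{\Delta+1}$ restricted to a color class of $m$ vertices contains a vertex of out-degree $\geq (m-1)/2$, forcing $\sum_{x\in L}(2d(x)+1)>\Delta$ to be necessary.
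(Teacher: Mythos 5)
Your proof is correct and follows essentially the same route as the paper's: apply Lemma~\ref{lemma:defectiveexistence} with doubled defects $d'_v(x)=2d_v(x)$, then halve the out-degree within each monochromatic subgraph via a balanced (Eulerian) orientation. The only cosmetic difference is the auxiliary construction for the Euler circuit — you add a single dummy vertex attached to all odd-degree vertices, while the paper pairs the odd-degree vertices by an arbitrary perfect matching; both are standard ways of making all degrees even, and your parity check (using integrality of $d_v(c)$ to go from $2k+1\le 2d_v(c)$ to $k+1\le d_v(c)$) is exactly the observation justifying $\lceil\deg_{H_c}(v)/2\rceil\le d_v(c)$ that the paper invokes implicitly.
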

\begin{proof}
    We first compute a list defective coloring for the lists and the defect function $d_v'$ with $d_v'(x)=2d_v(x)$ for all $x\in L_v$. By \Cref{lemma:defectiveexistence}, such a coloring exists. For each color $x\in \calC$, we now define $G_x=(V_x,E_x)$ to be the subgraph of $G$ induced by the nodes $V_x\subseteq V$ of color $x$. For each $v\in V_x$, let $\delta_x(v)$ be the degree of node $v$ in $G_x$. Note that $\delta_x(v)\leq 2d_v(x)$. In $G_x$, we can orient the edges such that each node $v\in V_x$ has outdegree at most $d_v(x)$ in $G_x$. To see this, consider a graph $G_x'$ that is obtained from $G_x$ by adding an arbitrary perfect matching on the set of odd-degree vertices of $G_x$. The graph $G_x'$ has only even degree and it thus has an Euler tour. If we orient all edges along this Euler tour, then the outdegree of each node $v\in V_x$ is at most $\lceil \delta_x/2\rceil\leq d_v(x)$. Doing this for all colors directly gives a valid solution for the given arbdefective list coloring instance.
\end{proof}

The lemma is again tight if $G=K_{\Delta+1}$ and if all nodes have the same list $L$ and defect function $d$. Assume that $\sum_{x\in L}(2d(x)+1)=\Delta$. Then, by pigeonhole principle, there must be a color $x$ such that $x\in L$ and such that $2d(x)+2$ of the $\Delta+1$ nodes of $G$ are colored with color $x$. On average, those nodes have $d(x)+1$ outneighbors with color $x$ and therefore, there clearly must be one node with color $x$ and with at least $d(x)+1$ outneighbors with color $x$.



\section{Detailed Proof of Lemma \ref{lemma:InterfaceToYannic}}
\label{sec:adaptionOf}
\begin{lemma}[adapted from \cite{MausT20}]
    Let $\gamma,\tau,\tau'\geq 1$ be three integer parameters such that $\tau\geq 8\log\gamma + 2\log\log|\calC| + 2\log\log m + 16$ and $\tau' = 2^{\tau - \lceil\log(2e\gamma^2)\rceil}$. For every color list $L\in \binom{\calC}{\ell}$ of size $\ell$ for some $\ell\geq 2e\gamma^2\tau$, we further define $S(L) := \stackedbinom{L}{\gamma\tau}{\gamma\tau'}$.
      Then, there exists $\bar{S}(L)\subseteq S(L)$ such that $|\bar{S}(L)|\geq |S(L)|/2$ and such that for every $K\in \bar{S}(L)$ and every $L'\in \binom{\calC}{\ell}$, there are at most $d_2<\frac{1}{4m|\calC|^{\ell}}\cdot|S(L)|$ different $K'\in S(L')$ such that $(K, K') \in \Psi(\tau',\tau)$ or $(K', K) \in \Psi(\tau',\tau)$. Further, for every $K\in S(L)$ and every $L'\in \binom{\calC}{\ell}$, there are at most $d_2$ different $K'\in S(L')$ for which $(K,K')\in\Psi(\tau',\tau)$. 
\end{lemma}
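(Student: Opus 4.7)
The plan is to follow the two-step counting structure of Maus and Tonoyan (their Lemmas 3.3 and 3.4), specialising to a single list size $\ell$, and to convert a one-directional bound into the two-directional bound for $\bar{S}(L)$ via an averaging argument. I would proceed in three stages: bound single-set $\tau$-intersections, lift this bound to the relation $\Psi(\tau',\tau)$ via a certificate-counting argument, and finally use Markov to extract the subset $\bar{S}(L)$.

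\textbf{Stage 1: single-set bound.} For any $C\in\binom{L}{\gamma\tau}$ and any $L'\in\binom{\calC}{\ell}$, the number of $C'\in\binom{L'}{\gamma\tau}$ with $|C\cap C'|\geq\tau$ is at most
\[
d_1 \;:=\; \binom{\gamma\tau}{\tau}\binom{\ell-\tau}{\gamma\tau-\tau},
\]
since one chooses $\tau$ intersecting colors inside $C$ and extends arbitrarily inside $L'$. The assumption $\ell\geq 2e\gamma^2\tau$ yields the standard estimate $d_1/\binom{\ell}{\gamma\tau}\leq 2^{-\tau}$.

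\textbf{Stage 2: bounding $\Psi$-conflicts in one direction.} For a fixed $K\in S(L)$ and any $L'$, I would count $K'\in S(L')$ with $(K,K')\in\Psi(\tau',\tau)$ by overcounting certificates: such a $K'$ must contain $\tau'$ sets that each $\tau$-intersect some set of $K$; pick the $\tau'$ certificate-pairs (a position inside $K'$, together with an index in $K$ whose at-most-$d_1$ neighbors provide the conflicting $C'$) in at most $\binom{\gamma\tau'\cdot d_1}{\tau'}$ ways, and then fill the remaining $\gamma\tau'-\tau'$ positions of $K'$ arbitrarily from $\binom{L'}{\gamma\tau}$. This yields an explicit bound
\[
d_2 \;\leq\; \binom{\gamma\tau'\cdot d_1}{\tau'}\binom{\binom{\ell}{\gamma\tau}-\tau'}{\gamma\tau'-\tau'}.
\]
Dividing by $|S(L)|=\binom{\binom{\ell}{\gamma\tau}}{\gamma\tau'}$ and using $d_1/\binom{\ell}{\gamma\tau}\leq 2^{-\tau}$ together with $\tau'\leq 2^\tau/(2e\gamma^2)$ shows $d_2/|S(L)|\leq 2^{-\Omega(\tau)}$. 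The precise threshold $\tau\geq 8\log\gamma+2\log\log|\calC|+2\log\log m+16$ is then exactly what is needed to push this below $1/(12 m|\calC|^{\ell})$: each additive summand in the threshold is calibrated to absorb one of the factors ($\gamma$, $\log|\calC|$, $\log m$, and an absolute constant) appearing when we compare $\log d_2 -\log|S(L)|$ against $-\log(4m|\calC|^\ell)$. This already proves the last (unconditional) statement of the lemma, with $d_2$ as displayed.

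\textbf{Stage 3: the reverse direction via averaging.} Applying Stage 2 with $L$ and $L'$ swapped shows that every $K'\in S(L')$ has at most $d_2$ partners $K\in S(L)$ with $(K',K)\in\Psi(\tau',\tau)$. Summing and using $|S(L)|=|S(L')|$ (both depend only on $\ell$), the average over $K\in S(L)$ of the number of incoming conflicts is at most $d_2$, and by Markov the set of $K$ with more than $2d_2$ incoming conflicts has cardinality at most $|S(L)|/2$. Defining $\bar{S}(L)$ to be its complement, every $K\in\bar{S}(L)$ has at most $2d_2$ incoming and at most $d_2$ outgoing conflicts, so at most $3d_2$ in either direction; re-scaling $d_2$ by a factor of $3$ (which is absorbed by the constants in the threshold on $\tau$) gives the stated form $d_2<\frac{|S(L)|}{4m|\calC|^\ell}$. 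The main obstacle will be Stage 2: the delicate binomial comparison that certifies the precise additive constants in the threshold on $\tau$ suffice, and tracking that $\tau'=2^{\tau-\lceil\log(2e\gamma^2)\rceil}$ leaves exactly enough slack to make the final ratio small.
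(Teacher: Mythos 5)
Your Stages 1 and 2 match the paper's argument (their $d_1$, and their $d_2'$ — you call it $d_2$ — with the same counting and the same use of the hypotheses $\ell \geq 2e\gamma^2\tau$ and $\tau' = 2^{\tau-\lceil\log(2e\gamma^2)\rceil}$). However, Stage 3 has a genuine gap: the Markov/averaging argument you use to extract $\bar{S}(L)$ is carried out for a \emph{fixed} $L'$, and it produces a ``good half'' of $S(L)$ that a priori depends on $L'$. The lemma requires a single $\bar{S}(L)$ of size $\geq |S(L)|/2$ whose elements have few conflicts with $S(L')$ \emph{simultaneously for all} $L' \in \binom{\calC}{\ell}$; intersecting the per-$L'$ good halves could yield the empty set, since there are $\binom{|\calC|}{\ell}$ choices of $L'$, far more than two.

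The paper avoids this by defining $\bar{S}(L)$ using only conflicts \emph{within} $S(L)$ (take $L' = L$ in the counting; the directed ``conflict graph'' on $S(L)$ has out-degree $\leq d_2'$, so at most half the vertices have total in+out degree $> 4d_2' =: d_2$), and then transfers the bound to arbitrary $L'$ by a color-replacement argument: fix a bijection from $L'$ to $L$, which induces a map $S(L') \to S(L)$; since every $C \in K$ is a subset of $L$, replacing a color $x \in L' \setminus L$ by a color in $L$ can only increase intersections $|C \cap C'|$, so conflicts are preserved under this map, and hence the number of $K' \in S(L')$ conflicting with a fixed $K$ is bounded by the corresponding count over $S(L)$. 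This replacement step is what you are missing, and without it the ``for every $L'$'' quantifier in the lemma statement is not established.
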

\begin{proof}
    Let $k := \gamma \tau$, $L \in \binom{\mathcal{C}}{\ell}$ and let $C \in \binom{L}{k}$. Each $C' \in \binom{L}{k}$ s.t. $|C \cap C'|<\tau$ can be constructed by first choosing $\tau$ elements from $C$ and then adding $k - \tau$ elements of the $\ell - \tau$ remaining colors. This can be done in at most $d_1 := \binom{k}{\tau} \binom{\ell - \tau}{k - \tau}$ many ways. \par
    Let $k' := \gamma \tau'$ and let $K \in S(L)$ of size $k'$. Let $X \subseteq K$ be the set of lists $C' \in \binom{L}{k}$ that are in conflict with at least one element of $C \in K$ i.e. $|C \cap C'|<\tau$. By the above observation $|X| \leq k' d_1$. Each $K' \in S(L)$ s.t. $(K, K') \in \Psi(\tau', \tau)$ can be constructed by adding $\tau'$ lists from $X$ and adding $k' - \tau'$ arbitrary lists. This can be done in at most
    \begin{align*}
        d_2' = \binom{k'd_1}{\tau'} \cdot \binom{\binom{\ell}{k} - \tau'}{k' - \tau'}
    \end{align*}
    many ways. Let $\bar{S}(L) \subseteq S(L)$ be a set s.t. for each element $K \in \bar{S}(L)$ there are at most $d_2 := 4 d_2'$ many $K' \in S(L)$ with $(K, K') \in \Psi(\tau', \tau)$ or $(K', K) \in \Psi(\tau', \tau)$.
    \begin{claim}
        \label{claim:GoodSet}
        $|\bar{S}(L)| \geq |S(L)|/2$
    \end{claim}
    \begin{proof}\renewcommand{\qedsymbol}{$\blacksquare$}
        Let $H = (V_H, E_H)$ be a directed graph over the vertex set $V_H = S(L)$ where $(K, K') \in E_H$ iff $(K, K') \in \Psi(\tau', \tau)$. The outdegree of each node $K$ is by the above analysis, at most $d_2/4$. Hence, the total number of edges is $|E_H| \leq |S(L)| \cdot d_2/4$. Hence, at most half of the nodes do have an (undirected) degree higher than $d_2$. 
    \end{proof}
    We want to show that for for every $K\in \bar{S}(L)$ and every $L'\in \binom{\calC}{\ell}$, there are at most $d_2$ different $K'\in S(L')$ such that $(K, K') \in \Psi(\tau',\tau)$ or $(K', K) \in \Psi(\tau',\tau)$. The above statement shows this for the special case where $L = L'$. Now assume $L \not = L'$. There exists a color $x \in L' \setminus L$. Let $K''$ be the set $K'$ where the color $x$ in each element of $K'$ is replaced with an arbitrary color from $L$. By the definition of $\Psi$, it is clear that if $(K', K) \in \Psi(\tau', \tau)$ then also $(K'', K) \in \Psi(\tau', \tau)$ and if $(K, K') \in \Psi(\tau', \tau)$ then also $(K, K'') \in \Psi(\tau', \tau)$ (note that the other direction does not hold in general). Hence, the number of conflicts is only increasing by replacing colors from $L'$ with colors in $L$.\par
    The statement of the lemma then follows by \cref{claim:BoundDegreed2}. 
\end{proof}
\begin{claim}
    \label{claim:BoundDegreed2}
    $d_2 < \frac{1}{4m|\calC|^{\ell}}\cdot|S(L)|$
\end{claim}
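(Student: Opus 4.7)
The plan is to unfold the definitions of $d_2$ and $|S(L)|$, apply standard binomial coefficient bounds, and verify the resulting inequality using the hypotheses on $\tau$ and the choice of $\tau'$.

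First I would use the identity $\binom{n}{k'}\binom{k'}{\tau'} = \binom{n}{\tau'}\binom{n-\tau'}{k'-\tau'}$ (applied with $n = \binom{\ell}{k}$, $k=\gamma\tau$, $k'=\gamma\tau'$) to rewrite the target inequality $d_2 = 4\binom{k'd_1}{\tau'}\binom{n-\tau'}{k'-\tau'} < \binom{n}{k'}/(4m|\calC|^\ell)$ equivalently as
\[
\binom{n}{\tau'} > 16\cdot m\cdot |\calC|^\ell\cdot \binom{k'}{\tau'}\cdot\binom{k'd_1}{\tau'}.
\]
This cancels the common $\binom{n-\tau'}{k'-\tau'}$ factor and reduces the claim to comparing a single ``good'' binomial against a product of two ``bad'' ones.

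Next I would apply the standard estimates $\binom{a}{b} \leq (ea/b)^b$ and $\binom{a}{b} \geq (a/b)^b$ together with $k'/\tau'=\gamma$. This gives $\binom{k'}{\tau'}\binom{k'd_1}{\tau'} \leq (e^2\gamma^2 d_1)^{\tau'}$ and $\binom{n}{\tau'}\geq (n/\tau')^{\tau'}$, so it suffices to show
\[
\left(\frac{n}{e^2\gamma^2\tau' d_1}\right)^{\tau'} > 16\cdot m\cdot |\calC|^\ell.
\]
The ratio $n/d_1 = \binom{\ell}{\tau}/\binom{k}{\tau}^2$ I would bound via the telescoping identity $\binom{\ell}{\tau}/\binom{k}{\tau} = \prod_{i=1}^\tau(\ell-\tau+i)/(k-\tau+i)\geq (\ell/k)^\tau \geq (2e\gamma)^\tau$ (using that each factor is $\geq \ell/k$ since $\ell\geq k$, and that $\ell\geq 2e\gamma^2\tau = 2e\gamma\cdot k$), combined with $\binom{k}{\tau}\leq (e\gamma)^\tau$, to obtain $n/d_1\geq 2^\tau$ (which a Stirling-type refinement sharpens by a $\sqrt{\tau}$ factor).

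Finally, plugging in $\tau' = 2^{\tau-\lceil\log(2e\gamma^2)\rceil}$, which ensures $2e\gamma^2\tau' \leq 2^\tau$, the inequality reduces (after taking logarithms and using $\tau'\geq 2^\tau/(4e\gamma^2)$) to an estimate of the form $2^\tau\cdot\Omega(\log\tau)/\gamma^2 \geq \log m + \ell\log|\calC| + O(1)$. With $\ell$ of the order of $\gamma^2\tau$ (which is the regime in which the lemma is applied), the hypothesis $\tau\geq 8\log\gamma+2\log\log|\calC|+2\log\log m+16$ then guarantees that the $2^\tau$ factor dominates the right-hand side with room to spare.

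The main obstacle is the tightness of the estimate for $n/d_1$: the bare bound $n/d_1\geq 2^\tau$ only yields $n/(e^2\gamma^2\tau'd_1)\geq 2/e<1$, which by itself is too weak; closing this small gap requires either the Stirling-refined $\sqrt{\tau}$ factor in $n/d_1$, or the slightly tighter Stirling-aware estimate $\binom{k'}{\tau'}\approx (\gamma^\gamma/(\gamma-1)^{\gamma-1})^{\tau'}/\sqrt{\tau'}$ (which improves on $(e\gamma)^{\tau'}$), together with careful use of the ceiling in the definition of $\tau'$ to extract the necessary slack.
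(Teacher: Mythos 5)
Your decomposition is sound, and you have correctly isolated the weak step: splitting the ratio $\binom{k'}{\tau'}/\binom{n}{\tau'}$ into the two one-sided estimates $\binom{k'}{\tau'}\leq(ek'/\tau')^{\tau'}$ and $\binom{n}{\tau'}\geq(n/\tau')^{\tau'}$ introduces a spurious $e^{\tau'}$ factor and drops the base of your power from $2$ to $2/e<1$, which stalls the argument. The paper never incurs that loss because it never separates the ratio. It applies \cref{claim:approx} directly to $\binom{n-\tau'}{k'-\tau'}/\binom{n}{k'}$ --- the very quantity your subset-of-subset identity rewrites as $\binom{k'}{\tau'}/\binom{n}{\tau'}$ --- to get the telescoping bound
\[
\frac{\binom{k'}{\tau'}}{\binom{n}{\tau'}}=\prod_{i=0}^{\tau'-1}\frac{k'-i}{n-i}\;<\;\left(\frac{k'}{n}\right)^{\tau'},
\]
with no extra $e^{\tau'}$. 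Making only that substitution turns your chain into $(n/(e\gamma^2\tau'd_1))^{\tau'}$, and then $n/d_1\geq 2^{\tau}$ together with $2e\gamma^2\tau'\leq 2^{\tau}$ makes the base exactly $2$ --- which is precisely the paper's intermediate step $4e^{\tau'}(\tau'\gamma^2/2^{\tau})^{\tau'}\leq 4(1/2)^{\tau'}$, followed by \cref{claim:helper}.

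Your Stirling escape route is not wrong, merely unfinished. The proof of \cref{claim:helper} actually leaves about $\tau/4$ bits of slack on the $\log\log$ scale, so a refinement of the form $n/d_1\geq\Omega(\sqrt{\tau})\cdot 2^{\tau}$ can indeed pay for the extra $e$; and the boundary case $\gamma=1$, where $\binom{k}{\tau}=1$ and the $\sqrt{\tau}$ refinement is vacuous, is rescued by $n/d_1=\binom{\ell}{\tau}\geq(2e)^{\tau}$, which supplies far more slack than needed. But as written, the proposal stops at ``closing this small gap requires\ldots'' without carrying any of this through, and the sketched route needs Stirling with explicit error terms, a case split on $\gamma=1$, and explicit use of the hidden slack in \cref{claim:helper}. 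The single application of \cref{claim:approx} to the right ratio is the intended, and substantially shorter, fix.
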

\begin{proof}
\begin{align*}
    \frac{d_2}{|S(L)|} &= 4 \frac{\binom{k' \cdot d_1}{\tau'} \binom{\binom{\ell}{k} - \tau'}{k' - \tau'}}{\binom{\binom{\ell}{k}}{k'}} 
    \\
    &< 4 \frac{\binom{k' \cdot d_1}{\tau'}\binom{\binom{\ell}{k}}{k'} }{\binom{\binom{\ell}{k}}{k'}} \cdot \left( \frac{k'}{\binom{\ell}{k}} \right)^{\tau'} \quad \text{(\cref{claim:approx})} 
    \\
    &\leq 4e^{\tau'} \left(\frac{(k')^2 \binom{k}{\tau}}{\tau'} \right)^{\tau'} \left( \frac{\binom{\ell - \tau}{k - \tau}}{\binom{\ell}{k}} \right)^{\tau'} 
    \\
    &\leq 4e^{\tau'} \left(\frac{(k')^2 e^{\tau} k^{\tau}}{\tau' \cdot \tau^\tau} \right)^{\tau'} \left( \frac{k}{\ell} \right)^{\tau \cdot \tau'} \quad \text{(\cref{claim:approx})} 
    \\
    &\leq 4e^{\tau'} \left(\frac{(\tau') \gamma^2 }{2^\tau} \right)^{\tau'}
    \\
    &\leq 4 \left(\frac{1}{2} \right)^{\tau'}  \\
    &< \frac{1}{4m|\calC|^\ell} \quad \text{(\cref{claim:helper})} 
\end{align*}
In the second last step we used $\tau' \leq \frac{2^\tau}{2e \gamma^2}$.
\end{proof}

\begin{claim}
    \label{claim:approx}
    For integers $L > K > x > 0$ we have 
    \begin{align*}
      \binom{L - x}{K - x} < \left( \frac{K}{L} \right)^x \cdot \binom{L}{K}
    \end{align*} 
\end{claim}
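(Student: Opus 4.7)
The plan is to reduce the claim to a termwise comparison of falling factorials and then bound each factor individually by $K/L$. I will first rewrite the ratio of binomial coefficients in telescoping form, then exploit the hypothesis $L>K$ to compare each factor against $K/L$.

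First I would expand the two binomials using $\binom{L}{K}=L!/(K!(L-K)!)$ and $\binom{L-x}{K-x}=(L-x)!/((K-x)!(L-K)!)$. The common factor $(L-K)!$ cancels, yielding
\[
\frac{\binom{L-x}{K-x}}{\binom{L}{K}} \;=\; \frac{(L-x)!\,K!}{(K-x)!\,L!} \;=\; \prod_{i=0}^{x-1}\frac{K-i}{L-i}.
\]
Next, I would show that for each $i\in\{1,\dots,x-1\}$ we have $\tfrac{K-i}{L-i}<\tfrac{K}{L}$ by cross-multiplication: since $L>K$, $(K-i)L - K(L-i) = i(K-L) < 0$. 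For $i=0$ equality holds. Multiplying these $x$ factors gives
\[
\prod_{i=0}^{x-1}\frac{K-i}{L-i} \;\leq\; \left(\frac{K}{L}\right)^{x},
\]
with strict inequality as soon as $x\geq 2$. Multiplying both sides by $\binom{L}{K}$ yields the desired bound. (For $x=1$ one in fact obtains equality; this is harmless, since in the only applications inside Claim~\ref{claim:BoundDegreed2} the exponent is $\tau$ or $\tau'$, both much larger than $1$, so the strict inequality is comfortably satisfied.)

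There is no substantive obstacle: the argument is a routine manipulation of falling factorials, and the only delicate point is the direction of the termwise inequality, which follows immediately from $L>K$. I would not need any additional tools or lemmas.
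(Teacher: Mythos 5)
Your proof is correct, and it is the standard telescoping argument one would expect. The paper does not give its own proof of this claim but simply cites Claim~2 of \cite{MausT20}, so there is nothing to compare against in the paper itself; the falling-factorial decomposition
\[
\frac{\binom{L-x}{K-x}}{\binom{L}{K}}=\prod_{i=0}^{x-1}\frac{K-i}{L-i}
\]
followed by the termwise bound $\frac{K-i}{L-i}\le\frac{K}{L}$ (strict for $i\ge 1$, using $L>K$) is exactly the right route. Your observation about the $x=1$ edge case is also correct: the identity $\binom{L-1}{K-1}=\frac{K}{L}\binom{L}{K}$ gives equality there, so the strict inequality in the claim as stated is off by one for $x=1$; this is a harmless slip in the paper's statement (inherited from the source), since in \cref{claim:BoundDegreed2} the exponents $\tau$ and $\tau'$ are well above $1$, and in any case the weak inequality $\le$ suffices for all the estimates that follow.
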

\begin{proof}
    The proof is given in Claim 2 in \cite{MausT20}.
\end{proof}

\begin{claim}
    \label{claim:helper}
    $ 2^{\tau'} > 16 \cdot m |\calC|^\ell$
\end{claim}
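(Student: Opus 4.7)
My plan is to reduce the claim to an arithmetic comparison and then verify it using the hypothesis on $\tau$ from Lemma~\ref{lemma:InterfaceToYannic}. Taking $\log_{2}$ of both sides, $2^{\tau'} > 16\,m\,|\calC|^{\ell}$ is equivalent to
\[
\tau' \;>\; 4 + \log m + \ell\,\log|\calC|.
\]
Since $\tau' = 2^{\tau - \lceil\log(2e\gamma^{2})\rceil}$, it suffices to establish
\[
2^{\tau - \lceil\log(2e\gamma^{2})\rceil} \;>\; 4 + \log m + \ell\,\log|\calC|.
\]

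The first step is to lower bound the exponent on the left. Using $2e < 2^{3}$, we have $\lceil\log(2e\gamma^{2})\rceil \le 2\log\gamma + 3$. Combined with the standing hypothesis $\tau \ge 8\log\gamma + 2\log\log|\calC| + 2\log\log m + 16$, this gives
\[
\tau - \lceil\log(2e\gamma^{2})\rceil \;\ge\; 6\log\gamma + 2\log\log|\calC| + 2\log\log m + 13,
\]
and exponentiating yields $\tau' \ge 2^{13}\,\gamma^{6}\,(\log|\calC|)^{2}\,(\log m)^{2}$.

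The second step is to show this lower bound exceeds $4 + \log m + \ell\log|\calC|$. The additive term $4 + \log m$ is absorbed by the $(\log m)^{2}$ factor together with the $2^{13}$ slack (small-$m$ corner cases handled by the constant). For the dominant $\ell\log|\calC|$ term, I would invoke the regime in which Lemma~\ref{lemma:InterfaceToYannic} is actually applied: within Lemma~\ref{lemma:SolvingNewP2}, the list length $\ell$ in use is $\ell \le \ell_{h}' = \alpha\,\gamma^{2}\,\tau$, so $\ell\log|\calC| = O(\gamma^{2}\,\tau\,\log|\calC|)$. Since the hypothesis on $\tau$ forces $\tau = O(\log\gamma + \log\log|\calC| + \log\log m)$, the factor $\gamma^{6}(\log|\calC|)^{2}(\log m)^{2}\cdot 2^{13}$ comfortably dominates $\gamma^{2}\,\tau\,\log|\calC|$, completing the comparison.

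The main obstacle will be a careful constant chase to verify the numerical inequality uniformly across all parameter regimes (particularly for small $\gamma$ or $m$, where one must ensure that the additive $16$ in the hypothesis on $\tau$ still provides enough slack after subtracting $\lceil\log(2e\gamma^{2})\rceil$), and to make the bound on $\ell$ coming from the intended application sufficiently explicit so that the final comparison is a clean calculation.
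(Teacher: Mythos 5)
Your route is viable but structurally different from the paper's. You take a single $\log$, reducing the claim to $\tau' > 4 + \log m + \ell\log|\calC|$; the paper takes $\log\log$, which collapses the term $\ell\log|\calC|$ into $\log\ell + \log\log|\calC|$ and turns the comparison into a purely additive one at the same scale as the hypothesis on $\tau$, with the residual $\log\tau$ that appears inside $\log\ell$ absorbed by the observation $\log\tau \leq \tau/4$ (valid because $\tau \geq 16$). Your lower bound $\tau' \geq 2^{13}\gamma^6(\log|\calC|)^2(\log m)^2$ is computed correctly.

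There are, however, two genuine gaps. First, you assert that the hypothesis on $\tau$ ``forces $\tau = O(\log\gamma + \log\log|\calC| + \log\log m)$.'' This is backwards: the hypothesis $\tau \geq 8\log\gamma + 2\log\log|\calC| + 2\log\log m + 16$ is a \emph{lower} bound, and the lemma allows $\tau$ to be arbitrarily large. Substituting the minimal $\tau$ into the left-hand side to get $\tau' \geq 2^{13}\gamma^6(\log|\calC|)^2(\log m)^2$ while simultaneously treating $\tau$ as if it were upper-bounded on the right-hand side is circular. To repair this you must either observe explicitly that $\tau'$ grows exponentially in $\tau$ while $\ell\log|\calC| = O(\gamma^2\tau\log|\calC|)$ grows only linearly, so the minimal $\tau$ is the worst case, or (as the paper does) keep $\tau$ symbolic and soak up its appearance on the right using $\log\tau \leq \tau/4$. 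Second, you replace the explicit constant $2e$ from the lemma's constraint on $\ell$ (the paper's computation uses $\log\ell \leq \log(2e) + 2\log\gamma + \log\tau$, i.e.\ $\ell \leq 2e\gamma^2\tau$) by the unspecified ``sufficiently large'' constant $\alpha$ in $\ell \leq \alpha\gamma^2\tau$. The comparison you sketch then requires $\alpha$ to be bounded above (roughly by $2^9$ at the minimal values of $\gamma$, $|\calC|$, $m$), which you have no way to guarantee, since $\alpha$ is chosen large to make other parts of the argument in the paper go through. Sticking with the explicit $2e$ pinned down by the lemma's hypothesis, as the paper does, avoids introducing this uncontrolled dependence.
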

\begin{proof}
    By the definition of $\tau$ we have $\tau \geq 16$ (we thus will use $\tau/4 \geq \log \tau$) and $\tau/2 \geq \log \log m + \log \log |\calC| + 4 \log \gamma + 8$.
    
    \begin{align*}
        \log \log (16 \cdot m |\calC|^\ell) &\leq \log \log m + \log \ell + \log \log |\mathcal{C}| + 2 \\
        &\leq \log \log m + \log (2e) + 2 \log \gamma + \log \tau + \log \log |\calC| + 2 \\
        &\leq \log \log m + \log (2e) + 2 \log \gamma + \tau/4  + \log \log |\calC| + 2 \\
        &\leq \tau/4 + \tau/2 - 2 \log \gamma - 6 + \log (2e) \\
        &< \tau - \log (\gamma^2) - \log (4e) \\
        &= \tau - \log (4e \gamma^2)
    \end{align*}
    Hence, $16 \cdot m |\calC|^\ell < 2^{2^{\tau - \log (4e\gamma^2)}} \leq 2^{\tau'}$.
\end{proof}

\section{Complexity of Internal Computations at Nodes}
\label{sec:InternalComplexity}
In the distributed setting (e.g., in \LOCAL and \CONGEST), one typically assumes that internal computations at the nodes are for free and we therefore do not analyze the computational complexity of such internal computations when running a distributed algorithm. In practice, however, it is certainly relevant that all computations are reasonably efficient. We therefore next briefly discuss the cost of internal computations required by our algorithms. In the context of this paper, the most time-intensive (internal) algorithms are used for computing the $0$-round solutions for the problem $P_2$ (in the algorithm of \Cref{thm:def_local_list_coloring}). In the following, we focus on such a computation for a given $\gamma$-class. A simple greedy approach to solve $P_2$ is stated in Appendix $A$ of \cite{MausT20}. Consider some node $v$ and let $S = \{K_1, K_2, ... \}$ be the (ordered) set of all possible lists of color lists that can be constructed from the color space $\mathcal{C}$ and the initial list $L_v$ of node $v$ that is of size $|L_v|=\alpha\gamma^2\tau$. To compute a conflict-free list $S' \subseteq S$ regarding the conflict relation $\Psi(\tau', \tau)$, we add $K_1$ to $S'$ and delete all $K$ from $S$ with $(K_1, K) \in \Psi(\tau', \tau)$ or $(K, K_1) \in \Psi(\tau', \tau)$ and we repeat this process until $S$ is empty. The complexity of this greedy procedure is at most $O(|S|^2)$. We thus have

\begin{align*}
    |S|^2 = \left( \binom{|\mathcal{C}|}{|L_v|} \stackedbinom{|L_v|}{\gamma\tau}{\gamma \cdot \tau'} \right)^2    
     &\leq \left( |\mathcal{C}|^{\alpha \gamma^2 \tau} \cdot \binom{\binom{\alpha \gamma^2 \tau}{\gamma \cdot \tau}}{\gamma \cdot \tau'}\right)^2 \\
    &= e^{O(\gamma^2 \tau \log |\mathcal{C}|)} \cdot (e \alpha \gamma)^{O(\gamma^2 \tau \tau')} \\
    &= e^{O(\gamma^2 \log \gamma \log |\mathcal{C}| + \gamma^3 \log^2 \gamma)}.
\end{align*}

Note that even though there is no dependency on $n$, this complexity is only efficient if $\gamma$ is sufficiently small. Note however that we can use the color space reduction technique of \Cref{thm:spacereduction} to reduce $\gamma$ in each application of \Cref{thm:def_local_list_coloring} and to therefore also reduce the internal computational complexity (at the cost of a somewhat weaker list defective coloring algorithm). To illustrate this, we focus on our $(\Delta+1)$-coloring algorithm for the \CONGEST model (\cref{thm:CONGEST}). When applying the color space reduction of \Cref{thm:spacereduction} with some parameter $p$, the size of the color space and thus also the maximum list size in each application of \Cref{thm:def_local_list_coloring} is at most $p$. If we choose $p = \Delta^{\varepsilon}$ for some $\varepsilon > 0$, we obtain $|\mathcal{C}| = O(\Delta^\varepsilon)$ and $\gamma = O(\Delta^{\epsilon/2})$ (recall that the color lists in \Cref{thm:def_local_list_coloring} are of size $\Omega(\gamma^2)$). Combining this with the fact that $\Delta$ is assumed to be at most $\log^2 n$ (see proof of \cref{thm:CONGEST}) we get $\gamma = O(\log^\varepsilon n)$. The upper bound on $|S|^2$ is thereby simplified to $e^{O(\log^{3 \varepsilon} n \cdot \log^2(\log^\varepsilon n))}$.
Hence, choosing $\varepsilon \leq 1/6$, we get $|S|^2 = o(n)$ and thus a sublinear (in $n$) complexity for all internal computations at the nodes. Consequently, at the cost of some additional $\log\Delta$ factors in the running time, our $\sqrt{\Delta}\cdot \polylog\Delta+O(\log^* n)$-time \CONGEST algorithm for the $(\Delta+1)$-coloring problem also becomes computationally efficient. 

\end{document}